\newenvironment{mdframedwithfoot}
{   
    \savenotes
    \begin{mdframed}
        \stepcounter{footnote}
        
    }
    {
    \end{mdframed}
    \spewnotes
}
\begin{document}
    
\IEEEoverridecommandlockouts % to enable \thanks, \IEEEpubid
\IEEEpubid{\makebox[\columnwidth]{979-8-3503-3587-3/23/\$31.00~ \copyright2023 IEEE \hfill} \hspace{\columnsep}\makebox[\columnwidth]{ }}
    
%
% paper title
% Titles are generally capitalized except for words such as a, an, and, as,
% at, but, by, for, in, nor, of, on, or, the, to and up, which are usually
% not capitalized unless they are the first or last word of the title.
% Linebreaks \\ can be used within to get better formatting as desired.
% Do not put math or special symbols in the title.
\title{On Certificates, Expected Runtimes, and Termination in Probabilistic Pushdown Automata
\thanks{This work has been supported by the DFG RTG 2236 (UnRAVeL), the EU’s Horizon 2020 research and innovation programme under the Marie Skłodowska-Curie grant agreement No 101008233 (MISSION), and the ERC Advanced Grant No 787914 (FRAPPANT).}
}
% NOTE: thanks must go here otherwise it adds blank page

% author names and affiliations
% use a multiple column layout for up to three different
% affiliations
\author{\IEEEauthorblockN{Tobias Winkler and Joost-Pieter Katoen}
\IEEEauthorblockA{Software Modeling and Verification Group\\
RWTH Aachen University, Aachen, Germany\\
Email: \{tobias.winkler,katoen\}@cs.rwth-aachen.de}
%\and
%\IEEEauthorblockN{Joost-Pieter Katoen}
%\IEEEauthorblockA{Department of Computer Science\\
%    RWTH Aachen University\\
%    Aachen, Germany\\
%    Email: katoen@cs.rwth-aachen.de}
}
%\author{\IEEEauthorblockN{Anonymous authors}
%    \IEEEauthorblockA{Anonymous institute}
%}

% conference papers do not typically use \thanks and this command
% is locked out in conference mode. If really needed, such as for
% the acknowledgment of grants, issue a \IEEEoverridecommandlockouts
% after \documentclass

% for over three affiliations, or if they all won't fit within the width
% of the page, use this alternative format:
% 
%\author{\IEEEauthorblockN{Michael Shell\IEEEauthorrefmark{1},
%Homer Simpson\IEEEauthorrefmark{2},
%James Kirk\IEEEauthorrefmark{3}, 
%Montgomery Scott\IEEEauthorrefmark{3} and
%Eldon Tyrell\IEEEauthorrefmark{4}}
%\IEEEauthorblockA{\IEEEauthorrefmark{1}School of Electrical and Computer Engineering\\
%Georgia Institute of Technology,
%Atlanta, Georgia 30332--0250\\ Email: see http://www.michaelshell.org/contact.html}
%\IEEEauthorblockA{\IEEEauthorrefmark{2}Twentieth Century Fox, Springfield, USA\\
%Email: homer@thesimpsons.com}
%\IEEEauthorblockA{\IEEEauthorrefmark{3}Starfleet Academy, San Francisco, California 96678-2391\\
%Telephone: (800) 555--1212, Fax: (888) 555--1212}
%\IEEEauthorblockA{\IEEEauthorrefmark{4}Tyrell Inc., 123 Replicant Street, Los Angeles, California 90210--4321}}

% use for special paper notices
%\IEEEspecialpapernotice{(Invited Paper)}

% make the title area
\maketitle

%force page numers
%\thispagestyle{plain}
%\pagestyle{plain}

% As a general rule, do not put math, special symbols or citations
% in the abstract
\begin{abstract}
Probabilistic pushdown automata (pPDA) are a natural operational model for a variety of recursive discrete stochastic processes.
In this paper, we study \emph{certificates} -- succinct and easily verifiable proofs -- for upper and lower bounds on various quantitative properties of a given pPDA.
We reveal an intimate, yet surprisingly simple connection between the existence of such certificates and the expected time to termination of the pPDA at hand.
This is established by showing that certain intrinsic properties, like the spectral radius of the Jacobian of the pPDA’s underlying polynomial equation system, are directly related to expected runtimes.
As a consequence, we obtain that there always exist easy-to-check proofs for \emph{positive almost-sure termination}: does a pPDA terminate in finite expected time?
\end{abstract}

% no keywords

% For peer review papers, you can put extra information on the cover
% page as needed:
% \ifCLASSOPTIONpeerreview
% \begin{center} \bfseries EDICS Category: 3-BBND \end{center}
% \fi
%
% For peerreview papers, this IEEEtran command inserts a page break and
% creates the second title. It will be ignored for other modes.
%\IEEEpeerreviewmaketitle

\newtheorem{example}{Example}
\newtheorem{theorem}{Theorem}
\newtheorem{lemma}{Lemma}
\newtheorem{corollary}{Corollary}
\newtheorem{proposition}{Proposition}
\newtheorem{definition}{Definition}
\newtheorem{remark}{Remark}
\newtheorem{contribution}{Contribution}
\newenvironment{proof}{\begin{IEEEproof}}{\end{IEEEproof}}

% custom macros
\renewcommand{\vec}{\boldsymbol}

\newcommand{\eps}{\varepsilon}

\newcommand{\allsmaller}{\prec}
\newcommand{\allgreater}{\succ}

\newcommand{\nats}{\mathbb{N}}
\newcommand{\reals}{\mathbb{R}}
\newcommand{\nonnegreals}{\reals_{\geq 0}}
\newcommand{\exnonnegreals}{\overline{\reals}_{\geq 0}}
\newcommand{\rats}{\mathbb{Q}}
\newcommand{\nonnegrats}{\rats_{\geq 0}}

\renewcommand{\P}{\mathit{PTIME}}
\newcommand{\NP}{NP}
\newcommand{\PSPACE}{\mathit{PSPACE}}

\newcommand{\norm}[1]{\|{#1}\|}
\newcommand{\maxnorm}[1]{\norm{#1}_{\infty}}

\newcommand{\idmat}{I} % identity matrix
\newcommand{\genmat}{A} % generic matrix
\newcommand{\specrad}[1]{\rho({#1})} % spectral radius

\newcommand{\polyring}[2]{#1[#2]}

\newcommand{\gfvar}{\mathsf{z}}
\newcommand{\gfring}[2]{#1[[#2]]}
\newcommand{\gf}{g}
\newcommand{\gfdiff}[1]{\partial{#1}}
\newcommand{\pdiff}[2]{\partial_{#1}{#2}} % partial derivative in variable #1
\newcommand{\gfsubs}[2]{{#1}\big\rvert_{\gfvar=#2}}
\newcommand{\gfmc}[2]{g^{#1}_{#2}}

\newcommand{\sring}{\mathcal{S}}
\newcommand{\scarrier}{A}
\newcommand{\selem}{a}
\newcommand{\selemm}{b}
\newcommand{\selemmm}{c}
\newcommand{\splus}{\oplus}
\newcommand{\ssum}{\bigoplus}
\newcommand{\stimes}{\otimes}
\newcommand{\szero}{0}
\newcommand{\sone}{1}
\newcommand{\sringinit}{(\scarrier, \splus, \stimes, \szero, \sone)}
\newcommand{\natord}{\leq}

\newcommand{\sringexnonnegreals}{\mathcal{R}_{\geq 0}^{\infty}}
\newcommand{\sringgf}[2]{#1[[#2]]}

\newcommand{\autom}{\mathfrak{A}}
\newcommand{\amat}{M} % transition matrix of weighted automaton
\newcommand{\astates}{S}
\newcommand{\afinal}{f}
\newcommand{\behav}[2]{||#1;#2||}

\newcommand{\mc}{\mathfrak{M}}
\newcommand{\mcstates}{\astates}
\newcommand{\mcmat}{\amat}
\newcommand{\termmc}[2]{#1_{#2}}

\newcommand{\tprobmc}[2]{\mathbb{P}^{#1}(\lozenge #2)}
\newcommand{\rvrt}[1]{\mathcal{R}_{#1}} % random variable for runtime wrt to target set #1
\newcommand{\rtmc}[2]{\mathbb{E}^{#1}(\rvrt{#2})} % expected runtime in a DTMC #1 wrt to taget set #2
\newcommand{\rtmcuncond}[2]{\mathbb{E}^{#1}(\rvrt{#2} \land \lozenge #2)}
\newcommand{\rtmccond}[2]{\mathbb{E}^{#1}(\rvrt{#2} \mid \lozenge #2)}

\newcommand{\pda}{\Delta}
\newcommand{\pdastates}{Q}
\newcommand{\abstack}{\Gamma}
\newcommand{\pdatrans}{\delta}
\newcommand{\pdainit}{(\pdastates,\abstack,\pdatrans)}
\newcommand{\pdafinal}{\afinal}
\newcommand{\trans}[3]{#1 \overset{#2}{\to} #3}
\newcommand{\transin}[4]{#1 \overset{#2}{\to_{#4}} #3}
\newcommand{\semautom}[1]{\autom_{#1}}
\newcommand{\semmat}[1]{T_{#1}}

% triples for return probs and return prob variables
\newcommand{\triple}[3]{[#1#2{\mid}#3]}
\newcommand{\vtriple}[3]{\langle #1#2{\mid}#3 \rangle}

% triples for expected runtimes and the corresponding variables
\newcommand{\etriple}[3]{\mathsf{E}\triple{#1}{#2}{#3}}
\newcommand{\vetriple}[3]{\mathsf{E}\vtriple{#1}{#2}{#3}}

\newcommand{\ert}[2]{\mathsf{ert}[#1#2]}
\renewcommand{\vert}[2]{\mathsf{ert}\langle#1#2\rangle}
\newcommand{\ertmat}[1]{M_{#1}}

\newcommand{\sys}[1]{\vec{#1}} % polynomial system named #1
\newsavebox{\foobox}
\newcommand{\slantbox}[2][.0]
{%
    \mbox
    {%
        \sbox{\foobox}{#2}%
        \hskip\wd\foobox
        \pdfsave
        \pdfsetmatrix{1 0 #1 1}%
        \llap{\usebox{\foobox}}%
        \pdfrestore
    }%
}
\newcommand{\syscl}[1]{\slantbox{$\mathbbm{f}$}} % cleaned polynomial system named #1
\newcommand{\lfp}[1]{\mu #1} % lfp of polynomial system #1
\newcommand{\gfp}[1]{\nu #1} % gfp of polynomial system #1
\newcommand{\pdasys}[1]{\sys{f}_{#1}}
\newcommand{\pdasyscl}[1]{\mathbbm{f}_{#1}}
\newcommand{\jac}[1]{#1'} % Jacobi matrix of polynomial system #1
\newcommand{\jacat}[2]{\jac{#1}(#2)} % Jacobi matrix of polynomial system #1 evaluated at #2

\newcommand{\todo}[1]{\textcolor{red}{#1}}

\section{Introduction}
\label{sec:intro}

Probabilistic pushdown automata (pPDA)~\cite{DBLP:conf/lics/EsparzaKM04} are known as a universal model for various kinds of recursive discrete stochastic processes.
This includes stochastic context-free grammars, probabilistic programs with procedures and recursion~\cite{DBLP:conf/lics/OlmedoKKM16}, as well as many random walks on $\mathbb{N}$.
pPDA are equivalent to \emph{recursive Markov chains}~\cite{DBLP:conf/stacs/EtessamiY05,DBLP:journals/jacm/EtessamiY09}.

The pPDA considered in this paper are similar to standard pushdown automata.
The difference is that a pPDA does not read symbols from an input word, but rather determines its next configuration stochastically based on fixed distributions over the states' outgoing transitions.
A pPDA can thus be seen as a finite-state discrete-time Markov chain with an \emph{unbounded} stack.
See \Cref{fig:simple} for an example.

A plethora of quantitative verification problems for pPDA has been studied, including termination probabilities, expectations and tail bounds of key random variables such as running time~\cite{DBLP:conf/icalp/BrazdilKKV11,DBLP:journals/jcss/BrazdilKKV15}, rewards/costs~\cite{DBLP:conf/lics/EsparzaKM05}, memory consumption~\cite{DBLP:conf/fsttcs/BrazdilEK09},
satisfaction probabilities of linear-time~\cite{DBLP:conf/lics/EsparzaKM04,DBLP:conf/stacs/BrazdilKS05,DBLP:conf/qest/YannakakisE05,DBLP:journals/lmcs/KuceraEM06,DBLP:conf/fossacs/WinklerGK22} and branching-time~\cite{DBLP:conf/stacs/BrazdilKS05,DBLP:journals/lmcs/KuceraEM06,DBLP:journals/jcss/BrazdilBFK14} logics,
as well as probabilistic simulation~\cite{DBLP:journals/iandc/Huang0K19} and bisimulation~\cite{DBLP:journals/acta/BrazdilKS08}
(see~\cite{DBLP:journals/fmsd/BrazdilEKK13} for a detailed overview).

A central issue in most of the above problems is to compute a certain set of probabilities, denoted $\triple p Z q$ for states $p,q$ and stack symbol $Z$.
$\triple p Z q$ stands for the probability that the pPDA eventually reaches $q$ with empty stack assuming it is initiated in $p$ with stack content $Z$~\cite{DBLP:conf/lics/EsparzaKM04}.
It is well-known that the numbers $\triple p Z q$ are characterized as a \emph{least fixed point} solution $\lfp\sys f$ of a \emph{positive polynomial system} (PPS) $\sys f$~\cite{DBLP:reference/hfl/Kuich97,DBLP:conf/lics/EsparzaKM04,DBLP:conf/stacs/EtessamiY05}; and it is thus not surprising that in general, $\triple p Z q$ may be irrational and cannot easily be represented in a closed form using radical expressions~\cite{DBLP:journals/jacm/EtessamiY09}.
On the other hand, many problems mentioned in the previous paragraph remain \emph{decidable} by encoding $\triple p Z q$ in the existential first-order theory of real arithmetic, which is in $\PSPACE$.

To cope with pPDA in practice, and to obtain better bounds than $\PSPACE$, lots of research has focused on computing the probabilities $\triple p Z q$ \emph{approximately}.
The resulting algorithms employ variants of \emph{Newton iteration} for the systems generated by pPDA and subclasses thereof~\cite{DBLP:conf/stacs/EtessamiY05,DBLP:journals/jacm/EtessamiY09,DBLP:journals/jacm/StewartEY15},
as well as for arbitrary polynomial systems with non-negative coefficients~\cite{DBLP:conf/stoc/KieferLE07,DBLP:conf/stacs/EsparzaKL08,DBLP:journals/siamcomp/EsparzaKL10,tacas}.

\begin{figure}[t]
    \centering
    \begin{tikzpicture}[thick,initial text=$Z$,node distance=20mm]
    \node[state,initial] (p) {$p$};
    \node[state,right=of p] (q) {$q$};
    \draw[->] (p) -- node[auto] {$(\nicefrac 1 4, Z, \eps)$} (q);
    \draw[->] (p) edge[loop above] node[left] {$(\nicefrac 1 4, Z, ZZ)$} (p);
    \draw[->] (p) edge[loop below] node[left] {$(\nicefrac 1 2, Z, \eps)$} (p);
    \draw[->] (q) edge[loop right] node[auto] {$(1, Z, \eps)$} (q);
    \end{tikzpicture}
    \caption{An example probablistic pushdown automaton (pPDA)~\cite{tacas} with states $\pdastates = \{p,q\}$ and stack alphabet $\abstack = \{Z\}$. A transition $(a,Z,\alpha)$ means that with probability $a$, $Z$ is popped from the stack and $\alpha$ is pushed.}
    \label{fig:simple}
\end{figure}

\subsubsection*{Our Contributions}

In this paper, we are concerned with the following fundamental question:
Does there exist a \emph{simple proof} -- also called \emph{certificate} -- that a claimed bound on a given quantitative pPDA property indeed holds?
By \emph{simple} we mean that the proof should be machine-checkable by a verifier with limited resources; in particular, it should be checkable in $\P$.
Moreover, the check should be conceptually simple such that the verifier is straightforward to implement, and ideally amenable to formal verification.
We primarily restrict attention to bounds on the probabilities $\triple p Z q$ as those are at the heart of solving many other problems, see above.

Our motivation for studying certificates is twofold.
First, certificates can help increasing the trust one can put in the outputs of a (possibly very complex) analysis engine or probabilistic model checker.
Indeed, a model checker that provides certificates for its results along with a formally verified certificate checker is as trustworthy as a fully formally verified model checker -- the latter is usually not in reach for tools relevant in practice.
This is exactly the idea of \emph{certifying algorithms}~\cite{DBLP:journals/csr/McConnellMNS11}.
In fact, the authors of \cite{DBLP:journals/csr/McConnellMNS11} go one step further and argue that
\begin{quote}
    \enquote{
        [...] \emph{certifying algorithms are much superior to non-certifying algorithms and} [...] \emph{for complex algorithmic tasks only certifying algorithms are satisfactory.}
    }
\end{quote}

Our second, more theoretical motivation is that the study of certificates may of course also spawn new algorithms for pPDA, because an algorithm for finding a certificate (or proving absence thereof) for a given hypothesis is in particular an algorithm deciding that hypothesis.

To the best of our knowledge, certificates for pPDA have only been considered explicitly in the very recent paper~\cite{tacas}.
There, the basic idea is to exploit the fairly general rule
\begin{align}
    \label{eq:inductionintro}
    \sys f(\vec u) \leq \vec u
    ~\implies~
    \lfp\sys f \leq \vec u
\end{align}
which is true for all \emph{monotonic} functions $\sys f \colon L \to L$ with least fixed point $\lfp\sys f$, and $(L, \leq)$ a complete lattice.
\eqref{eq:inductionintro} follows from the Knaster-Tarski fixed point theorem and is in particular applicable to the PPS $\sys f$ arising from pPDA where the condition $\sys f(\vec u) \leq \vec u$ is reasonably simply to check.
Applying \eqref{eq:inductionintro} in practice, however, is not trivial because it is not clear if suitable, effectively representable (i.e., rational-valued) $\vec u$ even exist.
The key contributions of this paper are to extend the ideas of~\cite{tacas} in the following ways (also see \Cref{fig:overview}):

\begin{mdframedwithfoot}
\begin{contribution}
    \label{contrib:1}
    We characterize the existence of \emph{rational}-valued certificates for arbitrarily tight \emph{upper} bounds on $\triple p Z q$ in terms of the \emph{expected runtime}\footnote{Also called running time or time to termination. \enquote{Expected runtime} is a term inspired by~\cite{DBLP:conf/lics/OlmedoKKM16}.} of the given pPDA $\pda$.
    In particular, we show that such certificates exist if the expected runtime \emph{conditioned on termination} is finite, a property called \emph{cPAST} in this paper.
    We also obtain similar though slightly more complex results for \emph{lower} bounds.
    Moreover, we show that the bit-complexity of the rational numbers occurring in the certificates can be bounded in terms of the expected runtime and the minimum non-zero probability $\triple p Z q$.
\end{contribution}
\end{mdframedwithfoot}

Our second contribution concerns the termination problem of pPDA: Does a given pPDA eventually reach the empty stack with probability one?
This is crucial for numerous applications, e.g. model checking temporal logics~\cite{DBLP:conf/fossacs/WinklerGK22}, or ensuring that a probabilistic program has a well-defined posterior distribution~\cite[Sec.~5.3]{DBLP:journals/corr/abs-1809-10756}.

\begin{mdframed}
\begin{contribution}
    We prove that \emph{positive almost-sure termination} (PAST; termination with probability 1 in finite expected time) can \emph{always} be certified (if it holds).
    This relies on Contribution~\ref{contrib:1} and a novel characterization of the expected runtimes in general pPDA.
    The main insight, which is of independent interest beyond the topic of certificates, is that \emph{PAST can be proved using only sufficiently tight \emph{upper} bounds} on the probabilities $\triple p Z q$.
    This allows deciding termination of a large class of pPDA (all but those that are AST and not PAST) by practically feasible, iterative approximation algorithms.
\end{contribution}
\end{mdframed}

\begin{figure*}[t]
    \centering
    \begin{adjustbox}{max width=\textwidth}
        {\renewcommand{\arraystretch}{0.1}
        \begin{tabular}{l l l l p{4cm}}
            \toprule
            \emph{What}? & \emph{Certificate data} & \emph{Verification condition(s)} & \makecell[l]{\emph{Certified property}} & \emph{Remarks} \\
            \midrule
            \makecell[l]{Upper bounds \\ \small \Cref{thm:certificates}} & $\vec u \in \nonnegrats^{\pdastates \times \abstack \times \pdastates}$ & $\sys f(\vec u) \leq \vec u$ & \makecell[l]{$\forall pZq \in \pdastates{\times}\abstack{\times}\pdastates \colon$ \\ $\quad \triple{p}{Z}{q} \leq \vec u_{pZq}$}  & Certificate exists if $\pda$ is cPAST\\
            \midrule
            \makecell[l]{Lower bounds \\ \small \Cref{thm:lower}} & $\vec l, \vec u \in \nonnegrats^{\pdastates \times \abstack \times \pdastates}$ &  \makecell[l]{
                $\vec l \leq \sys f(\vec l)$, $\sys f(\vec u) \allsmaller \vec u$ \\ $\vec l \leq \vec u$}
            & \makecell[l]{$\forall pZq \in \pdastates{\times}\abstack{\times}\pdastates \colon$ \\ $\quad \vec l_{pZq} \leq \triple{p}{Z}{q} < \vec u_{pZq}$} & Certificate exists if $\pda$ is cPAST\\
            \midrule
            \makecell[l]{PAST \\ \small \Cref{thm:certpast}} & \makecell[l]{$\vec u \in \nonnegrats^{\pdastates \times \abstack \times \pdastates}$ \\ $\vec r \in \rats_{\geq 1}^{\pdastates\times\abstack}$} & \makecell[l]{$\sys f(\vec u) \leq \vec u$ \\ $\ertmat{\pda}(\vec u)\vec r + \vec 1 \leq \vec r$} & \makecell[l]{$\forall pZ \in \pdastates{\times}\abstack \colon$ \\ $\quad \ert{p}{Z} \leq \vec r_{pZ}$} & Certificate exists if and only if $\pda$ is PAST\\
            \bottomrule
        \end{tabular}}
    \end{adjustbox}
    \caption{$\pda = \pdainit$ is a pPDA with fundamental PPS $\sys f$. $\ertmat{\pda}(\vec u)$ is a matrix whose coefficients can be read off $\pda$ and $\vec u$. The first two rows apply more generally to least fixed point of \emph{arbitrary} PPS (existence is then guaranteed if the system $\sys f$ is non-singular).}
    \label{fig:overview}
\end{figure*}

\subsubsection*{Proof techniques}
As a primary technical vehicle, we prove that the spectral radius $\rho$ of the underlying system's Jacobi matrix $\sys f'(\lfp\sys f)$ is closely related to the (conditional) expected runtimes (assuming $\sys f$ has been previously cleaned-up by removing \enquote{unnecessary} variables).
This relation is expressed via the simple (tight) inequality $\rho \leq 1 - C^{-1}$, where $C$ is the maximal conditional expected runtime ocurring in the pPDA at hand ($C = \infty$ is possible).
We prove this using results about general PPS from~\cite{DBLP:journals/siamcomp/EsparzaKL10} as well as standard facts about non-negative matrices.
Another important ingredient are  \emph{weighted PDA} over semirings~\cite{DBLP:reference/hfl/Kuich97}.

\subsubsection*{Related work}
Expected runtimes and, more generally, expected rewards in pPDA are studied in~\cite{DBLP:conf/lics/EsparzaKM05,DBLP:conf/icalp/BrazdilKKV11,DBLP:journals/fmsd/BrazdilEKK13,DBLP:journals/jcss/BrazdilKKV15}.
However, these works focus mostly on expectations \emph{conditioned} on reaching the empty stack while being in a certain state.
In our paper we need such conditional expectations as well; but additionally, we also consider the \emph{un}conditional expected runtime of pPDA, which is crucial for proving PAST.
More specifically, our \Cref{thm:ertsys} has -- to the best of our knowledge -- not appeared in previous works.
A pPDA is PAST iff the equation system from \Cref{thm:ertsys} has a non-negative solution.
This is in contrast to e.g. the equation system from \cite[Ex.~3.3]{DBLP:conf/lics/EsparzaKM04} which has a solution iff the expected runtime \emph{conditioned on termination} is finite.
However, this can be the case even if the termination probability is $< 1$.
\Cref{thm:ertsys} is actually more in the spirit of \cite{DBLP:conf/icalp/EtessamiWY08}, \cite[Ch.~4]{DBLP:phd/ethos/Wojtczak09}, where \emph{1-exit recursive simple stochastic games} with positive rewards are considered, a model that corresponds to a \emph{stateless} variant of PDA controlled by two players and a probabilistic environment.
As we consider general pPDA with states, our result does not easily follow from theirs.

The expected \emph{memory consumption} of a pPDA~\cite{DBLP:conf/fsttcs/BrazdilEK09} is also related to its runtime as finite runtime implies finite memory usage.
Apart from this, there seems to be no other known connection between the two.

It is shown in~\cite{DBLP:journals/jacm/EtessamiY09} that the spectral radius of the Jacobi matrix $\sys f'(\vec 1)$ is related to the termination probability in the case of 1-exit RMC, which are equivalent to stateless pPDA.
However, expected runtimes are not considered in~\cite{DBLP:journals/jacm/EtessamiY09}.
In fact, the connection between expected runtimes and spectral properties of the Jacobi matrix seems to have been ignored entirely in previous works.

Apart from~\cite{tacas}, certificates for various other probabilistic and quantitative systems are studied, including Markov decision processes~\cite{ovi,DBLP:conf/tacas/FunkeJB20}, and timed automata~\cite{DBLP:conf/tacas/WimmerM20}.
A major technical difference to these works is that they do not need to consider non-linear arithmetic.

\cite{DBLP:conf/lics/OlmedoKKM16} introduces a deductive approach to prove bounds on, amongst other properties, expected runtimes of (possibly infinite-state) pPDA represented as probabilistic recursive programs.
Our certificates for PAST are loosely inspired by theirs, and we use some of their terminology.

There also exist generalizations of pPDA for incorporating non-determinism~\cite{DBLP:conf/icalp/EtessamiY05} or higher-order recursion~\cite{DBLP:journals/lmcs/KobayashiLG20}.
We leave certification of these models for future work.

The recent paper~\cite{DBLP:journals/pacmpl/ChistikovMS22} studies certificates for (non)-reachability in non-probabilistic PDA and related problems.

\subsubsection*{Paper outline}
The upcoming \Cref{sec:example} presents a small example in detail.
We then summarize necessary preliminary results and definitions in \Cref{sec:prelims}.
\Cref{sec:upper} and \Cref{sec:lower} present our results for certificates for upper and lower bounds, respectively.
\Cref{sec:cpast} establishes the fundamental connection between the Jacobi matrix and the expected runtimes.
Certificates for PAST are introduced in \Cref{sec:past}.
Bounds on the size of certificates are given in \Cref{sec:complexity}.
We conclude in \Cref{sec:conclusion}.

\section{Illustrative Example}
\label{sec:example}

Consider the pDPA $\pda$ in \Cref{fig:simple} on \cpageref{fig:simple}.
To simplify the presentation, we write $x_p = \triple p Z p$ and $x_q = \triple p Z q$, i.e., $x_p$ and $x_q$ are the probabilities that $\pda$ terminates in state $p$ and $q$, respectively, assuming $\pda$ is initiated in $p$ with stack content $Z$ (such \emph{configurations} are denoted $pZ$ in the following).
As mentioned earlier, $x_p$, $x_q$ are the component-wise least solution of a polynomial equation system, called \emph{fundamental system} of $\pda$ in this paper (see \Cref{thm:pdafundamentalsys} on page~\pageref{thm:pdafundamentalsys} for the general equations).
In our example:
\begin{align}
    x_p &~=~ \frac 1 4 x_p^2 + \frac 1 2 \label{eq:simple1} \\
    x_q &~=~ \frac 1 4 x_p x_q + \frac 1 4 x_q + \frac 1 4 \label{eq:simple2}
\end{align}
The exact $x_p$ and $x_q$ are the \emph{irrational} values $x_p = 2 - \sqrt 2 \approx 0.59$, $x_q = \sqrt 2 - 1 \approx 0.41$.
We stress that in general, it is impossible to express these numbers with radicals (square roots, cubic roots, etc)~\cite{DBLP:journals/jacm/EtessamiY09} -- in the example this only works because the polynomials have small degree.
Therefore one is usually interested in proving (rational) bounds on these probabilities.

\subsubsection*{Certificates for upper bounds (\Cref{sec:upper})}
To prove that e.g. $u_p = \frac 3 5, u_q = \frac 1 2$ are valid upper bounds for $x_p$ and $x_q$, it is sufficient (but not necessary) that \eqref{eq:simple1} and \eqref{eq:simple2} hold with \emph{inequality} \enquote{$\geq$} when replacing $x_p$ by $u_p$ and $x_q$ by $u_q$.
This is indeed true for our example values, e.g. $\frac 3 5 \geq \frac 1 4 \cdot (\frac 3 5)^2 + \frac 1 2$.
Note that these checks can be done in polynomial time (in the size of $\pda$ and the bit-complexity of the rational numbers $u_p, u_q$) and are easy to implement -- thinking of them as a certificate is thus justified.
However, it is not clear if \emph{rational-valued} bounds satisfying the constraints can always be found.
In our paper we characterize a large class of pPDA where rational certificates are guaranteed to exist.

\subsubsection*{Certificates for lower bounds (\Cref{sec:lower})}
The above idea can be extended to lower bounds as well, but this is slightly more involved.
We will prove that $l_p = \frac 4 7$ and $l_q = \frac 2 5$ are valid lower bounds.
Our theory only allows proving lower bounds in conjunction with upper bounds, so we will just reuse the upper bound certificate $u_p = \frac 3 5, u_q = \frac 1 2$.
\begin{itemize}
    \item First, we check that \eqref{eq:simple1} and \eqref{eq:simple2} are satisfied with inequality \enquote{$\leq$} (reverse direction as for upper bounds) when replacing $x_p$ by $l_p$ and $x_q$ by $l_q$.
    This holds.
    \item Next we check that $l_p \leq u_p$ and $l_q \leq u_q$. This also holds.
    \item Finally, we verify that \eqref{eq:simple1} and \eqref{eq:simple2} hold with \emph{strict} inequality \enquote{$>$} when replacing $x_p$ by $u_p$ and $x_q$ by $u_q$. 
%    \item To conclude that $l_p$, $l_q$ are correct lower bounds we need to check a side condition that involves the Jacobi matrix of the equations \eqref{eq:simple1} and \eqref{eq:simple2}.
%    In the example, this matrix is
%    \begin{align}
%        M(x_p, x_q)
%        ~=~
%        \begin{pmatrix}
%            \frac 1 2 x_p & 0 \\
%            \frac 1 4 x_q & \frac 1 4 x_p + \frac 1 4
%        \end{pmatrix}
%        ~.
%    \end{align}
%    The constraint we have to check is that $M(u_p, u_q)$ multiplied by the column vector $\vec u = (u_p, u_q)^T$ yields a vector $\vec u'$ that is entry-wise at most $\vec u$.
%    In the example, $\vec u' = (\frac{9}{50}, \frac{11}{14})^T$, which is indeed entry-wise smaller than $\vec u = (\frac{3}{5}, \frac{1}{2})^T$.
\end{itemize}
We conclude (correctly) that $\frac 4 7 \leq 2 - \sqrt 2$ and $\frac 2 5 \leq \sqrt 2 - 1$.

\subsubsection*{Positive almost-sure termination (\Cref{sec:past})}
It is intuitively clear that our example pPDA terminates (reaches the empty stack) with probability 1, but this property is difficult to prove in general.
With the lower bound certificates seen above we can only conclude that the termination probability is at least $\frac 4 7 + \frac 2 5 \approx 0.97$.
However, our theory also admits certificates for PAST, which implies termination with probability \emph{exactly} $1$.
%We now illustrate how this works.
Similar to the lower bound certificates, a certificate for PAST consists of an upper bound certificate $u_p, u_q$ as well as numbers $r_p$, $r_q$ which are upper bounds on the \emph{expected runtime} starting in $pZ$ and $qZ$, respectively.
In our example, let $u_p = \frac 3 5, u_q = \frac 1 2$ as before, and let $r_p = \frac{45}{14}$, and $r_q = 1$.
To \emph{verify} the certificate we just have to check the inequalities
\begin{align}
    r_p &~\geq~ 1 + \frac 1 4 (r_p + u_p r_p) + \frac 1 4 (r_p + u_q r_q) \\
    r_q &~\geq~ 1
\end{align}
These inequalities can be easily read off from the transitions of $\pda$, just like the fundamental system \eqref{eq:simple1} and \eqref{eq:simple2}.
We find that the inequalities are satisfied and conclude that $\pda$ is PAST.
In fact, the expected time to termination starting in $pZ$ is at most $r_p = \frac{45}{14} \approx 3.21$.

\emph{This is all that's necessary to prove positive almost-sure termination} -- and thus in particular \emph{termination with probability 1} -- and we show in this paper that this method is \emph{complete}, i.e., it works for arbitrary pPDA that are PAST, even if only rational-valued bounds are allowed.

%The only reason why this technique does \emph{not} actually yield an NP upper bound on the PAST problem -- which would be a substantial improvement over the previously known PSPACE upper bound~\todo{cite} -- is that the rational numbers $u$ and $y$ may have exponential bit-complexity.

\section{Preliminaries}
\label{sec:prelims}

\subsection{General Notation and Markov Chains}
$\exnonnegreals = \nonnegreals \cup \{\infty\}$ denotes the set of \emph{extended} non-negative reals.
Arithmetic and comparison with the $\infty$-symbol is done as usual: $\forall a \in \exnonnegreals \colon a + \infty = \infty + a = \infty$, $0 \cdot \infty = \infty \cdot 0 = 0$, $\forall a \in \exnonnegreals \setminus \{0\} \colon a \cdot \infty = \infty \cdot a = \infty$, and $\forall a \in \nonnegreals \colon a < \infty$.
With these conventions, $(\exnonnegreals, +, \cdot, 0, 1)$ is a commutative semiring.
Throughout the paper, we often encounter diverging monotonic sequences $a_0 \leq a_1 \leq \ldots$ in $\nonnegreals$; in such cases, we define $\lim_{i \to \infty} a_i = \infty$ (and the resulting $\infty$ symbol may well be subject to subsequent arithmetic operations obeying the above conventions).

We denote vectors by boldface letters such as $\vec v$.
All vectors in this paper can be thought of as column vectors.
A vector with all entries equal to some $a \in \exnonnegreals$ is written $\vec a$.
For instance, $\vec 0$ and $\vec 1$ denote the all-zeros and all-ones vectors, respectively.
We always index a vector's components by the elements of a countable, but otherwise arbitrary index set $J$, and we write e.g. $\vec v \in \nonnegreals^J$ to indicate that $\vec v$ is a non-negative real vector with $|J|$ components.
The individual components of a vector $\vec v$ are referred to as $\vec v_i$, $i \in J$.
Given vectors $\vec u, \vec v \in \exnonnegreals^J$, we define the following \emph{comparison operators}:
$\vec u \leq \vec v$ if $\forall i\in J \colon \vec u_i \leq \vec v_i$,
$\vec u < \vec v$ if $\vec u \leq \vec v$ and $\vec u \neq \vec v$, and
$\vec u \allsmaller \vec v$ if $\forall i\in J \colon \vec u_i < \vec v_i$.
Note that \enquote{$\leq$} is a partial order on $\exnonnegreals^J$.

For matrices we adopt the same index set convention as for vectors:
A (square) matrix $A\in \reals^{J \times J}$ is indexed by pairs of elements from $J$.
For $(i,j) \in J\times J$, $A_{i,j}$ denotes the entry in the row and column indexed by $i$ and $j$, resp.

In this paper, a \emph{discrete-time Markov chain} (DTMC) $\mc = (\mcstates, \mcmat)$ consists of a  finite or countably infinite set $\mcstates$ of states and a row-stochastic transition probability matrix $\mcmat \in [0,1]^{\mcstates\times\mcstates}$.
Intuitively, for $s,t \in \mcstates$, $M_{s,t}$ is the probability that $\mc$ moves in one step from state $s$ to state $t$.
To make $\mc$ a proper stochastic process, it is necessary to fix an initial state $s_{init} \in \mcstates$; however, for the sake of flexibility, we deliberately do not make $s_{init}$ an explicit part of $\mc$.
We often consider DTMC with a distinguished target (or final) set $F \subseteq \mcstates$.
We then write $\mcmat_F$ for the matrix resulting from $\mcmat$ by setting all rows with indices in $F$ to zero.

\subsection{Runtime and Termination: Different Flavours}

\subsubsection{Termination Moments}
Let $\mc = (\mcstates, \mcmat)$ be a DTMC with $F \subseteq \mcstates$ a distinguished set of target states.
Assign probability $p(\pi)$ to each non-empty path $\pi = s_0\ldots s_l \in \mcstates^+ = \mcstates^* \setminus \{\eps\}$ in the natural way: $p(s_0\ldots s_l) = \prod_{i=0}^{l-1} \mcmat_{s_i, s_{i+1}}$.
The \emph{length} of path $\pi = s_0\ldots s_l$ is defined as $|\pi| = l$.
By convention, a path of length zero is assigned probability 1.
For a given initial state $s \in \mcstates \setminus F$, let $Paths_{s \to F}^\mc = \{s\}.(S \setminus F)^*.F$ be the set of finite paths from $s$ to $F$ that visit $F$ exactly once (namely at the end; for $s\in F$ we define $Paths_{s \to F}^\mc = \{s\}$).
Such paths are called \emph{proper paths between $s$ and $F$} in the following.
It is easy to see that
$
    p \colon Paths_{s \to F}^\mc \to [0,1]
$
is a \emph{sub-probability} distribution, i.e., $\sum_{\pi \in Paths_{s \to F}^\mc} p(\pi) \leq 1$ (strict inequality is possible because there might be a positive probability to never reach $F$ from $s$).

\begin{definition}[Termination Moments]
    \label{def:terminationmoments}
    Let $\mc = (\mcstates, \mcmat)$ be a DTMC.
    For initial state $s \in \mcstates$, target set $F \subseteq \mcstates$ and integer $k \geq 0$, we define the \emph{$k$th termination moment} $\mathsf{E}^k[s{\mid}F]_{\mc}$ as the expected value of $|\pi|^k$ measured wrt.\ the \emph{sub}-distribution $p \colon Paths_{s \to F}^\mc \to [0,1]$.
    Formally:
    \begin{align}
        \mathsf{E}^k[s{\mid}F]_{\mc}
        ~:=~
        \sum_{\pi\in Paths_{s \to F}^\mc} p(\pi) \cdot |\pi|^k
        ~\in~
        \exnonnegreals
    \end{align}
    with the convention that $0^0 = 1$.
    We write just $\mathsf{E}^k[s{\mid}F]$ if $\mc$ is clear from the context.
    For $f \in \mcstates$, we also write $\mathsf{E}^k[s{\mid}f]$ instead of $\mathsf{E}^k[s{\mid}\{f\}]$.
\end{definition}

For readers familiar with the \emph{weakest pre-expectation calculus}~\cite{DBLP:series/mcs/McIverM05} we mention that $\mathsf{E}^k[s{\mid}F]_{\mc}$ can be expressed as $\mathsf{wp}\llbracket {P_{\mc}} \rrbracket(\mathtt{c}^k)(s)$ where $P_{\mc}$ is a suitable probabilistic program encoding the DTMC $\mc$, and $\mathtt{c}$ is a special \enquote{counter variable} tracking the number of computation steps in $P_{\mc}$.

\Cref{def:terminationmoments} involves summing over all (countably many) proper paths between a given initial state $s$ and target set $F$.
A standard observation in the theory of DTMC (e.g.~\cite{kulkarni1995modeling}) is that the sum of the probabilities of all proper paths \emph{of length $k \geq 0$} between $s$ and $\{f\}$ is given by $(\mcmat_f^{k})_{s,f}$,
and thus the sum of the probabilities of all proper paths of \emph{any} finite length between $s$ and $\{f\}$ is equal to $\sum_{k=0}^{\infty}(\mcmat_f^k)_{s,f}$.
The following lemma allows to express such infinite sums as the solution of a linear equation system; it is an instance of Kleene iteration.

\begin{lemma}[{Kleene; e.g.~\cite{DBLP:reference/hfl/Kuich97}}]
    \label{thm:explicitsollinsys}
    Let $J$ be a countable index set, and $\genmat \in \nonnegreals^{J \times J}$, $\vec b \in \nonnegreals^J$.
    Then the vector
    \begin{align}
        \sum_{k = 0}^{\infty} \genmat^k \vec b
        ~=~
        (\sum_{k = 0}^{\infty} \genmat^k) \vec b
        ~\in~
        \exnonnegreals^J
    \end{align}
    is the least $\exnonnegreals$-valued solution of the linear equation system $\vec x = \genmat \vec x + \vec b$.
\end{lemma}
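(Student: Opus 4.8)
The plan is to recognize the statement as an instance of Kleene fixed-point iteration for the monotone affine operator $\Phi(\vec x) := \genmat\vec x + \vec b$ on the complete lattice $(\exnonnegreals^J, \leq)$, and then to identify the $n$-th Kleene approximant $\Phi^n(\vec 0)$ with the $n$-th partial sum $\sum_{k=0}^{n-1}\genmat^k\vec b$. First I would record that $(\exnonnegreals^J,\leq)$ is a complete lattice under the componentwise order (suprema taken componentwise, using $\sup\emptyset = \vec 0$, and every subset of $\exnonnegreals$ has a supremum since $\infty$ is the top element). The operator $\Phi$ is monotone because all entries of $\genmat$ are non-negative, so $\vec x \leq \vec y$ implies $\genmat\vec x \leq \genmat\vec y$. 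The one point genuinely requiring care is that $\Phi$ is $\omega$-continuous, i.e.\ $\Phi(\sup_n \vec x^{(n)}) = \sup_n \Phi(\vec x^{(n)})$ for every ascending chain $\vec x^{(0)} \leq \vec x^{(1)} \leq \dots$; componentwise this is the identity $\sum_{j\in J}\genmat_{i,j}\sup_n \vec x^{(n)}_j = \sup_n \sum_{j\in J}\genmat_{i,j}\vec x^{(n)}_j$, which holds by monotone convergence for series of non-negative extended reals (Tonelli), valid since $J$ is countable and all summands are non-negative — here the arithmetic conventions fixed in the preliminaries, in particular $0\cdot\infty = 0$ and $a + \infty = \infty$, are exactly what makes these manipulations sound.

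Given monotonicity and $\omega$-continuity, Kleene's fixed-point theorem yields that the least fixed point of $\Phi$ — equivalently, the least $\exnonnegreals$-valued solution of $\vec x = \genmat\vec x + \vec b$ — equals $\sup_{n\geq 0}\Phi^n(\vec 0)$. A straightforward induction on $n$ then shows $\Phi^n(\vec 0) = \sum_{k=0}^{n-1}\genmat^k\vec b$: the base case is the empty sum $\vec 0$, and the step is $\Phi^{n+1}(\vec 0) = \genmat\bigl(\sum_{k=0}^{n-1}\genmat^k\vec b\bigr) + \vec b = \sum_{k=1}^{n}\genmat^k\vec b + \vec b = \sum_{k=0}^{n}\genmat^k\vec b$, pulling $\genmat$ inside the finite sum. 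Taking the supremum over $n$ and using the paper's convention that a diverging monotone non-negative sequence has limit $\infty$, the least solution equals $\sum_{k=0}^{\infty}\genmat^k\vec b$. (One can also argue fully self-contained, without invoking Kleene as a black box: $\sum_{k=0}^{\infty}\genmat^k\vec b$ is a solution, since $\genmat\sum_{k=0}^{\infty}\genmat^k\vec b + \vec b = \sum_{k=1}^{\infty}\genmat^k\vec b + \vec b = \sum_{k=0}^{\infty}\genmat^k\vec b$; and it is least, since any solution $\vec y$ satisfies $\vec y = \genmat^n\vec y + \sum_{k=0}^{n-1}\genmat^k\vec b \geq \sum_{k=0}^{n-1}\genmat^k\vec b$ for every $n$, hence $\vec y \geq \sum_{k=0}^{\infty}\genmat^k\vec b$.)

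Finally, the asserted equality $\bigl(\sum_{k=0}^{\infty}\genmat^k\bigr)\vec b = \sum_{k=0}^{\infty}\genmat^k\vec b$ is again just Tonelli: componentwise the left-hand side is $\sum_{j}\bigl(\sum_k (\genmat^k)_{i,j}\bigr)\vec b_j = \sum_j\sum_k (\genmat^k)_{i,j}\vec b_j = \sum_k\sum_j (\genmat^k)_{i,j}\vec b_j = \sum_k (\genmat^k\vec b)_i$, the interchange of the two countable non-negative sums being legitimate. The main obstacle is therefore not conceptual but bookkeeping: ensuring that every interchange of infinite sums and suprema is justified in $\exnonnegreals$ and that the $\infty$-arithmetic conventions are respected at each step; once that discipline is in place, the rest of the argument is routine.
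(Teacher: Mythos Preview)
Your proposal is correct. The paper does not actually prove this lemma; it merely states it with a citation to~\cite{DBLP:reference/hfl/Kuich97} and treats it as a standard fact about Kleene iteration. Your argument is precisely the intended one: verify that $\Phi(\vec x)=\genmat\vec x+\vec b$ is monotone and $\omega$-continuous on the complete lattice $\exnonnegreals^J$, identify the Kleene iterates $\Phi^n(\vec 0)$ with the partial sums $\sum_{k=0}^{n-1}\genmat^k\vec b$, and take the supremum. The self-contained alternative you sketch (show the series is a solution and dominate every solution from below via $\vec y=\genmat^n\vec y+\sum_{k<n}\genmat^k\vec b\geq\sum_{k<n}\genmat^k\vec b$) is equally standard and arguably cleaner here since it avoids even naming Kleene's theorem. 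Either way there is nothing to add.
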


Let $\mc = (\mcstates,\mcmat)$ be a DTMC with target $F \subseteq \mcstates$.
In this paper, we only need the first two termination moments:
\begin{itemize}
    \item The $0$th termination moment $\mathsf{E}^0[s{\mid}F] = [s{\mid}F]$ is just the probability to reach $F$ from a given initial state $s \in \mcstates$.
    By \Cref{thm:explicitsollinsys}, the numbers $\{[s{\mid}F] \mid s \in \mcstates\}$ constitute the least solution $\vec \mu^0 \in \nonnegreals^\mcstates$ of the linear equation system
    \begin{align}
        \vec \mu^0
        ~=~
        M_F \vec \mu^0 + \vec \chi (F)
        ~,
    \end{align}
    where $\vec \chi (F)$ is the characteristic $\{0,1\}$-vector of $F$, i.e., $\vec \chi (F)_s = 1$ iff $s \in F$.
%    and $\mcmat_F$ is the matrix resulting from $\mcmat$ by setting the rows with indices in $F$ to zero.
    \item Similarly, it can be shown that the $1$st termination moments $\mathsf{E}^1[s{\mid}F] = \mathsf{E}[s{\mid}F]$ are the least solution $\vec \mu^1 \in \exnonnegreals^S$ of the linear equation system
    \begin{align}
        \label{eq:1stmoments}
        \vec \mu^1
        ~=~
        M_F \vec \mu^1 + (\vec \mu^0 - \vec\chi(F))
        ~.
    \end{align}
\end{itemize}

\subsubsection{Conditional Expected Runtimes}
Let $\mc = (\mcstates, \mcmat)$ be a DTMC with target $F \subseteq \mcstates$.
If for a given state $s \in \mcstates$ we have that $[s{\mid}F] > 0$, then we call the quantity $\frac{\mathsf{E}[s{\mid}F]}{[s{\mid}F]}$ the \emph{conditional expected runtime} of $\mc$ wrt.\ to target $F$ and initial state $s$.
Intuitively, $\frac{\mathsf{E}[s{\mid}F]}{[s{\mid}F]}$ is the number of states visited until reaching $F$ from $s$, \emph{conditioned} on the fact that $F$ is eventually reached.

%The choice of our uncommon notation, especially for the termination probabilities $[st]$ will be useful later on.

\subsubsection{(Unconditional) Expected Runtimes}

Intuitively, the expected runtime of a DTMC wrt.\ to initial state $s$ and target set $F$ is the total expected number of states visited until reaching $F$ from $s$.
In particular, if $F$ is not reached with probability $1$, then the expected runtime is $\infty$.
This is the key difference to the 1st termination moment, i.e., $\mathsf{E}[s{\mid}F]$ may be finite even if $[s{\mid}F] < 1$.

It is possible to define the expected runtimes formally as the expectation of a random variable on the probability space of \emph{infinite} paths induced by $\mc$, but for our purposes, the following simpler characterization suffices:

\begin{definition}[Expected runtimes]
    \label{def:ert}
    Let $\mc = (\mcstates, \mcmat)$ be a DTMC with a target set $F \subseteq \mcstates$.
    The \emph{expected runtime} of $\mc$ wrt.\ target $F$ and initial state $s \in \mcstates$ is defined as
    \begin{align}
        \label{eq:ertsum}
        \ert{s{\mid}F}{}_\mc
        ~=~
        \Big( \sum_{k=0}^{\infty} \mcmat_{F}^k \vec ( \vec 1 - \vec\chi(F)) \Big)_{s}
        ~\in~
        \exnonnegreals
        ~.
    \end{align}
    We write just $\ert{s{\mid}F}{}$ if $\mc$ is clear from the context.
\end{definition}
Note that by \Cref{thm:explicitsollinsys}, the values $\ert{s{\mid}F}{}$ constitute the least solution $\vec r \in \exnonnegreals^S$ of the linear equation system
\begin{align}
    \label{eq:lesert}
    \vec r
    ~=~
    \mcmat_{F} \vec r + ( \vec 1 - \vec\chi(F))
    ~.
\end{align}

It follows from \eqref{eq:1stmoments} that if $\vec \mu^0 = \vec 1$ (i.e., $\mc$ is \emph{almost-surely terminating}, see below), then $\ert{s{\mid}F}{} = \mathsf{E}[s{\mid}F]$.
In general, however, we may have $\ert{s{\mid}F}{} > \mathsf{E}[s{\mid}F]$.

Another observation that will be crucial in the proof of \Cref{thm:ertsys} is that $\ert{s{\mid}F}{}$ is the sum of the probabilities of \emph{all} finite paths starting in $s$ that never visit $F$, i.e., paths of the form $\{s\}.(\mcstates \setminus F)^*$.
This follows from \eqref{eq:ertsum} and the discussion preceding \Cref{thm:explicitsollinsys}.

\subsubsection{Probabilistic Termination}

\begin{definition}[AST, PAST, and cPAST]
    \label{def:terminationNotions}
    Given a DTMC $\mc = (\mcstates,\mcmat)$ with target $F \subseteq \mcstates$, we define:
    \begin{itemize}
        \item $\mc$ is \emph{almost-surely terminating} (AST) wrt.\ $F$ if $\forall s \in \mcstates$, we have $[s{\mid}F] = 1$, i.e., $F$ is reached from every state $s$ with probability $1$.
        \item $\mc$ is \emph{positively AST} (PAST) wrt.\ $F$ if $\forall s \in \mcstates$, $\ert{s{\mid}F}{} < \infty$.
        \item $\mc$ is conditionally PAST (cPAST) wrt.\ F if $\forall s \in \mcstates$ such that $[s{\mid}F] > 0$, it holds that $\frac{\mathsf{E}[s{\mid}F]}{[s{\mid}F]} < \infty$.
    \end{itemize}
\end{definition}
Note that, since $[s{\mid}F] = 0$ implies $\mathsf{E}[s{\mid}F] = 0$, a DTMC is cPAST iff the 1st termination moments are all finite.

The notion of cPAST, although less well known than AST and PAST, has been considered in~\cite{DBLP:conf/lics/EsparzaKM05,DBLP:journals/jcss/BrazdilKKV15} (the term \enquote{cPAST} has not been used before though).
Intuitively, cPAST means that termination either occurs \enquote{fast} (in finite expected time), or not at all.

Note that we consider \emph{universal} (\enquote{for all initial states}) definitions of AST, PAST, and cPAST.
This is mostly to simplify the presentation, as otherwise we would have to treat uninteresting special cases such as unconnected DTMC seperately.
It is possible to extend our results to the setting of a distinguished initial state of interest by conducting an appropriate reachability analysis in advance.

\begin{restatable}{lemma}{terminationRelations}
    \label{thm:terminationRelations}
    The relationship between AST, PAST, and cPAST in a DTMC with a distinguished target set is as follows:
    \begin{center}
        \begin{tikzpicture}[node distance = 10mm and 20mm]
        \node[] (cPAST) {cPAST};
        \node[below=of cPAST] (PAST) {PAST};
        \node[right=of cPAST] (AST) {AST};
        \node[below=of AST] (ASTcPAST) {AST $\land$ cPAST};
        \draw[double,->] (PAST) -- (AST);
        \draw[double,->] (PAST) -- (cPAST);
        \draw[double,->] (ASTcPAST) -- (AST);
        \draw[double,->] (ASTcPAST) -- (cPAST);
        \draw[double,<->] (ASTcPAST) -- (PAST);
        \end{tikzpicture}
    \end{center}
    The omitted implications do not hold in general.
\end{restatable}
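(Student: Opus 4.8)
The plan is to derive the three non-trivial positive implications of the diagram purely from the linear fixed-point characterisations of $\vec\mu^0$, $\vec\mu^1$ and $\vec r$ recalled above, and then to refute the remaining arrows with two explicit Markov chains.

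The crucial step is \emph{PAST $\Rightarrow$ AST}. First I would note that $\vec 1$ is \emph{a} (not necessarily least) solution of $\vec x = \mcmat_F\vec x + \vec\chi(F)$: the $F$-rows of $\mcmat_F$ are zero and the remaining rows sum to $1$, so $\mcmat_F\vec 1 = \vec 1 - \vec\chi(F)$. Subtracting the defining equation $\vec\mu^0 = \mcmat_F\vec\mu^0 + \vec\chi(F)$ from $\vec 1 = \mcmat_F\vec 1 + \vec\chi(F)$ shows that the non-termination vector $\vec q := \vec 1 - \vec\mu^0 \geq \vec 0$ satisfies $\vec q = \mcmat_F\vec q$, hence $\vec q = \mcmat_F^k\vec q$ for every $k \geq 0$. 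Since $\vec q_s = 1 - [s{\mid}F] = 0$ for $s \in F$, we have $\vec q \leq \vec 1 - \vec\chi(F)$; monotonicity of the non-negative matrices $\mcmat_F^k$ then gives $\vec q = \mcmat_F^k\vec q \leq \mcmat_F^k(\vec 1 - \vec\chi(F))$ for all $k$, and summing over $k$ (using \Cref{def:ert}) yields $\ert{s{\mid}F}{} \geq \sum_{k=0}^\infty \vec q_s$. If some $\vec q_s$ were positive this would diverge, contradicting PAST; hence $\vec q = \vec 0$, i.e.\ the chain is AST.

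The remaining positive facts are bookkeeping around the observation already recorded after \eqref{eq:lesert}: when $\vec\mu^0 = \vec 1$ (i.e.\ under AST) the system \eqref{eq:1stmoments} for $\vec\mu^1$ is \emph{identical} to the system \eqref{eq:lesert} for $\vec r$, so $\mathsf{E}[s{\mid}F] = \ert{s{\mid}F}{}$ for all $s$. Consequently, PAST implies AST, which makes all first termination moments equal the (finite) expected runtimes, which by the remark following \Cref{def:terminationNotions} is precisely cPAST; and conversely AST together with cPAST --- i.e.\ finiteness of all $\mathsf{E}[s{\mid}F]$ --- forces finiteness of all $\ert{s{\mid}F}{}$, i.e.\ PAST. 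Adding the trivial implications AST $\land$ cPAST $\Rightarrow$ AST and AST $\land$ cPAST $\Rightarrow$ cPAST, every arrow in the diagram is established, in particular the equivalence PAST $\Leftrightarrow$ AST $\land$ cPAST.

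Finally I would give two witnesses for the non-implications. For AST $\not\Rightarrow$ cPAST (hence also AST $\not\Rightarrow$ PAST): the symmetric random walk on $\mcstates = \nats$ with target $F = \{0\}$, $\mcmat_{n,n-1} = \mcmat_{n,n+1} = \nicefrac{1}{2}$ for $n \geq 1$, and $\mcmat_{0,0} = 1$, is recurrent, so $[n{\mid}0] = 1$ for all $n$ (AST), yet it is classical that $\mathsf{E}[n{\mid}0] = \infty$ for $n \geq 1$, so the conditional expected runtime from $n$ is $\mathsf{E}[n{\mid}0]/[n{\mid}0] = \infty$ --- neither cPAST nor PAST. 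For cPAST $\not\Rightarrow$ AST (hence also cPAST $\not\Rightarrow$ PAST): take $\mcstates = \{s,t,f\}$, $F = \{f\}$, $\mcmat_{s,f} = \mcmat_{s,t} = \nicefrac{1}{2}$, $\mcmat_{t,t} = \mcmat_{f,f} = 1$; then $[s{\mid}f] = \nicefrac{1}{2} < 1$ (not AST), the unique proper path from $s$ to $f$ has length $1$ so $\mathsf{E}[s{\mid}f] = \nicefrac{1}{2} < \infty$ and $\mathsf{E}[t{\mid}f] = 0$ (cPAST), while the finite paths from $s$ that avoid $f$ --- namely $\langle s\rangle, \langle s,t\rangle, \langle s,t,t\rangle, \ldots$ --- contribute $\ert{s{\mid}f}{} = 1 + \nicefrac{1}{2} + \nicefrac{1}{2} + \cdots = \infty$ (not PAST). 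The only step that needs a genuine argument is PAST $\Rightarrow$ AST, and the main obstacle there is exactly finding the identity $\vec q = \mcmat_F\vec q$ so that finiteness of $\ert{s{\mid}F}{}$ can be turned into a contradiction; everything else is the recorded observation about \eqref{eq:1stmoments} or direct inspection of the two example chains.
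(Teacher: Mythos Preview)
Your proposal is correct and follows essentially the same approach as the paper. The argument for PAST $\Rightarrow$ AST is the same idea as the paper's (both hinge on $\mcmat_F\vec 1 + \vec\chi(F) = \vec 1$ and conclude that the summands of the series defining $\ert{s{\mid}F}{}$ are bounded below by the positive non-termination mass), and the bookkeeping for the remaining implications is identical; the only notable difference is your counterexample for cPAST $\not\Rightarrow$ AST, where you use a finite three-state chain instead of the paper's biased random walk on $\nats$ with $p = \nicefrac{1}{3}$ --- your choice is arguably simpler and avoids appealing to any non-trivial facts about random walks.
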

\begin{proof}
    We give a self-contained proof for the implication PAST $\implies$ AST as this is central to our work.
    See~\cite{DBLP:conf/lics/OlmedoKKM16,dal2020probabilistic} for alternative proofs and \iftoggle{arxiv}{\Cref{app:terminationRelations}}{\cite{arxiv}} for the other (straightforward) implications.
    
    To see that PAST implies AST, first note that for $n \geq 0$,
    \begin{align}
        \label{eq:forshowingastpast}
        M_F^n(\vec 1 - \vec\chi (F)) ~+~ \sum_{k=0}^n \mcmat_F^k \vec \chi(F)
        ~=~
        \vec 1
    \end{align}
    which follows by induction using that $\mcmat_F\vec 1 + \vec\chi(F) = \vec 1$.
    Now suppose that $\mc$ is not AST, i.e., $\vec\mu^0 < \vec 1$.
    By \Cref{thm:explicitsollinsys}, $\vec\mu^0$ can be written explicitly as $\sum_{k = 0}^\infty \mcmat_F^k \vec\chi(F)$.
    From \eqref{eq:forshowingastpast} and $\vec\mu^0 < \vec 1$, it follows that $\lim_{n \to \infty} M_F^n(\vec 1 - \vec\chi (F)) > \vec 0$.
    As the expected runtimes are defined as $\vec r = \sum_{k = 0}^\infty \mcmat_F^k ( \vec 1 - \vec\chi(F))$ (see~\ref{eq:ertsum}), it follows that at least one component of $\vec r$ is $\infty$.
    Thus, $\mc$ is not PAST.
\end{proof}

\subsection{Weighted and Probabilistic PDA}

We now define pushdown automata (PDA) with non-negative \emph{rational} transition weights, an instance of PDA with semiring-valued weights as described e.g. in~\cite{DBLP:reference/hfl/Kuich97}.

A weighted pushdown automaton (weighted PDA) is a triple $\pda = (\pdastates, \abstack, \pdatrans)$,
where $\pdastates$ is a \emph{finite} set of \emph{states},
$\abstack$ is a \emph{finite stack alphabet},
and $\pdatrans \subseteq \pdastates \times \abstack \times \nonnegrats \times \pdastates \times \abstack^{\leq 2}$ is a \emph{weighted transition relation} (here, $\abstack^{\leq 2} = \{\eps\} \cup \abstack \cup \abstack^2$).
Intuitively, a transition $(p,Z,\selem,q,\alpha) \in \delta$ means that if $\pda$ is in state $p$ and the stack content is $Z\beta$ for some $\beta \in \abstack^*$ (i.e., $Z$ is on top of the stack), then there exists a transition with weight $\selem$ to the configuration where $\pda$ is in state $q$ and the stack content is $\alpha\beta$ (i.e., $Z$ is popped and $\alpha$ is pushed on the stack).
Instead of $(p,Z,a,q,\alpha) \in \delta$ we also write $\trans{pZ}{a}{q\alpha}$ if $\pdatrans$ is clear from the context.
In the following, we always write tuples from $\pdastates\times\abstack^*$ without parentheses and separating commata, e.g., we write $pZ$ instead of $(p,Z)$.

A weighted PDA $\pda$ is a \emph{probabilistic} PDA (pPDA) if $\sum_{\trans{pZ}{a}{q\alpha}} a = 1$ holds for all $pZ \in \pdastates{\times}\abstack$~\cite{DBLP:conf/lics/EsparzaKM04}.
We emphasize that these notions of weighted (probabilistic) PDA do \emph{not} recognize weighted (probabilistic) languages; they are just \emph{generative} models assigning certain weights (probabilities) to finite computation traces.

More formally, we define the \emph{semantics} of a weighted PDA $\pda$ in terms of a countably infinite weighted transition system $\mathfrak{T}_{\pda} = (\pdastates\times\abstack^*, \semmat{\pda})$, i.e., $\semmat{\pda}$ is a $\nonnegrats$-valued square martix indexed by the \emph{configurations} $\pdastates{\times}\abstack^*$ of $\pda$ (which are the states of $\mathfrak{T}_{\pda}$).
$\semmat{\pda}$ is defined as follows:
For all $\trans{pZ}{\selem}{q\alpha}$ and $\beta \in \abstack^*$, $\semmat{\pda}(pZ\beta, q\alpha\beta) = a$, and all other entries of $\semmat{\pda}$ are zero.
Note that if $\pda$ is a pPDA, then $\mathfrak{T}_{\pda}$ can be seen as a DTMC with target set $F = \{p\eps \mid p \in\pdastates\}$ because the states of the form $p\eps$ have no outgoing transitions in $T_{\pda}$.

We extend the notion of termination moments (Def.~\ref{def:terminationmoments}) to the weighted transition systems $\mathfrak{T}_{\pda}$ induced by a weighted PDA $\pda$ in the natural way.
For reasons explained below, we mostly consider $0$th moments of the form $\triple{p}{Z}{q\eps} _{\pda}$, i.e., the sum of the weights of all finite paths from $pZ$ to $q\eps$.
In the following, we omit $\eps$ and $\pda$ from the notation whenever this causes no confusion and write just $\triple p Z q$.
In a pPDA, $\triple p Z q$ is the probability to reach $q\eps$ from initial configuration $pZ$.
In the context of pPDA, the probabilities $\triple p Z q$ are also called \emph{return probabilities}~\cite{DBLP:conf/fossacs/WinklerGK22}.

We will also consider 1st termination moments of the form $\etriple{p}{Z}{q\eps}_{\pda}$ in pPDA (but never in general weighted PDA).
For the sake of brevity, those are denoted as $\etriple p Z q$.

%For reasons explained further below, one is often interested in the sum of the weight of all finite paths from an initial configuration $pZ \in \pdastates{\times}\abstack$ to a \emph{terminal} configuration of the form $q\eps$.
%Formally, this quantity, which we denote by $\triple p Z q_{\pda}$, is defined as
%\begin{align}
%    \label{eq:returnvalues}
%    \triple p Z q _{\pda}
%    ~=~
%    \left( \sum_{i = 0}^{\infty} \semmat{\pda}^i \right) _{pZ, q\eps}
%    ~\in~
%    \exnonnegreals
%    ~.
%\end{align}
%We omit the subscript and just write $\triple p Z q$ if $\pda$ is clear from the context.
%The intuition for \eqref{eq:returnvalues} is that for all $i \geq 0$, the entry $(\semmat{\pda}^i)_{pZ,q\eps}$ is precisely the sum of the weights of all paths \emph{of length $i$} from $pZ$ to $q\eps$.

\begin{theorem}[Fundamental system of PDA{\cite{DBLP:reference/hfl/Kuich97},\cite{DBLP:conf/lics/EsparzaKM04}}]
    \label{thm:pdafundamentalsys}
    Let $\pda = \pdainit$ be a weighted PDA.
    Consider $\exnonnegreals$-valued variable symbols of the form $\vtriple{p}{Z}{q}$, $pZq \in \pdastates{\times}\abstack{\times}\pdastates$.
    Then the quadratic system of equations
    \begin{multline}
        \label{eq:pdasys}
        \vtriple{p}{Z}{q}
        ~=~
        \sum_{\trans{pZ}{a}{q\eps}} a
        ~+~
        \sum_{\trans{pZ}{a}{rY}} a \cdot \vtriple{r}{Y}{q} \\
        ~+~
        \sum_{\trans{pZ}{a}{rYX}} a \cdot  \sum_{t \in \pdastates} \vtriple{r}{Y}{t} \cdot \vtriple{t}{X}{q}
    \end{multline}
    has the (point-wise) least solution $\vtriple{p}{Z}{q} = \triple{p}{Z}{q}_{\pda}$.
\end{theorem}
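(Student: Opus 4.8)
The plan is to read \eqref{eq:pdasys} as the fixed-point equation of a monotone operator and then pin down the vector with components $\triple pZq_\pda$ as its least fixed point by a two-sided argument. Concretely, let $\sys f \colon \exnonnegreals^{\pdastates\times\abstack\times\pdastates} \to \exnonnegreals^{\pdastates\times\abstack\times\pdastates}$ be the operator whose $pZq$-component is the right-hand side of \eqref{eq:pdasys}, with the symbols $\vtriple rYq$ now read as coordinates of the argument vector. Each component of $\sys f$ is a finite sum of terms of the shape $a$, $a\cdot x_i$, and $a\cdot x_i\cdot x_j$ with $a\in\nonnegrats$, evaluated over the complete semiring $\exnonnegreals$; hence $\sys f$ is monotone and Scott-continuous for the product order $\leq$ on $\exnonnegreals^{\pdastates\times\abstack\times\pdastates}$ (this is where the $\infty$-conventions, in particular $0\cdot\infty=0$, are needed so that suprema commute with the arithmetic). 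By the Knaster--Tarski and Kleene fixed-point theorems $\sys f$ therefore has a least fixed point $\lfp\sys f=\sup_{n\in\nats}\sys f^n(\vec 0)$, and $\lfp\sys f$ is the $\leq$-least solution of $\vec x=\sys f(\vec x)$. It thus suffices to prove (i) that the vector with components $\triple pZq_\pda$ solves \eqref{eq:pdasys}, and (ii) that it lies below every solution.

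For (i), recall that $\triple pZq_\pda$ is the sum of the weights $p(\pi)$ (the product of the transition weights along $\pi$) over all finite paths $\pi$ from $pZ$ to $q\eps$ in $\mathfrak{T}_{\pda}$. I would classify such paths by their first transition, which necessarily pops the topmost (and here only relevant) symbol $Z$: it has one of the forms $\trans pZ{a}{q\eps}$ (producing a path of length one), $\trans pZ{a}{rY}$, or $\trans pZ{a}{rYX}$. In the last case the bottom symbol $X$ is never read until it becomes topmost, so $\pi$ visits a first configuration of the form $tX$ for a uniquely determined $t\in\pdastates$; cutting $\pi$ there exhibits it as a path from $rY$ to $t\eps$ (read relative to the retained bottom $X$) followed by a path from $tX$ to $q\eps$, and conversely every such pair concatenates to a valid path. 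Since weights multiply along concatenation and all summands are non-negative elements of the complete semiring $\exnonnegreals$ -- so the countable sum may be regrouped and reordered freely, and a product of sums equals the sum of products -- collecting the three cases yields exactly \eqref{eq:pdasys}. This is in essence the statement of the cited results~\cite{DBLP:reference/hfl/Kuich97,DBLP:conf/lics/EsparzaKM04}; it also reuses the observation recalled before \Cref{thm:explicitsollinsys} that sums of path weights over length-graded families of paths are well behaved.

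For (ii), let $\vec v$ be any solution, i.e.\ $\vec v=\sys f(\vec v)$. Assign to each finite path $\pi$ from $pZ$ to $q\eps$ a rank $\mathrm{rk}(\pi)\in\nats_{\geq 1}$ by the same recursion used to decompose it: a length-one path has rank $1$, and a path split as above into one or two sub-paths has rank one more than the maximum of their ranks. Every path has finite rank because the sub-paths are strictly shorter. Let $\triple pZq^{\leq n}$ be the sum of the weights of the paths from $pZ$ to $q\eps$ of rank at most $n$; by monotone convergence $\triple pZq_\pda=\sup_{n}\triple pZq^{\leq n}$, since these families exhaust all finite paths as $n\to\infty$. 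A straightforward induction on $n$, whose step mirrors the three cases of \eqref{eq:pdasys} and uses $\vec v=\sys f(\vec v)$, monotonicity, and that a partial sum never exceeds the full one, shows $\triple pZq^{\leq n}\leq\vec v_{pZq}$ for all $p,Z,q$; taking the supremum over $n$ gives $\triple pZq_\pda\leq\vec v_{pZq}$. Together with (i) this identifies the vector $\triple\cdot\cdot\cdot_\pda$ as $\lfp\sys f$, i.e.\ the pointwise least solution.

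The main obstacle is step (i): turning the path-decomposition sketch into a rigorous computation. This means (a) establishing the claimed bijection between paths from $pZ$ to $q\eps$ that start by pushing $YX$ and pairs consisting of a path $rY\to t\eps$ and a path $tX\to q\eps$, the crux being uniqueness of the first return to stack height one; and (b) justifying the ensuing rearrangement of a countable sum of non-negative, possibly infinite values -- which is exactly where completeness of $\exnonnegreals$ and the arithmetic conventions for $\infty$ are used. The remaining ingredients -- continuity of $\sys f$, the fixed-point theorems, and the inductive leastness argument -- are routine. (One could instead derive (i) and leastness from \Cref{thm:explicitsollinsys} applied to a height-truncated version of $\semmat{\pda}$, but the same stack-height decomposition resurfaces there.)
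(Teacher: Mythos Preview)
The paper does not prove \Cref{thm:pdafundamentalsys} at all: it is quoted as a known result with citations to~\cite{DBLP:reference/hfl/Kuich97,DBLP:conf/lics/EsparzaKM04}, and the reader is pointed to~\cite[Sec.~3]{DBLP:journals/lmcs/KuceraEM06} for an intuitive explanation of \eqref{eq:pdasys}. There is thus no paper-side proof to compare against.

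That said, your proposal is a correct and essentially standard reconstruction of the argument underlying the cited results. Reading the right-hand side of \eqref{eq:pdasys} as a monotone, Scott-continuous operator on the complete lattice $\exnonnegreals^{\pdastates\times\abstack\times\pdastates}$ and invoking Kleene/Knaster--Tarski is exactly what the semiring framework of~\cite{DBLP:reference/hfl/Kuich97} does in general; your first-step / first-return-to-height-one decomposition in~(i) is the classical argument from~\cite{DBLP:conf/lics/EsparzaKM04}. The only mild detour is your leastness argument~(ii): the rank you define is precisely set up so that $(\sys f^n(\vec 0))_{pZq}$ equals your $\triple pZq^{\leq n}$, which would give $\lfp\sys f = \sup_n \sys f^n(\vec 0) = $ the vector of $\triple pZq_\pda$'s directly and render the separate induction against an arbitrary fixed point $\vec v$ unnecessary. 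Either way the argument goes through; your identification of the two genuine obligations --- uniqueness of the first return to stack height one, and free rearrangement of the countable non-negative sum in $\exnonnegreals$ --- is accurate.
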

We refer to~\cite[Sec.~3]{DBLP:journals/lmcs/KuceraEM06} for an intuitive explanation of \eqref{eq:pdasys}.

Finally, we extend the various notions of termination from Def.~\ref{def:terminationNotions} to pPDA $\pda = \pdainit$ and $F = \pdastates \times \{\eps\}$:
\begin{itemize}
    \item $\pda$ is \emph{AST} if the Markov chain $\mathfrak{T}_{\pda}$ is AST wrt.\ $F$, which is the case iff for all $pZ \in \pdastates \times\abstack$, $\triple p Z F _\pda = \sum_{t \in \pdastates} \triple p Z t _\pda = 1$.
    \item $\pda$ is \emph{PAST} if $\mathfrak T _\pda$ is PAST wrt.\ to $F$ which is the case iff for all $pZ \in \pdastates \times\abstack$, $\ert{p}{Z{\mid}F} < \infty$.
    In the context of pPDA, we simply write $\ert{p}{Z}$ instead of $\ert{p}{Z{\mid}F}$ because we will only consider expected runtimes wrt.\ to $F$ as defined above.
    \item $\pda$ is \emph{cPAST} if $\mathfrak T _\pda$ is cPAST wrt.\ to \emph{all} $t\eps \in F$, i.e., for all $pZt \in \pdastates\times\abstack\times\pdastates$, $\frac{\etriple p Z t}{\triple p Z t}$ is either undefined or finite.
\end{itemize}

\subsection{Positive Polynomial Systems (PPS)}

Let $J$ be a finite index set and consider a vector $\vec x = (\vec x_j)_{j \in J}$ of variable symbols.
A \emph{positive polynomial system} (PPS) over the variables $\vec x$ is a vector $\sys f = (\sys f_j)_{j \in J}$ of multivariate polynomials $\sys f_j$ with variables in $\vec x$ and \emph{non-negative coefficients} in $\nonnegreals$.

A PPS $\sys f$ induces an endofunction $\sys f \colon \exnonnegreals^J \to \exnonnegreals^J, \vec u \mapsto \sys f(\vec u) = (\sys f_j(\vec u))_{j \in J}$.
As the coefficients of the polynomials in $\sys f$ are non-negative, the function $\sys f$ is \emph{monotone}, i.e., $\vec u \leq \vec v \implies \sys f(\vec u) \leq \sys f(\vec v)$, where $\leq$ is the element-wise order on $\exnonnegreals^J$.
Since the domain $(\exnonnegreals^J, \leq)$ of $\sys f$ is a \emph{complete lattice}, the Knaster-Tarski fixed point theorem implies that $\sys f$ has a \emph{least fixed point} (lfp) $\lfp\sys f \in \exnonnegreals^J$.
In other words, the polynomial equation system $\vec x = \sys f(\vec x)$ is guaranteed to have a \emph{least} solution $\lfp\sys f \in \exnonnegreals^J$, wrt.\ to $\leq$.
A PPS $\sys f$ is called \emph{feasible} if $\lfp\sys f \allsmaller \vec\infty$, i.e., $\lfp\sys f \in \nonnegreals^J$.

\begin{lemma}[Inductive upper bounds]
    \label{thm:inductiveUpperBounds}
    Let $\sys f$ be a PPS and let $\vec u \geq \vec 0$.
    The following statements hold:
    \begin{itemize}
        \item $\sys f(\vec u) \leq \vec u \implies \lfp\sys f \leq \vec u$
        \item $\sys f(\vec u) \allsmaller \vec u \implies \lfp\sys f \allsmaller \vec u$
    \end{itemize}
\end{lemma}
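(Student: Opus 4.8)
The plan is to handle the two items separately, obtaining the second as an easy consequence of the first. For the first implication, I would invoke the characterization of the Knaster–Tarski least fixed point already alluded to in the paper: for a monotone endofunction on a complete lattice, $\lfp\sys f$ is the infimum of the set of \emph{pre-fixed points}, i.e.\ of all $\vec v$ with $\sys f(\vec v) \leq \vec v$. Since any $\vec u$ satisfying $\sys f(\vec u) \leq \vec u$ is by definition such a pre-fixed point, $\lfp\sys f \leq \vec u$ is immediate. If one prefers a more self-contained route, I would instead argue via Kleene iteration: polynomials with non-negative coefficients induce $\omega$-continuous maps on the complete lattice $(\exnonnegreals^J,\leq)$, so $\lfp\sys f = \sup_{n\geq 0}\sys f^n(\vec 0)$; one then proves $\sys f^n(\vec 0)\leq\vec u$ by induction on $n$, with base case $\sys f^0(\vec 0)=\vec 0\leq\vec u$ and step $\sys f^{n+1}(\vec 0)=\sys f(\sys f^n(\vec 0))\leq\sys f(\vec u)\leq\vec u$, using monotonicity of $\sys f$ and the hypothesis; passing to the supremum yields $\lfp\sys f\leq\vec u$.

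For the second implication I would bootstrap from the first. Assume $\sys f(\vec u)\allsmaller\vec u$; in particular $\sys f(\vec u)\leq\vec u$, so the first item gives $\lfp\sys f\leq\vec u$. Now apply $\sys f$ once more and use that $\lfp\sys f$ is a genuine fixed point: $\lfp\sys f = \sys f(\lfp\sys f)\leq\sys f(\vec u)\allsmaller\vec u$, where the middle inequality is monotonicity applied to $\lfp\sys f\leq\vec u$. Combining a componentwise $\leq$ with a componentwise strict $\allsmaller$ gives $\lfp\sys f\allsmaller\vec u$, as desired.

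The only point needing a little care is the arithmetic with the $\infty$ symbol: $\vec u$ ranges over $\exnonnegreals^J$, so both the evaluation $\sys f(\vec u)$ and the comparisons must be read with the semiring conventions fixed in the preliminaries. None of the steps above uses finiteness of any component — monotonicity of $\sys f$ on $\exnonnegreals^J$, the Kleene/Knaster–Tarski description of $\lfp\sys f$, and the componentwise definitions of $\leq$ and $\allsmaller$ are all that is required — so I do not anticipate a genuine obstacle; the main thing is to state the continuity (or fixed-point) fact being used precisely enough that both the induction in the first part and the ``apply $\sys f$ one more time'' trick in the second part are clearly justified.
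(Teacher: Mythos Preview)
Your proof is correct. The first item matches the paper exactly (Knaster--Tarski). For the second item, the paper takes a different route: it iterates $\sys f$ downward from $\vec u$, obtaining the decreasing bounded sequence $\vec u \allgreater \sys f(\vec u) \geq \sys f^2(\vec u) \geq \ldots$, then uses continuity of $\sys f$ (as a real-valued polynomial map) to argue that the limit is a fixed point sandwiched strictly below $\vec u$, whence $\lfp\sys f \leq \lim_i \sys f^i(\vec u) \allsmaller \vec u$. Your argument is more direct: once the first item gives $\lfp\sys f \leq \vec u$, you simply apply $\sys f$ once and use the fixed-point identity $\lfp\sys f = \sys f(\lfp\sys f) \leq \sys f(\vec u) \allsmaller \vec u$. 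This avoids any appeal to topological continuity or convergence of the iterates and works verbatim over $\exnonnegreals^J$. The paper's detour through the downward iteration does buy something elsewhere in the paper (the same construction reappears in the proof of \Cref{thm:lower}), but for this lemma your one-step bootstrap is the cleaner argument.
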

\begin{proof}
    The first item is an immediate consequence of the Knaster-Tarski fixed point theorem.
    For the second item, note that $\vec u \allgreater \sys f(\vec u) \geq \sys f^1(\vec u) \geq \sys f^2(\vec u) \ldots$ is a monotonic and bounded sequence, hence $\lim_{i \to \infty} \sys f^i(\vec u)$ exists.
    Since $\sys f$ is a continuous function, $\sys f(\lim_{i \to \infty} \sys f^i(\vec u)) = \lim_{i \to \infty} \sys f^{i+1}(\vec u) = \lim_{i \to \infty} \sys f^{i}(\vec u)$, so the latter term is a fixed point of $\sys f$.
    Thus $\lfp\sys f \leq \lim_{i \to \infty} \sys f^{i}(\vec u) \allsmaller \vec u$.
\end{proof}

We often denote PPS in the form of an equation system $\vec x = \sys f(\vec x)$.
In this paper we are mostly concerned with PPS of type \eqref{eq:pdasys}, as well as with \emph{linear} PPS that can be written in the form $\vec x = \genmat \vec x + \vec b$, for some matrix $\genmat \in \nonnegreals^{J \times J}$ and vector $\vec b \in \nonnegreals^J$ (cf.~\Cref{thm:explicitsollinsys}).

A PPS $\sys f$ is called \emph{clean} if $\lfp\sys f \allgreater \vec 0$.
General PPS may be not clean, e.g., $x = 2x$ is not clean.
However, it is possible to compute, in linear time~\cite{DBLP:journals/jacm/EtessamiY09,DBLP:journals/siamcomp/EsparzaKL10}, a \emph{cleaned-up} variant $\syscl f$ of $\sys f$ by identifying and removing variables that are assigned zero in the lfp.
Formally, the cleaned-up system $\syscl f$ satisfies the property that
(i) $\syscl f$ is clean,
(ii) $\syscl f$ is indexed by a subset $J' \subseteq J$,
(iii) for all $j \in J'$, we have $(\lfp{\syscl f})_j = (\lfp\sys f)_j$, and
(iv) for all $j \in J \setminus J'$, $(\lfp{\sys f})_j = 0$.
For example, the cleaned-up version of $(x,y) = (2x, 1 + x + y)$ is just $y = 1 + y$ (whose lfp is $\infty$, i.e., it is not feasible).
\emph{Throughout the paper we consistently use the symbol $\sys f$ for general, not necessarily clean PPS, and we reserve the symbol $\syscl f$ for clean PPS.}

Let $\sys{f}$ be a PPS indexed by $J$.
The \emph{Jacobi matrix} $\jac{\sys{f}}$ of $\sys f$ is a $J {\times} J$-matrix whose entries are the partial derivatives $(\jac{\sys{f}})_{i,j} = \frac{\partial}{\partial \vec x_j} \sys{f}_i$ for $i,j \in J$.
The Jacobi matrix with all entries \emph{evaluated} at point $\vec u \in \nonnegreals^J$ is denoted $\jacat{\sys{f}}{\vec u}$.

For a finite matrix $\genmat \in \reals^{J \times J}$, we recall that $\lambda \in \mathbb{C}$ is an \emph{eigenvalue} if there exists $\vec v \neq \vec 0$ such that $\genmat \vec v = \lambda \vec v$.
The \emph{spectral radius} $\specrad{\genmat}$ is the largest absolute value of the at most $|J|$  eigenvalues of $\genmat$.

%\begin{theorem}[{\cite[Thm.\ 4.1 (2)]{DBLP:journals/siamcomp/EsparzaKL10}}]
%    \label{thm:esparzasiamcomp}
%    Let $\sys{f}$ be a clean and feasible PPS.
%    Then for all $\vec u \allsmaller \lfp\sys f$ the matrix $\idmat - \jacat{\sys{f}}{\vec u}$ is non-singular and
%    \[
%        (\idmat - \jacat{\sys{f}}{\vec u})^{-1}
%        ~=~
%        (\jacat{\sys{f}}{\vec u})^*
%        ~\geq~
%        0
%        ~.
%    \]
%\end{theorem}

The next theorem plays a central role in the analysis of Newton's method for PPS (see also \cite[Lem.~6.5]{DBLP:journals/jacm/EtessamiY09}), and turns out very useful for our purposes as well.

\begin{theorem}[See~{\cite[Theorem 4.1 - 2]{DBLP:journals/siamcomp/EsparzaKL10}}]
    \label{thm:specradleq1}
    For any clean and feasible PPS $\syscl f$ it holds that
    $\forall \vec x \allsmaller \lfp\syscl f \colon \specrad{\jacat{\syscl f}{\vec x}} < 1$, and
    $\specrad{\jacat{\syscl f}{\lfp\syscl f}} \leq 1$.
\end{theorem}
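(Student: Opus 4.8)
The plan is to reconstruct the standard Perron--Frobenius proof, using two ingredients that are specific to positive polynomial systems. First I would record the \emph{non-negativity and monotonicity of the Jacobian}: since every $(\syscl f)_i$ is a polynomial with coefficients in $\nonnegreals$, so is each partial derivative, hence $\jacat{\syscl f}{\vec x}$ is an entrywise non-negative matrix for all $\vec x \geq \vec 0$, and $\vec x \leq \vec y$ implies $\jacat{\syscl f}{\vec x} \leq \jacat{\syscl f}{\vec y}$ entrywise. By Perron--Frobenius, $\specrad{\jacat{\syscl f}{\vec x}}$ is then itself an eigenvalue, there is a non-negative left eigenvector for it, and $\rho$ is monotone along $\vec x$. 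Second I would record the \emph{convexity sandwich}: each $(\syscl f)_i$ is convex on $\nonnegreals^J$, so for $\vec 0 \leq \vec x \leq \vec y$,
$$ \jacat{\syscl f}{\vec x}(\vec y - \vec x) ~\leq~ \syscl f(\vec y) - \syscl f(\vec x) ~\leq~ \jacat{\syscl f}{\vec y}(\vec y - \vec x) . $$

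For the bound at the least fixed point, write $\vec q = \lfp\syscl f$, which is finite since $\syscl f$ is feasible and strictly positive since it is clean. Let $\vec\kappa^{(n)} := \syscl f^n(\vec 0)$ be the Kleene iterates, which increase to $\vec q$, and set $\vec d_n := \vec q - \vec\kappa^{(n)} \geq \vec 0$, so $\vec d_n \to \vec 0$. Using $\syscl f(\vec q) = \vec q$ and the left inequality of the sandwich, $\vec d_{n+1} = \syscl f(\vec q) - \syscl f(\vec\kappa^{(n)}) \geq \jacat{\syscl f}{\vec\kappa^{(n)}}\,\vec d_n$; iterating from a fixed index $m$, and using monotonicity of the Jacobian, gives $\vec d_{m+k} \geq \jacat{\syscl f}{\vec\kappa^{(m)}}^{\,k}\,\vec d_m$ for all $k$. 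If $\specrad{\jacat{\syscl f}{\vec q}} > 1$, then by continuity of the spectral radius $\specrad{\jacat{\syscl f}{\vec\kappa^{(m)}}} > 1$ for $m$ large; testing the inequality against a non-negative left Perron eigenvector $\vec w$ of $\jacat{\syscl f}{\vec\kappa^{(m)}}$ yields $\vec w^\top \vec d_{m+k} \geq \specrad{\jacat{\syscl f}{\vec\kappa^{(m)}}}^{\,k}\, \vec w^\top \vec d_m \to \infty$, contradicting $\vec d_n \to \vec 0$, provided $\vec w^\top \vec d_m > 0$. Hence $\specrad{\jacat{\syscl f}{\lfp\syscl f}} \leq 1$.

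For the strict bound at $\vec x \allsmaller \vec q$, I would instead use the right inequality of the sandwich on Kleene iterates: since $\syscl f(\vec\kappa^{(n)}) = \vec\kappa^{(n+1)} \geq \vec\kappa^{(n)}$, we get $\jacat{\syscl f}{\vec\kappa^{(n)}}(\vec q - \vec\kappa^{(n)}) \leq \vec q - \vec\kappa^{(n+1)} \leq \vec q - \vec\kappa^{(n)}$, i.e. $A \vec v \leq \vec v$ with $A := \jacat{\syscl f}{\vec\kappa^{(n)}} \geq \vec 0$ and $\vec v := \vec q - \vec\kappa^{(n)} \allgreater \vec 0$ (for $n$ large, by cleanness). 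The standard Perron--Frobenius fact that $A \geq \vec 0$, $\vec v \allgreater \vec 0$, $A\vec v \allsmaller \vec v$ imply $\specrad{A} < 1$ (the strictness obtained from one further Kleene step) then gives $\specrad{\jacat{\syscl f}{\vec\kappa^{(n)}}} < 1$. For an arbitrary $\vec x \allsmaller \vec q$ with $\vec x \geq \vec 0$, choose $n$ large enough that $\vec\kappa^{(n)} \allgreater \vec x$ coordinatewise (possible since $\vec\kappa^{(n)} \to \vec q \allgreater \vec x$); then $\vec x \leq \vec\kappa^{(n)}$ and monotonicity of the spectral radius gives $\specrad{\jacat{\syscl f}{\vec x}} \leq \specrad{\jacat{\syscl f}{\vec\kappa^{(n)}}} < 1$. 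Equivalently, and more in the spirit of \Cref{thm:explicitsollinsys}, one may argue that $\sum_{k \geq 0} \jacat{\syscl f}{\vec x}^{\,k}$ converges entrywise by dominating its action on $\vec q - \vec x$, which is the same as $\specrad{\jacat{\syscl f}{\vec x}} < 1$.

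The main obstacle is the bookkeeping for \emph{reducibility and non-strictness}: the Jacobian need not be irreducible, so its Perron eigenvector need not be strictly positive, and the Kleene iterates need not be coordinatewise strictly below $\vec q$ without justification. Making the claims ``$\vec w^\top \vec d_m > 0$'' and ``$\vec v \allgreater \vec 0$'' precise in full generality uses cleanness together with a short structural argument (decomposing the Jacobian along the dependency SCCs of $\syscl f$, or the known fact that for clean PPS the Kleene sequence strictly increases on every non-stationary coordinate). Everything else is routine linear algebra and calculus; since the statement is quoted verbatim from \cite{DBLP:journals/siamcomp/EsparzaKL10}, the remaining reducibility details can be imported from there.
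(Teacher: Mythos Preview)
The paper does not actually prove this theorem: the first statement is simply cited from \cite{DBLP:journals/siamcomp/EsparzaKL10}, and the only argument given in the paper is the one-line observation that the \emph{second} statement follows from the \emph{first} by continuity of the eigenvalues in the matrix entries (take any sequence $\vec x_n \allsmaller \lfp\syscl f$ with $\vec x_n \to \lfp\syscl f$). Your write-up thus goes well beyond what the paper does, essentially reconstructing the proof of the cited reference.

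A comparison is still worth making. For the bound at the least fixed point you argue by contradiction: assume $\rho > 1$ at $\lfp\syscl f$, transfer this to a Kleene iterate by continuity, and obtain divergence of $\vec d_n$. This is correct in spirit but needlessly roundabout --- you are already invoking continuity of the spectral radius, and once the strict bound below $\lfp\syscl f$ is available, the paper's direct limit argument gives $\rho \leq 1$ at $\lfp\syscl f$ in one line, without the contradiction detour or the ``$\vec w^\top \vec d_m > 0$'' worry.

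One concrete wrinkle in your sketch of the strict bound: the claim ``$\vec v = \vec q - \vec\kappa^{(n)} \allgreater \vec 0$ for $n$ large, by cleanness'' is phrased the wrong way around. Cleanness yields $\vec q \allgreater \vec 0$, so $\vec q - \vec\kappa^{(0)} = \vec q \allgreater \vec 0$ holds at $n = 0$; but for \emph{large} $n$ individual coordinates of the Kleene sequence may already have stabilised (e.g.\ if some $(\syscl f)_i$ is a constant), making the corresponding entries of $\vec v$ zero. You flag exactly this reducibility issue in your final paragraph and defer to an SCC decomposition or to \cite{DBLP:journals/siamcomp/EsparzaKL10}, which is the right fix --- but the earlier parenthetical should not suggest that cleanness alone handles it. With that caveat, your outline matches the approach of the cited source; the paper itself simply imports the result.
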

The second statement follows from the first as the eigenvalues of any matrix depend continuously on its entries.

The following variant of Taylor's theorem will be crucial in several proofs throughout the paper:

\begin{lemma}[{Taylor's theorem; see~\cite[Lemma 2.3]{DBLP:journals/siamcomp/EsparzaKL10}}]
    \label{thm:taylor}
    Let $\sys f$ be a PPS.
    For all $\vec u, \vec v \geq \vec 0$,
    \begin{align*}
        \sys f(\vec u) + \jacat{\sys f}{\vec u} \vec v
        ~\leq~
        \sys f(\vec u + \vec v)
        ~\leq~
        \sys f(\vec u) + \jacat{\sys f}{\vec u + \vec v} \vec v
        ~.
    \end{align*}
\end{lemma}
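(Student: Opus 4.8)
The plan is to reduce the multivariate statement to a one–dimensional convexity argument. Since the claimed inequalities are component-wise and both the map $\sys g \mapsto \sys g(\vec u) + \jacat{\sys g}{\vec u}\vec v$ and point evaluation are applied coordinate by coordinate, it suffices to fix an index $i$ and prove the two inequalities for the single polynomial $\sys f_i$. To this end I would introduce the univariate auxiliary function $g\colon [0,1]\to\exnonnegreals$ defined by $g(t)=\sys f_i(\vec u + t\vec v)$, so that $g(0)=\sys f_i(\vec u)$ and $g(1)=\sys f_i(\vec u+\vec v)$. By the chain rule, $g'(t)=\sum_{j}\frac{\partial \sys f_i}{\partial \vec x_j}(\vec u+t\vec v)\cdot\vec v_j=\bigl(\jacat{\sys f}{\vec u+t\vec v}\,\vec v\bigr)_i$, hence $g'(0)=(\jacat{\sys f}{\vec u}\vec v)_i$ and $g'(1)=(\jacat{\sys f}{\vec u+\vec v}\vec v)_i$, and the whole assertion boils down to the single chain of inequalities
\begin{align*}
    g(0) + g'(0) ~\leq~ g(1) ~\leq~ g(0) + g'(1) .
\end{align*}

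The crucial observation is that $g$, regarded as a polynomial in the single variable $t$, has only non-negative coefficients: substituting $\vec x_j = \vec u_j + t\vec v_j$ into a monomial $c\prod_j \vec x_j^{d_j}$ with $c\geq 0$ and expanding via the binomial theorem yields a polynomial in $t$ whose coefficients are sums of products of $c$, the $\vec u_j$, and the $\vec v_j$, all of which are $\geq 0$; summing over the finitely many monomials of $\sys f_i$ preserves this. Consequently $g'' \geq 0$ on $[0,\infty)\supseteq[0,1]$, so $g$ is convex there. Using Taylor's theorem with exact integral remainder (or, equivalently, that a convex $C^1$ function lies above each of its tangents), expanding $g$ around $t=0$ and evaluating at $t=1$ gives $g(1)=g(0)+g'(0)+\int_0^1(1-t)g''(t)\,dt\geq g(0)+g'(0)$, while expanding around $t=1$ and evaluating at $t=0$ gives $g(0)=g(1)-g'(1)+\int_0^1 t\,g''(t)\,dt\geq g(1)-g'(1)$, i.e.\ $g(1)\leq g(0)+g'(1)$. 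Substituting back the identifications from the previous paragraph yields exactly the two inequalities of the lemma for coordinate $i$, and letting $i$ range over $J$ finishes the argument.

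There is no deep obstacle here; the only point needing a little care is the extended-reals setting, i.e.\ the possibility that some entries of $\vec u$ or $\vec v$ equal $\infty$. I would first run the argument above for finite $\vec u,\vec v\in\nonnegreals^J$, where all derivatives and the calculus steps are unambiguous. The general case then follows by a monotone-limit argument: write $\vec u = \lim_n \vec u^{(n)}$ and $\vec v = \lim_n \vec v^{(n)}$ as increasing limits of finite non-negative vectors and use that $\sys f$ and every entry of $\jacat{\sys f}{\cdot}$ are polynomials with non-negative coefficients, hence monotone and continuous from below with respect to the $\exnonnegreals$-arithmetic conventions fixed in \Cref{sec:prelims}; therefore all three terms of the inequality pass to the limit and the inequalities are preserved.
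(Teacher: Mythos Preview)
Your proof is correct and follows essentially the same route as the paper's (which, for this lemma, only proves the companion variant in \Cref{app:taylor} and otherwise cites \cite{DBLP:journals/siamcomp/EsparzaKL10}): reduce to a single polynomial component, pass to the univariate function $g(t)=\sys f_i(\vec u+t\vec v)$, and exploit that all coefficients are non-negative. The only cosmetic difference is that the paper writes $g(1)=g(0)+\int_0^1 g'(s)\,ds$ and bounds the integrand directly via the monotonicity of $\jacat{\sys f}{\cdot}$, whereas you phrase the same fact as $g''\geq 0$ and invoke the second-order Taylor remainder; these are equivalent, and your extra paragraph handling $\infty$-entries by monotone limits is a harmless bit of over-caution (in all uses of the lemma in the paper the vectors are finite).
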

Another variant of \Cref{thm:taylor} that we will use is this:
For $\vec u, \vec v \geq \vec 0$ such that $\vec u - \vec v \geq \vec 0$,
\begin{align*}
    \sys f(\vec u) - \jacat{\sys f}{\vec u} \vec v
    ~\leq~
    \sys f(\vec u - \vec v)
    ~\leq~
    \sys f(\vec u) - \jacat{\sys f}{\vec u - \vec v} \vec v
    ~.
\end{align*}
See \iftoggle{arxiv}{\Cref{app:taylor}}{\cite{arxiv}} for a proof.

\subsubsection*{Non-negative matrices}
We also need some results from the theory of matrices with non-negative entries.
We will mostly apply these results to the Jacobi matrix $\sys f'(\vec u)$ of a PPS $\sys f$, which indeed has non-negative entries for $\vec u \geq \vec 0$.

\begin{theorem}[See~{\cite[Section 10.3, Fact 12e]{nonnegmats}}]
    \label{thm:nonnegmats}
    Let $\genmat \geq 0$ be a finite-dimensional non-negative square matrix.
    The following are equivalent:
    \begin{enumerate}
        \item $\lim_{k \to \infty} \genmat ^k = 0$
        \item $\specrad{\genmat } < 1$
        \item $(\idmat - \genmat)^{-1}$ exists and is non-negative
        \item $\exists \vec v \allgreater \vec 0 \colon \genmat \vec v \allsmaller \vec v$
    \end{enumerate}
    Moreover, under either one of the above conditions, $\sum_{k = 0}^{\infty} \genmat^k = (\idmat - \genmat)^{-1}$.
\end{theorem}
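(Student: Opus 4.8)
The plan is to prove the cycle of implications $1 \Rightarrow 2 \Rightarrow 3 \Rightarrow 4 \Rightarrow 1$ and to read off the ``moreover'' claim from the step $2 \Rightarrow 3$. A convenient feature of this particular cycle is that it avoids the Perron--Frobenius theorem altogether: all the reasoning that actually uses $\genmat \geq 0$ is confined to $2 \Rightarrow 3 \Rightarrow 4 \Rightarrow 1$, whereas $1 \Leftrightarrow 2$ is the classical fact about an arbitrary complex square matrix that $\genmat^k \to 0$ if and only if $\specrad{\genmat} < 1$. For $1 \Rightarrow 2$, a (possibly complex) eigenvector $\vec v \neq \vec 0$ for an eigenvalue $\lambda$ with $|\lambda| \geq 1$ gives $\genmat^k \vec v = \lambda^k \vec v \not\to \vec 0$; for $2 \Rightarrow 1$ one invokes the Jordan normal form of $\genmat$ (or Gelfand's formula $\specrad{\genmat} = \lim_k \|\genmat^k\|^{1/k}$) and observes that the powers of each Jordan block attached to an eigenvalue of modulus $< 1$ tend to zero.

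For $2 \Rightarrow 3$, note first that $\specrad{\genmat} < 1$ implies $1$ is not an eigenvalue of $\genmat$, so $\idmat - \genmat$ is invertible. Rearranging the telescoping identity $\big(\sum_{k=0}^{N}\genmat^k\big)(\idmat - \genmat) = \idmat - \genmat^{N+1}$ yields $\sum_{k=0}^{N}\genmat^k = (\idmat - \genmat)^{-1} - \genmat^{N+1}(\idmat - \genmat)^{-1}$, and letting $N \to \infty$ (using $\genmat^{N+1} \to 0$, which is $2 \Rightarrow 1$) gives both the ``moreover'' identity $\sum_{k=0}^{\infty}\genmat^k = (\idmat - \genmat)^{-1}$ and, since every partial sum is non-negative (because $\genmat \geq 0$ forces $\genmat^k \geq 0$ for all $k$), the non-negativity of $(\idmat - \genmat)^{-1}$.

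For $3 \Rightarrow 4$, put $\vec v := (\idmat - \genmat)^{-1}\vec 1$, which is non-negative by hypothesis. From $(\idmat - \genmat)\vec v = \vec 1$ we get $\vec v = \vec 1 + \genmat \vec v \geq \vec 1 \allgreater \vec 0$ (as $\genmat \vec v \geq \vec 0$), and also $\genmat \vec v = \vec v - \vec 1 \allsmaller \vec v$, so $\vec v$ witnesses statement~4. For $4 \Rightarrow 1$, given such a $\vec v$, the finitely many ratios $(\genmat\vec v)_i/\vec v_i$ are all strictly below $1$, so there is $\eps > 0$ with $\genmat \vec v \leq (1-\eps)\vec v$; by monotonicity of $\genmat$ this iterates to $\genmat^k \vec v \leq (1-\eps)^k\vec v \to \vec 0$. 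Since all entries of $\vec v$ are strictly positive, each standard basis vector $\vec e_j$ satisfies $\vec e_j \leq c_j \vec v$ for some $c_j > 0$, hence $\vec 0 \leq \genmat^k \vec e_j \leq c_j\,\genmat^k\vec v \to \vec 0$; as this holds for every column index $j$, we conclude $\genmat^k \to 0$.

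I expect no deep obstacle here — the statement is a textbook fact about non-negative matrices — but the step needing the most care is the classical direction $\specrad{\genmat} < 1 \Rightarrow \genmat^k \to 0$, which genuinely requires the Jordan form (or Gelfand's formula), and the bookkeeping in $4 \Rightarrow 1$, where the \emph{strictness} of $\allsmaller$ is indispensable: it is precisely what lets one pass from a componentwise inequality to a uniform geometric contraction rate (for the non-strict variant $\genmat\vec v \leq \vec v$ the conclusion fails, e.g.\ $\genmat = \idmat$).
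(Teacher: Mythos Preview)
Your proof is correct. Note, however, that the paper does not actually prove this theorem: it is stated as a known fact from the theory of non-negative matrices and merely cited from the reference \texttt{nonnegmats} (Section~10.3, Fact~12e). There is therefore nothing in the paper to compare your argument against. Your self-contained cycle $1 \Rightarrow 2 \Rightarrow 3 \Rightarrow 4 \Rightarrow 1$ is a clean and standard way to establish the result, and your derivation of the Neumann-series identity along the way is exactly how one usually obtains the ``moreover'' clause.
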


We will moreover use the following results concerning the spectral radius of a non-negative matrix:

\begin{lemma}
    \label{thm:nonnegmatsmiscfacts}
    Let $\genmat \geq 0$ be a non-negative finite square matrix.
    \begin{itemize}
        \item The spectral radius $\specrad{\genmat}$ is an eigenvalue of $\genmat$~\cite[Section 10.3, Fact 1a]{nonnegmats}.
        \item If $0 \leq B \leq \genmat$, then $\specrad{B} \leq \specrad{\genmat}$~\cite[Section 10.3, Fact 7]{nonnegmats}.
        \item Let $c \in \nonnegreals$.
        Then $(\exists \vec u \allgreater \vec 0 \colon \genmat \vec u \leq c \vec u) \implies \specrad{A} \leq c$~\cite[Section 10.3 Fact 6a]{nonnegmats}.
    \end{itemize}
\end{lemma}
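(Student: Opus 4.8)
All three items are classical facts from Perron--Frobenius theory, so the plan is to give short arguments: two are routine, and one (the first item) carries the real content; where possible I would reduce to results already recorded in the excerpt.

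I would dispatch the second item --- monotonicity of the spectral radius --- via Gelfand's formula $\specrad{\genmat} = \lim_{k \to \infty}\norm{\genmat^k}^{1/k}$ together with any matrix norm that is monotone on non-negative matrices, e.g.\ the entrywise sum $\sum_{i,j}\genmat_{i,j}$ or the maximum row sum. From $0 \leq B \leq \genmat$ a trivial induction gives $0 \leq B^k \leq \genmat^k$ entrywise, hence $\norm{B^k} \leq \norm{\genmat^k}$; taking $k$-th roots and the limit yields $\specrad{B} \leq \specrad{\genmat}$. For the third item I would reduce to \Cref{thm:nonnegmats}: if $c = 0$, then $\genmat \vec u \leq \vec 0$ together with $\vec u \allgreater \vec 0$ and $\genmat \geq 0$ forces every entry of $\genmat$ to vanish, so $\specrad{\genmat} = 0 = c$; and if $c > 0$, then for every $\eps > 0$ we have $c\vec u \allsmaller (c+\eps)\vec u$ (since $\vec u \allgreater \vec 0$), hence $\tfrac{1}{c+\eps}\genmat\,\vec u \allsmaller \vec u$, so by the equivalence of items~4 and~2 in \Cref{thm:nonnegmats} we get $\specrad{\tfrac{1}{c+\eps}\genmat} < 1$, i.e.\ $\specrad{\genmat} = (c+\eps)\,\specrad{\tfrac{1}{c+\eps}\genmat} < c+\eps$; letting $\eps \downarrow 0$ gives $\specrad{\genmat} \leq c$.

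The first item --- that $\specrad{\genmat}$ is itself an eigenvalue of $\genmat$ --- is the step that requires care, and I would prove it by a positivity-perturbation argument. For $\eps > 0$, put $\genmat_\eps = \genmat + \eps E$ with $E$ the all-ones matrix; then $\genmat_\eps$ is strictly positive, and I would invoke the Perron theorem for strictly positive matrices to get that $\specrad{\genmat_\eps}$ is an eigenvalue of $\genmat_\eps$ with a strictly positive eigenvector $\vec v_\eps$, normalized to $\norm{\vec v_\eps} = 1$. Continuity of the eigenvalues in the matrix entries gives $\specrad{\genmat_\eps} \to \specrad{\genmat}$ as $\eps \to 0$, and since the $\vec v_\eps$ lie in a compact set a subsequence converges to some $\vec v \geq \vec 0$ with $\vec v \neq \vec 0$; passing to the limit in $\genmat_\eps \vec v_\eps = \specrad{\genmat_\eps}\vec v_\eps$ yields $\genmat \vec v = \specrad{\genmat}\vec v$. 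The main obstacle is precisely the Perron theorem for positive matrices used here: in a paper of this kind I would simply cite it (as the authors do, via \cite{nonnegmats}), but a self-contained route is to derive it from the Collatz--Wielandt characterization $\specrad{\genmat_\eps} = \max\{\lambda \geq 0 \mid \exists \vec x \allgreater \vec 0,\ \genmat_\eps \vec x \geq \lambda \vec x\}$, showing the maximum is attained and that any maximizer is a strictly positive eigenvector --- that is where all the genuine work sits.
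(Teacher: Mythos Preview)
Your proofs are correct. Note, however, that the paper does not prove this lemma at all: each of the three bullets is stated with a direct citation to \cite{nonnegmats} and no further argument is given. So there is nothing to compare against on the paper's side --- you have supplied strictly more than the authors do.

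A couple of minor remarks on your arguments. For the second item, the induction step $0 \leq B^k \leq \genmat^k$ relies on the fact that for non-negative matrices $0 \leq X \leq Y$ and $0 \leq U \leq V$ imply $XU \leq YV$ entrywise; this is routine but worth stating since the entrywise order is not multiplicatively monotone for general matrices. For the third item, your reduction to \Cref{thm:nonnegmats} is clean and entirely in the spirit of the paper, which uses that theorem as its workhorse for spectral-radius facts. For the first item, your perturbation-and-compactness argument is the standard textbook route to the non-negative case of Perron--Frobenius; as you note, the genuine content sits in the strictly positive case, and in a paper like this one would indeed simply cite it rather than reprove Collatz--Wielandt.
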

\section{Certifying Upper Bounds}
\label{sec:upper}

In this section, we study certificates for upper bounds on the lfp of a general PPS.
We do not (yet) consider pPDA.

\begin{definition}[Non-singular PPS]
    A feasible PPS $\sys f$ is called \emph{non-singular} if the matrix $\idmat - \jacat{\sys{f}}{\lfp{\sys{f}}}$ has an inverse.
\end{definition}

\begin{lemma}
    \label{thm:propsOfJacobi}
    Let $\syscl f$ be feasible, non-singular, and clean.
    Then the conditions from \Cref{thm:nonnegmats} apply to $\jacat{\syscl f}{\lfp\syscl f} \geq 0$.
\end{lemma}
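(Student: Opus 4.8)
The plan is to verify condition~2 of \Cref{thm:nonnegmats} for the matrix $\genmat := \jacat{\syscl f}{\lfp\syscl f}$, i.e., to show $\specrad{\genmat} < 1$; since $\syscl f$ is clean we have $\lfp\syscl f \geq \vec 0$, so $\genmat \geq 0$ and \Cref{thm:nonnegmats} indeed applies to it once strict inequality of the spectral radius is established. By \Cref{thm:specradleq1}, feasibility and cleanliness of $\syscl f$ already give the non-strict bound $\specrad{\genmat} \leq 1$, so the only thing left is to rule out $\specrad{\genmat} = 1$.

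Here the non-singularity assumption enters. The key observation is the Perron--Frobenius-type fact recalled in \Cref{thm:nonnegmatsmiscfacts} (first bullet): for the non-negative matrix $\genmat$, the spectral radius $\specrad{\genmat}$ is itself an eigenvalue of $\genmat$. Hence if $\specrad{\genmat} = 1$, then $1$ is an eigenvalue of $\genmat$, so there is $\vec v \neq \vec 0$ with $\genmat\vec v = \vec v$, equivalently $(\idmat - \genmat)\vec v = \vec 0$. This contradicts the assumed invertibility of $\idmat - \jacat{\syscl f}{\lfp\syscl f} = \idmat - \genmat$. Therefore $\specrad{\genmat} < 1$.

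Having established $\specrad{\genmat} < 1$, i.e., condition~2 of \Cref{thm:nonnegmats}, the equivalence stated in that theorem yields conditions~1, 3, and~4 as well, which is exactly the claim. I do not anticipate a real obstacle here: the argument is essentially the standard equivalence \enquote{$1$ is not an eigenvalue of $\genmat$ $\iff$ $\idmat - \genmat$ is invertible}, combined with the two external facts (the upper bound $\specrad{\genmat} \le 1$ from \Cref{thm:specradleq1} and $\specrad{\genmat}$ being an eigenvalue from \Cref{thm:nonnegmatsmiscfacts}). The only point worth stating explicitly is that cleanliness is what guarantees $\genmat$ is a genuine non-negative matrix so that both of those facts are applicable.
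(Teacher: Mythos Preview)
Your proof is correct and follows essentially the same route as the paper's: use \Cref{thm:specradleq1} for $\specrad{\genmat} \leq 1$, then combine non-singularity with the fact from \Cref{thm:nonnegmatsmiscfacts} that $\specrad{\genmat}$ is itself an eigenvalue to rule out equality. One small remark: cleanliness is not what makes $\genmat \geq 0$ (that holds for any PPS evaluated at a non-negative point, and $\lfp\syscl f \geq \vec 0$ always); rather, cleanliness is required because \Cref{thm:specradleq1} is stated only for clean PPS.
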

\begin{proof}
    Since $I - \syscl f'(\lfp\syscl f)$ is non-singular, $1$ is \emph{not} an eigenvalue of $\syscl f'(\lfp\syscl f)$.
    By \Cref{thm:specradleq1}, $\specrad{\syscl f'(\lfp\syscl f)} \leq 1$.
    Since the spectral radius of a non-negative matrix is an eigenvalue (\Cref{thm:nonnegmatsmiscfacts}), it follows that $\specrad{\syscl f'(\lfp\syscl f)} < 1$, which is the second condition of \Cref{thm:nonnegmats}.
\end{proof}

In \Cref{thm:propsOfJacobi} it is essential that $\syscl f$ is clean:
Consider e.g. $f \colon x = 2x$.
Then $1 - \jacat{f}{\lfp f} = 1 - 2 = -1$ which has inverse $-1 < 0$.
This is the reason why we often consider clean PPS throughout the paper.

The following generalizes a part of \cite[Lem.~2]{tacas}.

\begin{mdframed}
\begin{theorem}[Certificates for upper bounds on the lfp]
    \label{thm:certificates}
    Let $\syscl f$ be a feasible and clean PPS indexed by $J$.
    Then
    $\syscl f$ is non-singular
    if and only if
    $\exists \vec u \in \reals_{> 0}^J \colon \syscl f(\vec u) \allsmaller \vec u$ (and thus $\lfp\syscl f \allsmaller \vec u$ by \Cref{thm:inductiveUpperBounds}.)
    
    In this case, there exists a \emph{rational} $\vec u$ arbitrarily close to the lfp $\lfp\syscl f$, i.e., $\forall \eps > 0 \colon \exists \vec u \in \rats_{> 0}^J \colon \syscl f(\vec u) \allsmaller \vec u \land \maxnorm{\lfp\syscl f - \vec u} < \eps$.
    %
%    Recall that every $\vec u \allgreater \vec 0$ s.t. $\syscl f(\vec u) \allsmaller \vec u$ satisfies $\lfp\syscl f \allsmaller \vec u$ by \Cref{thm:inductiveUpperBounds}, i.e., $\vec u$ is a (strict) upper bound on the lfp.
\end{theorem}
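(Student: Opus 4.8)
The plan is to read the theorem as a translation, via Taylor's theorem, between the \emph{linear} characterisations of non-singularity collected in \Cref{thm:propsOfJacobi} and \Cref{thm:nonnegmats} and the \emph{nonlinear} inequality $\syscl f(\vec u)\allsmaller\vec u$. Throughout write $\genmat := \jacat{\syscl f}{\lfp\syscl f}\geq\vec 0$. The one genuinely non-trivial ingredient is the \emph{strict} spectral bound $\specrad{\genmat}<1$: \Cref{thm:specradleq1} alone yields only $\specrad{\genmat}\leq 1$, and this is exactly where cleanliness of $\syscl f$ is indispensable; but \Cref{thm:propsOfJacobi} already upgrades the bound to strictness, so all conditions of \Cref{thm:nonnegmats} hold for $\genmat$. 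Everything else is bookkeeping with strict, coordinate-wise inequalities.

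For the forward implication I would start from a vector $\vec v\allgreater\vec 0$ with $\genmat\vec v\allsmaller\vec v$ (the fourth condition of \Cref{thm:nonnegmats}) and set $\vec u_\eps := \lfp\syscl f+\eps\vec v$. Since the entries of $\jac{\syscl f}$ are polynomials, $\eps\mapsto\jacat{\syscl f}{\vec u_\eps}\vec v$ is continuous and equals $\genmat\vec v\allsmaller\vec v$ at $\eps=0$, so, $J$ being finite, $\jacat{\syscl f}{\vec u_\eps}\vec v\allsmaller\vec v$ persists for all sufficiently small $\eps>0$. For such $\eps$, the upper bound of \Cref{thm:taylor} (with base point $\lfp\syscl f$ and increment $\eps\vec v$), combined with $\syscl f(\lfp\syscl f)=\lfp\syscl f$, gives
\[
    \syscl f(\vec u_\eps) ~\leq~ \lfp\syscl f + \eps\,\jacat{\syscl f}{\vec u_\eps}\vec v ~\allsmaller~ \lfp\syscl f + \eps\vec v ~=~ \vec u_\eps ,
\]
while $\vec u_\eps\allgreater\vec 0$ because $\lfp\syscl f\geq\vec 0$ and $\eps\vec v\allgreater\vec 0$. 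This proves the implication and additionally exhibits a (generally irrational) witness with $\maxnorm{\lfp\syscl f-\vec u_\eps}=\eps\maxnorm{\vec v}$ arbitrarily small.

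For the converse, given $\vec u\allgreater\vec 0$ with $\syscl f(\vec u)\allsmaller\vec u$, I would invoke \Cref{thm:inductiveUpperBounds} to get $\lfp\syscl f\allsmaller\vec u$, set $\vec w := \vec u-\lfp\syscl f\allgreater\vec 0$, and apply the lower bound of \Cref{thm:taylor} (base point $\lfp\syscl f$, increment $\vec w$) to obtain $\syscl f(\vec u)\geq\lfp\syscl f+\genmat\vec w$; together with $\syscl f(\vec u)\allsmaller\lfp\syscl f+\vec w$ this forces $\genmat\vec w\allsmaller\vec w$, the fourth condition of \Cref{thm:nonnegmats}, whence $(\idmat-\genmat)^{-1}$ exists, i.e.\ $\syscl f$ is non-singular.

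Finally, to rationalise the witness I would fix $\eps>0$, use the forward direction to pick $\delta>0$ with $\syscl f(\vec u_\delta)\allsmaller\vec u_\delta$ and $\maxnorm{\lfp\syscl f-\vec u_\delta}<\eps/2$, where $\vec u_\delta=\lfp\syscl f+\delta\vec v$, and then exploit that $\vec x\mapsto\syscl f(\vec x)-\vec x$ is continuous and strictly negative in every coordinate at the positive point $\vec u_\delta$: since $J$ is finite there is a radius $r\in(0,\eps/2)$ such that $\maxnorm{\vec x-\vec u_\delta}<r$ implies both $\syscl f(\vec x)\allsmaller\vec x$ and $\vec x\allgreater\vec 0$. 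Density of $\rats$ in $\reals$ then supplies a rational $\vec u\in\rats_{>0}^J$ in this ball, and $\maxnorm{\lfp\syscl f-\vec u}\leq\maxnorm{\lfp\syscl f-\vec u_\delta}+\maxnorm{\vec u_\delta-\vec u}<\eps$. The only point needing care throughout is keeping every inequality strict and coordinate-wise and preserving positivity when passing to the rational point; the substantive content is entirely delegated to \Cref{thm:propsOfJacobi} and \Cref{thm:taylor}.
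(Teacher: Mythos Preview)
Your proof is correct and follows essentially the same route as the paper's: both directions hinge on \Cref{thm:taylor} to translate between the nonlinear condition $\syscl f(\vec u)\allsmaller\vec u$ and the linear condition $\jacat{\syscl f}{\lfp\syscl f}\vec v\allsmaller\vec v$, with \Cref{thm:propsOfJacobi} and \Cref{thm:nonnegmats} supplying the latter from non-singularity (and vice versa), and continuity plus density of $\rats$ handling the rational witness. The paper's argument is organised identically, differing only in notation (it uses $\delta$ where you use $\eps$ for the scaling factor) and in being slightly terser about the rationalisation step.
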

\end{mdframed}

\begin{proof}[Proof of \Cref{thm:certificates}]
    Assume $\syscl f$ is non-singular.
    By the right inequality of \Cref{thm:taylor}, $\syscl f(\lfp\syscl f + \vec v) \leq \lfp\syscl f + \jacat{\syscl f}{\lfp\syscl f + \vec v}\vec v$ for all $\vec v \geq \vec 0$.
    By \Cref{thm:propsOfJacobi} and \Cref{thm:nonnegmats}, $\jacat{\syscl f}{\lfp\syscl f}\vec v \allsmaller \vec v$ for some $\vec v \allgreater \vec 0$.
    Since the function $\vec x \mapsto \jacat{\syscl f}{\vec x}$ is continuous, we can choose $\delta > 0$ s.t.\ $\jacat{\syscl f}{\lfp\syscl f + \delta \vec v}\vec v \allsmaller \vec v$ still holds.
    Now
    \begin{align}
        \syscl f(\lfp\syscl f + \delta \vec v)
        ~\leq~
        \lfp\syscl f + \delta \jacat{\syscl f}{\lfp\syscl f + \delta \vec v} \vec v
        ~\allsmaller~
        \lfp\syscl f + \delta \vec v
        ~,
    \end{align}
    so the claim holds with $\vec u := \lfp\syscl f + \delta \vec v \allgreater \vec 0$.
    Furthermore, note that we can choose $\delta > 0$ sufficiently small s.t.\ $\maxnorm{\vec u - \lfp\syscl f} < \eps$ for any given $\eps > 0$.
    Since $\syscl f$ is continuous, $\syscl f(\vec u') \allsmaller \vec u'$ holds for all $\vec u'$ in an open ball around $\vec u$; such a ball contains rational vectors.
    
    For the other direction assume that there exists $\vec u \allgreater \vec 0$ with $\syscl f(\vec u) \allsmaller \vec u$.
    By \Cref{thm:inductiveUpperBounds}, $\lfp \syscl f \allsmaller \vec u$, and so $\vec u = \lfp\syscl f + \vec v$ for some $\vec v \allgreater \vec 0$.
    The left inequality of \Cref{thm:taylor} yields:
    \begin{align}
        \label{eq:proofNonsingIffCert}
        \lfp\syscl f + \jacat{\syscl f}{\lfp\syscl f} \vec v
        ~\leq~
        \syscl f(\lfp\syscl f + \vec v)
        ~\allsmaller~
        \lfp\syscl f + \vec v
        ~.
    \end{align}
    As $\syscl f$ is feasible, $\lfp\syscl f \allsmaller \vec\infty$, and thus $\jacat{\syscl f}{\lfp\syscl f} \vec v \allsmaller \vec v$ by \eqref{eq:proofNonsingIffCert}.
    The latter statement is the last condition from \Cref{thm:nonnegmats}, so $\idmat - \jacat{\syscl f}{\lfp\syscl f}$ has an inverse (which is also non-negative).
\end{proof}
We show in \Cref{sec:cpast} that in the case of PPS associated with pPDA, non-singularity is equivalent to cPAST.

\section{Certifying Lower Bounds}
\label{sec:lower}

In this section we show that it is also possible to certify lower bounds in general PPS.
The contents of this section are not crucial for understanding the subsequent sections.

Recall that the simple induction principle from \Cref{thm:inductiveUpperBounds} does \emph{not} work for lower bounds on the lfp.
Indeed, if $f \colon L \to L$ is a monotonic function and $L$ a complete lattice, then we only have the \enquote{co-induction} rule
\[
    \forall l \in L \colon\quad  l \leq f(l) ~\implies~ l \leq \gfp f ~,
\]
where $\gfp f$ is the \emph{greatest} fixed point of $f$.
As a consequence, the implication $l \leq f(l) \implies l \leq \lfp f$ holds for all $l \in L$ if and only if $f$ has a \emph{unique} fixed point.
However, a PPS $\sys f$ may have multiple fixed points in general.
%For example, the one-variable PPS
%$x = f(x) = \frac{x^2}{2} + \frac{x}{2} + \frac{1}{8}$
%satisfies $f(1) = 9/8 \geq 1$, but $\lfp f = 1/2 < 1$.
For example, the one-variable PPS
$x = f(x) = x^2 + \frac{1}{5}$
satisfies $1 \leq f(1) = \frac 6 5$, but $\lfp f = \frac 1 2 - \frac{1}{2 \sqrt 5} < 1$.

We circumvent this problem by providing a co-inductive point $\vec l$ along with a (strict) inductive upper bound $\vec u$:

\begin{mdframed}
\begin{theorem}[Certificates for lower bounds on the lfp]
    \label{thm:lower}
    Let $\sys f$ be a PPS indexed by $J$.
    Then for all $\vec l, \vec u \in \nonnegreals^J$,
    \[
        \vec l \leq \sys{f}(\vec l) \land \vec l \leq \vec u \land \sys{f}(\vec u) \allsmaller \vec u
        ~\implies~
        \vec l \leq \lfp{\sys f} \allsmaller \vec u
        ~.
    \]
    Moreover, if $\sys f$ is feasible, non-singular and clean, then for all $\eps > 0$ there exist $\vec l, \vec u \in \nonnegrats^J$ satisfying the above conditions such that $\maxnorm{\vec l - \vec u} \leq \eps$.
\end{theorem}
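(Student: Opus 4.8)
\emph{Soundness.} The plan is to treat the two assertions — soundness of the certificate and its completeness — separately. The inequality $\lfp\sys f \allsmaller \vec u$ is precisely the second item of \Cref{thm:inductiveUpperBounds} applied to $\sys f(\vec u) \allsmaller \vec u$; this also shows $\sys f$ is feasible, so $\lfp\sys f$ is finite and its Jacobi matrix is well-defined. The nontrivial part is $\vec l \leq \lfp\sys f$, which does \emph{not} follow from the co-induction rule (that would only give $\vec l \leq \gfp\sys f$). The idea is to use the strict inductive point $\vec u$ to show that $\lfp\sys f$ is the \emph{unique} fixed point of $\sys f$ in the region $\{\vec x : \vec 0 \leq \vec x \allsmaller \vec u\}$. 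Concretely, set $\vec w := \lim_{i \to \infty} \sys f^i(\vec u)$; this limit exists because the sequence is decreasing (by $\sys f(\vec u) \leq \vec u$ and monotonicity) and bounded below by $\vec 0$, it is a fixed point of $\sys f$ by continuity, and $\vec w \allsmaller \vec u$ since $\vec w \leq \sys f(\vec u) \allsmaller \vec u$. Writing $\vec w = \lfp\sys f + \vec e$ with $\vec e \geq \vec 0$, the right inequality of \Cref{thm:taylor} at $\lfp\sys f$ with increment $\vec e$, together with $\sys f(\lfp\sys f) = \lfp\sys f$ and $\sys f(\vec w) = \vec w$, yields $\vec e \leq \jacat{\sys f}{\vec w}\vec e$, i.e.\ $(\idmat - \jacat{\sys f}{\vec w})\vec e \leq \vec 0$. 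On the other hand, applying the left inequality of \Cref{thm:taylor} at $\vec w$ with increment $\vec u - \vec w \allgreater \vec 0$ and using $\sys f(\vec u) \allsmaller \vec u$ gives $\jacat{\sys f}{\vec w}(\vec u - \vec w) \allsmaller \vec u - \vec w$; by \Cref{thm:nonnegmats} (equivalence of conditions~4 and~3) the matrix $\idmat - \jacat{\sys f}{\vec w}$ is then invertible with non-negative inverse. Multiplying $(\idmat - \jacat{\sys f}{\vec w})\vec e \leq \vec 0$ by that inverse gives $\vec e \leq \vec 0$, hence $\vec e = \vec 0$ and $\vec w = \lfp\sys f$. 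Finally, from $\vec l \leq \vec u$ and monotonicity we get $\sys f^i(\vec l) \leq \sys f^i(\vec u)$ for all $i$, and from $\vec l \leq \sys f(\vec l)$ we get $\vec l \leq \sys f^i(\vec l)$; combining, $\vec l \leq \sys f^i(\vec u) \to \vec w = \lfp\sys f$.

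\emph{Completeness.} Now assume $\sys f$ is feasible, non-singular, and clean, and fix $\eps > 0$. A rational inductive upper bound $\vec u$ with $\sys f(\vec u) \allsmaller \vec u$ and $\maxnorm{\lfp\sys f - \vec u} < \eps/2$ is provided directly by \Cref{thm:certificates}. For the lower bound I would perturb $\lfp\sys f$ downward in a suitable direction. By \Cref{thm:propsOfJacobi} the conditions of \Cref{thm:nonnegmats} hold for $\jacat{\sys f}{\lfp\sys f}$, so there is $\vec w \allgreater \vec 0$ with $\jacat{\sys f}{\lfp\sys f}\vec w \allsmaller \vec w$. Since $\sys f$ is clean, $\lfp\sys f \allgreater \vec 0$, so for $\delta > 0$ small enough $\lfp\sys f - \delta\vec w \allgreater \vec 0$, and the subtraction variant of \Cref{thm:taylor} gives
\[
  \sys f(\lfp\sys f - \delta\vec w) ~\geq~ \sys f(\lfp\sys f) - \delta\,\jacat{\sys f}{\lfp\sys f}\vec w ~\allgreater~ \lfp\sys f - \delta\vec w ~,
\]
so $\vec l_0 := \lfp\sys f - \delta\vec w$ satisfies the \emph{strict} inequality $\sys f(\vec l_0) \allgreater \vec l_0$. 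By continuity of $\sys f$, $\sys f(\vec x) \allgreater \vec x$ holds on an open ball around $\vec l_0$, which contains a rational point $\vec l$; shrinking $\delta$ and the ball we can additionally ensure $\vec l \allsmaller \lfp\sys f$ and $\maxnorm{\lfp\sys f - \vec l} < \eps/2$. Then $\vec l, \vec u$ are rational, $\vec l \leq \sys f(\vec l)$, $\sys f(\vec u) \allsmaller \vec u$, $\vec l \allsmaller \lfp\sys f \allsmaller \vec u$ (so in particular $\vec l \leq \vec u$), and $\maxnorm{\vec l - \vec u} \leq \maxnorm{\vec l - \lfp\sys f} + \maxnorm{\lfp\sys f - \vec u} < \eps$.

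\emph{Main obstacle.} The delicate step is the soundness direction: upgrading the purely co-inductive hypothesis $\vec l \leq \sys f(\vec l)$ to $\vec l \leq \lfp\sys f$. This forces one to rule out fixed points strictly between $\lfp\sys f$ and $\vec u$, which is exactly where the Jacobi-matrix/spectral-radius machinery (\Cref{thm:taylor} combined with the non-negative-matrix facts of \Cref{thm:nonnegmats}) enters; keeping track of the two directions of Taylor's theorem evaluated at two different points ($\lfp\sys f$ and $\vec w$) is the part that requires the most care. For completeness, the only subtlety is that the lfp is generally irrational, so one cannot take $\vec l = \lfp\sys f$; the strictness $\sys f(\vec l_0) \allgreater \vec l_0$ obtained from the perturbation is what makes the co-inductive condition robust enough to be met by a nearby rational vector.
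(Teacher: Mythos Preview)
Your proof is correct and follows essentially the same route as the paper: construct the limit $\vec w=\lim_i \sys f^i(\vec u)$, use Taylor's inequality in one direction to get $\sys f'(\vec w)(\vec u-\vec w)\allsmaller \vec u-\vec w$ and in the other to get $\vec e\leq \sys f'(\vec w)\vec e$ for $\vec e=\vec w-\lfp\sys f$, and combine via \Cref{thm:nonnegmats} to force $\vec w=\lfp\sys f$; the completeness argument is likewise the same downward perturbation via the subtraction variant of Taylor. The only cosmetic difference is that you conclude $\vec e=\vec 0$ by multiplying with the non-negative inverse $(\idmat-\sys f'(\vec w))^{-1}$, whereas the paper iterates $\vec e\leq \sys f'(\vec w)^n\vec e$ and uses $\specrad{\sys f'(\vec w)}<1$; both are equivalent conditions of \Cref{thm:nonnegmats}.
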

\end{mdframed}
\begin{proof}
    Let $\vec l, \vec u$ be such that $\vec l \leq \sys{f}(\vec l), \vec l \leq \vec u$, and $\sys{f}(\vec u) \leq \vec u$.
    The first step of the proof is to construct a fixed point $\tilde{\vec u} \allsmaller \vec u$.
    Let $\tilde{\vec u} = \lim_{n \to \infty} \sys f^n(\vec u) \allsmaller \vec u$.
    By monotonicity, we have $\vec l \leq \sys f^n(\vec l) \leq \sys f^n(\vec u)$ for all $n \geq 0$, so $\vec l \leq \tilde{\vec u}$.
    Moreover, $\tilde{\vec u}$ is a fixed point of $\sys f$ since the latter is continuous, so by the variant of Taylor's theorem (below \Cref{thm:taylor}),
    \begin{align}
        &\tilde{\vec u} = \sys f(\tilde{\vec u}) = \sys f(\vec u - (\vec u - \tilde{\vec u})) \leq \sys f(\vec u) - \sys f'(\tilde{\vec u})(\vec u - \tilde{\vec u}) \\
        \implies &\sys f'(\tilde{\vec u})(\vec u - \tilde{\vec u}) \leq \sys f(\vec u) - \tilde{\vec u} \allsmaller \vec u - \tilde{\vec u} \label{eq:reviewer1}
        ~.
    \end{align}
    The last inequality implies that $\specrad{\sys f'(\tilde{\vec u})} < 1$ by \Cref{thm:nonnegmats}.
    We will need this further below.

    We now show that $\lfp\sys f = \tilde{\vec u}$.
    Towards contradiction assume that $\tilde{\vec u} > \lfp\sys f$, i.e., $\tilde{\vec u} - \lfp\sys f > \vec 0$.
    By the right inequality of Taylor's theorem,
    \begin{align}
        \label{eq:lbprooftaylor}
        \sys{f}(\tilde{\vec u})
        =
        \sys{f}(\lfp\sys f + (\tilde{\vec u}-\lfp\sys f))
        \leq
        \lfp\sys f + \jacat{\sys f}{\tilde{\vec u}}(\tilde{\vec u}-\lfp\sys f)
        .
    \end{align}
    From \eqref{eq:lbprooftaylor} and $\tilde{\vec u} = \sys f(\tilde{\vec u})$ it follows that
    \begin{align}
        \label{eq:lowerboundseq2}
        \tilde{\vec u} - \lfp\sys f
        ~=~
        \sys{f}(\tilde{\vec u}) - \lfp\sys f
        ~\leq~
        \jacat{\sys f}{\tilde{\vec u}}(\tilde{\vec u}-\lfp\sys f)
        ~.
    \end{align}
    Using \eqref{eq:lowerboundseq2} and the fact that $\sys f'(\tilde{\vec u})$ is non-negative, it follows that $\tilde{\vec u}- \lfp\sys f \leq \sys f'(\tilde{\vec u})^n (\tilde{\vec u} - \lfp\sys f)$ for all $n \geq 0$.
    However, since $\specrad{\sys f'(\tilde{\vec u})} < 1$, we have by \Cref{thm:nonnegmats} that $\lim_{n \to \infty} \sys f'(\tilde{\vec u}) = 0$, so there exists $n$ such that $\sys f'(\tilde{\vec u})^n (\tilde{\vec u} - \lfp\sys f) < \tilde{\vec u} - \lfp\sys f$, contradiction.
    It follows that $\lfp\sys f = \tilde{\vec u}$.
    As $\vec l \leq \tilde{\vec u}$ we have shown that $\vec l$ is indeed a lower bound on the lfp of $\sys f$.
    
    We now prove the second part of the theorem.
    Let $\sys f = \syscl f$ be feasible, non-singular and clean.
    A rational $\vec u$ with $\syscl f(\vec u) \allsmaller \vec u$ arbitrarily close to $\lfp\syscl f$ exists by \Cref{thm:certificates}.
    It remains to show existence of a rational co-inductive $\vec l \leq \syscl f(\vec l)$.
    The argument is analogous to the proof of \Cref{thm:certificates}, but this time for lower bounds:
    Since $\syscl f$ feasible, non-singular and clean, \Cref{thm:propsOfJacobi} and \Cref{thm:nonnegmats} yield that there exists $\vec v \allgreater \vec 0$ such that $\jacat{\syscl f}{\lfp\syscl f} \vec v \allsmaller \vec v$.
    Choose such a $\vec v \leq \lfp\syscl f$.
    By the variant of Taylor's theorem,
    \begin{align}
        \lfp\syscl f - \vec v
        ~\allsmaller~
        \lfp\syscl f - \jacat{\syscl f}{\lfp\syscl f} \vec v
        ~\leq
        ~\syscl f(\lfp\syscl f - \vec v)
    \end{align}
    With $\vec l = \lfp\syscl f - \vec v$ it follows that $\vec l \allsmaller \syscl f(\vec l)$.
    Moreover, $\vec v \allgreater \vec 0$ can be chosen arbitrarily small, and hence $\vec l$ arbitrarily close to $\lfp\syscl f$.
    Finally, by continuity of $\syscl f$, the inequality $\vec l' \allsmaller \syscl f(\vec l')$ holds for all $\vec l'$ in an open ball around $\vec l$; such a ball contains rational vectors.
\end{proof}

%In the above proof, it was crucial that $\specrad{\sys f'(\tilde{\vec u})} < 1$ (see \eqref{eq:reviewer1}).
%We remark that this additional constraint is indeed necessary, at least for our proof, because not every $\vec u$ satisfying $\sys f(\vec u) \leq \vec u$ (or even $\sys f(\vec u) \allsmaller \vec u$) has the property that $\jacat{\sys f}{\vec u}$ has spectral radius $< 1$.
%For instance, in the PPS $x = f(x) = 2/3x^2 + 1/3$, we have $f(3/4) < 3/4$, but $f'(3/4) = 4/3 \cdot 3/4 = 1$.
%In practice, it may be possible to choose $\vec u = \vec c$ (cf.\ our example in \Cref{sec:example}), thereby obtaining a somewhat more succinct certificate.
\section{cPAST and Non-singular PPS}
\label{sec:cpast}

In this section we characterize non-singularity in terms of expected runtimes.
In a nutshell, the main result is that the (cleaned-up) fundamental PPS of a pPDA $\pda$ is non-singular iff $\pda$ is cPAST.
We begin by characterizing the 1st termination moments as the solution of a finite linear equation system.

\begin{restatable}[System of termination moments]{lemma}{momenteqs}
    \label{thm:momenteqs}
    Let $\pda = \pdainit$ be a pPDA and let $(\vetriple{p}{Z}{q})_{pZq \in \pdastates {\times} \abstack {\times} \pdastates}$ be $\exnonnegreals$-valued variables.
    Then the 1st termination moments $\etriple{p}{Z}{q}$ constitute the least solution of the following \emph{linear} PPS:
    \begin{align}
        \label{eq:esystem}
        &\vetriple{p}{Z}{q}
        ~=~
        \triple{p}{Z}{q}
        +
        \sum_{\substack{\trans{pZ}{a}{rY}}} a \cdot \vetriple{r}{Y}{q} \notag \\
        &{+}
        \sum_{\substack{\trans{pZ}{a}{rYX} \\ t \in \pdastates}} a {\cdot} \big(  \vetriple{r}{Y}{t} {\cdot} \triple{t}{X}{q} + \triple{r}{Y}{t} {\cdot} \vetriple{t}{X}{q} \big)
    \end{align}
    where $\triple{p}{Z}{q}$, etc., are the return probabilities of $\pda$.
\end{restatable}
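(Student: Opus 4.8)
The plan is to derive equation \eqref{eq:esystem} by a first-step (i.e., one transition) analysis of the weighted transition system $\mathfrak{T}_\pda$, and then argue that the 1st termination moments $\etriple{p}{Z}{q}$ are the \emph{least} solution by invoking \Cref{thm:explicitsollinsys}. First I would recall that, by \Cref{def:terminationmoments} applied to $\mathfrak{T}_\pda$, the value $\etriple{p}{Z}{q}$ equals $\sum_{\pi} p(\pi)\,|\pi|$, where the sum ranges over proper paths from $pZ$ to $q\eps$ in $\mathfrak{T}_\pda$. The key combinatorial observation is that every such path of positive length decomposes uniquely according to its first transition $\trans{pZ}{a}{q\alpha}$: if $\alpha = \eps$ the path has length $1$ and $q\alpha = q\eps$ must already be the target; if $\alpha = Y$ the remaining path is a proper path from $rY$ to $q\eps$; and if $\alpha = YX$ the remaining path first pops $Y$ (reaching some intermediate state $t$, i.e., goes from $rY$ to $t\eps$ in the sub-stack), and then pops $X$ from $tX$ to $q\eps$. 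Since the length of a concatenated path is the sum of the lengths, $|\pi| = 1 + |\pi_1|$ or $|\pi| = 1 + |\pi_1| + |\pi_2|$ in the respective cases.

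Next I would turn this decomposition into the arithmetic identity. Summing $p(\pi)|\pi|$ over the first and second cases, the factor $|\pi| = 1 + |\pi_1|$ splits the sum into a ``probability part'' $\sum p(\pi_1) = \triple{r}{Y}{q}$ times $1$ and an ``expectation part'' $\sum p(\pi_1)|\pi_1| = \vetriple{r}{Y}{q}$; collecting the $1$-contributions over \emph{all} first transitions reconstructs exactly $\sum_{\trans{pZ}{a}{q\eps}}a + \sum_{\trans{pZ}{a}{rY}}a\cdot\triple{r}{Y}{q} + \sum_{\trans{pZ}{a}{rYX},\,t}a\cdot\triple{r}{Y}{t}\triple{t}{X}{q}$, which by \Cref{thm:pdafundamentalsys} is precisely $\triple{p}{Z}{q}$ — this is where the first summand of \eqref{eq:esystem} comes from. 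For the third (push-two) case, the length is $1 + |\pi_1| + |\pi_2|$ over the product measure $p(\pi_1)p(\pi_2)$; the ``$1$'' again folds into $\triple{p}{Z}{q}$, while $|\pi_1|$ contributes $\vetriple{r}{Y}{t}\cdot\triple{t}{X}{q}$ and $|\pi_2|$ contributes $\triple{r}{Y}{t}\cdot\vetriple{t}{X}{q}$, yielding the last line of \eqref{eq:esystem}. Some care is needed to justify rearranging these (possibly infinite, possibly $\infty$-valued) sums of non-negative terms, but this is routine by Tonelli/monotone convergence since everything lives in $\exnonnegreals$.

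Finally, to get the \emph{least}-solution claim rather than merely ``a solution'', I would observe that \eqref{eq:esystem} is a \emph{linear} system in the unknowns $\vetriple{\cdot}{\cdot}{\cdot}$ once the return probabilities $\triple{\cdot}{\cdot}{\cdot}$ (which are constants) are substituted in: it has the form $\vec x = \genmat \vec x + \vec b$ with $\genmat, \vec b \geq \vec 0$ read off from the coefficients $a$ and the $\triple{t}{X}{q}$, $\triple{r}{Y}{t}$. The truncated sums $\sum_{\pi:\,|\pi|\le k} p(\pi)|\pi|$ are exactly the partial Kleene iterates $\big(\sum_{i=0}^{k}\genmat^i \vec b\big)$ applied to this system (each step of the first-step analysis peels off one transition), so their supremum $\vec{\mathsf{E}} = \sum_{i=0}^\infty \genmat^i\vec b$ is, by \Cref{thm:explicitsollinsys}, the least $\exnonnegreals$-valued solution of \eqref{eq:esystem}. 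I expect the main obstacle to be the bookkeeping in the third case: making the unique-decomposition argument for push-two transitions fully rigorous (in particular, that the ``middle'' return to $t\eps$ and the subsequent path are measure-theoretically independent and that summing over the split point $t$ is legitimate) requires a careful induction on path length, analogous to the standard derivation of \eqref{eq:pdasys} in \Cref{thm:pdafundamentalsys}; the rest is essentially linear algebra plus \Cref{thm:explicitsollinsys}.
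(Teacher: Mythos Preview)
Your approach is sound and takes a genuinely different route from the paper's. The paper gives two proofs: in the main text it works in the semiring of formal power series $\gfring{\exnonnegreals}{\gfvar}$, considers the generating functions $\triple{p}{Z}{q}_{\gfvar\pda} = \sum_k a_k \gfvar^k$ (where $a_k$ is the probability of reaching $q\eps$ from $pZ$ in exactly $k$ steps), formally differentiates the fundamental system \eqref{eq:pdasys} using the product rule, and finally substitutes $\gfvar = 1$; in the appendix it simply takes the known equation system for the \emph{conditional} expectations $\etriple{p}{Z}{q}/\triple{p}{Z}{q}$ from \cite{DBLP:conf/lics/EsparzaKM05} and multiplies through by $\triple{p}{Z}{q}$. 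Your direct first-step combinatorial decomposition is more elementary and self-contained (it needs neither the power-series machinery nor the cited external result); the generating-function route, on the other hand, immediately generalises to arbitrary higher termination moments by repeated differentiation, as the paper remarks.

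One specific step in your least-solution argument is incorrect, though: the truncated sums $E^{(k)}_{pZq} := \sum_{|\pi| \le k} p(\pi)\,|\pi|$ are \emph{not} the Kleene iterates $\sum_{i=0}^{k} \genmat^i \vec b$ of the linear system \eqref{eq:esystem}, because the constant vector $\vec b = (\triple{p}{Z}{q})_{pZq}$ already involves the \emph{full} return probabilities, which are themselves infinite sums over paths of unbounded length. Concretely, the very first Kleene iterate is $\vec b_{pZq} = \triple{p}{Z}{q}$, whereas $E^{(1)}_{pZq} = \sum_{\trans{pZ}{a}{q\eps}} a$, which is strictly smaller in general. The fix is easy and stays entirely within your framework: your first-step analysis already shows that the moment vector $\vec E$ \emph{satisfies} \eqref{eq:esystem}, hence $\vec E \geq \lfp$. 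For the converse, prove by induction on $k$ that $E^{(k)} \leq \vec u$ for every solution $\vec u \geq \vec 0$ of \eqref{eq:esystem}; the inductive step uses your path decomposition together with the bound $\sum_{|\pi_1|+|\pi_2|\le k} p(\pi_1)\,|\pi_1|\,p(\pi_2) \leq E^{(k)}_{rYt}\cdot\triple{t}{X}{q}$ in the push-two case, then applies the hypothesis $E^{(k)} \leq \vec u$ and the fact that $\vec u$ solves \eqref{eq:esystem}. Taking $k \to \infty$ gives $\vec E \leq \vec u$, so $\vec E$ is indeed the least solution.
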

\begin{proof}[Proof sketch]
    We sketch a proof based on a specific type of formal power series known as \emph{probability generating functions}, e.g.~\cite[App.~F]{kulkarni1995modeling}, \cite{wilf2005generatingfunctionology}.
    We will use the fact that \Cref{thm:pdafundamentalsys} actually holds in a more general setting where the transition weights come from a \emph{semiring} with suitable closure properties~\cite{DBLP:reference/hfl/Kuich97}.
    The \emph{semiring $\gfring{\exnonnegreals}{\gfvar}$ of formal power series} with coefficients in $\exnonnegreals$ satisfies these properties.
    Here, $\gfvar$ is a \emph{formal variable}, and the elements in $\gfring{\exnonnegreals}{\gfvar}$ are \emph{formal series} of the form $\sum_{k=0}^{\infty} a_k \gfvar^k$, with coefficients $a_k \in \exnonnegreals$.
    See~\cite{DBLP:reference/hfl/Kuich97} for details.
    
    Let $\gfvar \pda$ be the $\gfring{\exnonnegreals}{\gfvar}$-weighted PDA obtained from $\pda$ by multiplying each transition probability by $\gfvar$.
    The 0th termination moment is then a power series $\triple{p}{Z}{q}_{\gfvar\pda} = \sum_{k=0}^{\infty} a_k \gfvar^k$ where for all $k \geq 0$, the coefficient $a_k$ equals the sum of the probabilities of the proper paths of length $k$ between $pZ$ and $q\eps$.
    Consider the usual differentiation operator $\gfdiff{}$ for formal power series.
    We have
    \begin{align}
        \gfdiff{\triple{p}{Z}{q}_{\gfvar\pda}}
        ~=~
        \sum_{k=1}^{\infty} k \cdot a_k \gfvar^{k-1}
    \end{align}
    and thus by definition of the termination moments,
    \begin{align}
        \gfsubs{\gfdiff{\triple{p}{Z}{q}_{\gfvar\pda}}}{1}
        &~=~
        \etriple{p}{Z}{q}
        ~\text{, and also}
        \label{eq:diffandsubsismoment}\\
        \gfsubs{\triple{p}{Z}{q}_{\gfvar\pda}}{1}
        &~=~
        \triple{p}{Z}{q}
        ~,
        \label{eq:subsisprob}
    \end{align}
    where $\gfsubs{\cdot}{1} \colon \gfring{\exnonnegreals}{\gfvar} \to \exnonnegreals$ is the function that substitutes the formal variable $\gfvar$ for $1$ and takes the resulting limit (which may be $\infty$).
    By (the extended) \Cref{thm:pdafundamentalsys}, the triples $\triple{p}{Z}{q}_{\gfvar\pda}$ are the least solution in $\gfring{\exnonnegreals}{\gfvar}$ of the following equation system with variables $\vtriple{p}{Z}{q}$:
    \begin{multline}
        \label{eq:gf}
        \vtriple{p}{Z}{q}
        ~=~
        \gfvar \cdot \Big( \sum_{\trans{pZ}{a}{q\eps}} a
        ~+~
        \sum_{\trans{pZ}{a}{rY}} a \cdot \vtriple{r}{Y}{q} \\
        ~+~
        \sum_{\trans{pZ}{a}{rYX}} a \cdot \sum_{t \in \pdastates} \vtriple{r}{Y}{t} \cdot \vtriple{t}{X}{q}
        \Big)
        ~.
    \end{multline}
    \emph{Differentiating} all equations in this system using the product rule yields the following \emph{additional} equations:
    \begin{multline}
        \label{eq:gfdiff}
        \partial\vtriple{p}{Z}{q}
        ~=~
        \vtriple{p}{Z}{q}
        ~+~
        \gfvar \cdot \Big( 
        \sum_{\trans{pZ}{a}{rY}} a \cdot \gfdiff{\vtriple{r}{Y}{q}} \\
        {+}
        \sum_{\trans{pZ}{a}{rYX}} a {\cdot} \sum_{t \in \pdastates} ( \gfdiff{\vtriple{r}{Y}{t}} {\cdot} \vtriple{t}{X}{q} {+} \vtriple{r}{Y}{t} {\cdot} \gfdiff{\vtriple{t}{X}{q}} )
        \Big)
    \end{multline}
    where $\gfdiff{\vtriple{p}{Z}{q}}$, $\gfdiff{\vtriple{r}{Y}{t}}$, etc., are fresh variable symbols.
    It can be shown that the power series $\gfdiff{\triple{p}{Z}{q}_{\gfvar\pda}}$ and $\triple{p}{Z}{q}_{\gfvar\pda}$ constitute the least solution of the overall system \eqref{eq:gf} $\land$ \eqref{eq:gfdiff}.
    Finally, since the substitution operation $\gfsubs{\cdot}{1}$ is a semiring-homomorphism we may apply it to the equations in \eqref{eq:gfdiff}, and use \eqref{eq:diffandsubsismoment} and \eqref{eq:subsisprob} to obtain \eqref{eq:esystem}
\end{proof}
\begin{remark}
    It is also possible to prove \Cref{thm:momenteqs} using the equation system from~\cite{DBLP:conf/lics/EsparzaKM05} for the conditional expected runtimes $\frac{\etriple p Z q}{\triple p Z q}$ (see \iftoggle{arxiv}{\Cref{app:momenteqs}}{\cite{arxiv}}).
    However, we believe our proof using generating functions is of interest as it allows finding equation systems for \emph{any} higher termination moment by repeatedly differentiating \eqref{eq:gf}.
\end{remark}

\begin{lemma}[System of termination moments]
    \label{thm:thelinearsystem}
    Let $\pda$ be a pPDA.
    The linear PPS \eqref{eq:esystem} can be written in matrix form as follows:
    \begin{align}
        \label{eq:thelinearsystem}
        \vec x
        ~=~
        \lfp\sys f + \jacat{\sys f}{\lfp\sys f} \vec x ~,
    \end{align}
    where $\sys f$ is the fundamental PPS of $\pda$ and the variable vector $\vec x$ comprises the symbols $\vetriple{p}{Z}{q}$.
\end{lemma}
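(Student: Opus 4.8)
The plan is to prove the identity by directly computing the Jacobi matrix of the fundamental PPS and evaluating it at the least fixed point. Since each polynomial $\sys f_{pZq}$ in \eqref{eq:pdasys} has degree at most two, I would first put $\sys f$ into the canonical form $\sys f(\vec x) = \vec c + L\vec x + Q(\vec x,\vec x)$, where $\vec c$ is the constant vector with $\vec c_{pZq} = \sum_{\trans{pZ}{a}{q\eps}} a$, the linear part is $(L\vec x)_{pZq} = \sum_{\trans{pZ}{a}{rY}} a \cdot \vec x_{rYq}$, and $Q$ is the (generally non-symmetric) bilinear map defined by $Q(\vec x,\vec y)_{pZq} = \sum_{\trans{pZ}{a}{rYX}} a \sum_{t\in\pdastates} \vec x_{rYt}\,\vec y_{tXq}$, so that $Q(\vec x,\vec x)$ reproduces exactly the last line of \eqref{eq:pdasys}. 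Note that the slot assignment here is the natural one: in the monomial $\vec x_{rYt}\vec x_{tXq}$ the first factor carries the index triple $rYt$ and the second carries $tXq$.

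Next I would differentiate. Constants vanish, the linear part contributes $L$, and applying the product rule to a monomial $\vec x_{rYt}\vec x_{tXq}$ produces one term from each factor; contracting these with a vector $\vec v$ and evaluating at a point $\vec u$ gives $\vec v_{rYt}\vec u_{tXq}$ from the first factor and $\vec u_{rYt}\vec v_{tXq}$ from the second. Summing over the transitions and over $t$ yields $\jacat{\sys f}{\vec u}\vec v = L\vec v + Q(\vec v,\vec u) + Q(\vec u,\vec v)$. Now I would instantiate $\vec u := \lfp\sys f$, recall from \Cref{thm:pdafundamentalsys} that $(\lfp\sys f)_{pZq} = \triple{p}{Z}{q}$, and read off the $pZq$-component of the right-hand side of \eqref{eq:thelinearsystem}:
\[
    \triple{p}{Z}{q}
    \,+\, \sum_{\trans{pZ}{a}{rY}} a \cdot \vec x_{rYq}
    \,+\, \sum_{\trans{pZ}{a}{rYX}} a \sum_{t\in\pdastates} \bigl(\vec x_{rYt}\cdot\triple{t}{X}{q} + \triple{r}{Y}{t}\cdot\vec x_{tXq}\bigr) .
\]
Since $\vec x$ comprises the symbols $\vetriple{p}{Z}{q}$, this is literally the right-hand side of \eqref{eq:esystem}, so the two equation systems coincide.

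The computation is elementary, so I expect the only real work to be the index bookkeeping: one has to track carefully which of $r,Y,X,t$ are bound by a transition and which index the derivative variable, and check that the two summands coming out of the product rule line up with the two summands $\vetriple{r}{Y}{t}\cdot\triple{t}{X}{q}$ and $\triple{r}{Y}{t}\cdot\vetriple{t}{X}{q}$ in \eqref{eq:esystem} (including the summation over the intermediate state $t$). It is worth noting that the argument is a purely formal rewriting of the equation system — it does not use the fixed-point property $\sys f(\lfp\sys f) = \lfp\sys f$, nor any assumption such as feasibility or cleanliness — so it is valid even when some components of $\lfp\sys f$ equal $\infty$, under the extended arithmetic conventions of \Cref{sec:prelims}.
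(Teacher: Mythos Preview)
Your proposal is correct and follows essentially the same approach as the paper: both compute the Jacobian of the fundamental PPS componentwise via the product rule and verify that the resulting expression, evaluated at $\lfp\sys f$, coincides with \eqref{eq:esystem}. Your use of the bilinear decomposition $\sys f(\vec x) = \vec c + L\vec x + Q(\vec x,\vec x)$ and the identity $\jacat{\sys f}{\vec u}\vec v = L\vec v + Q(\vec v,\vec u) + Q(\vec u,\vec v)$ is a clean way to organize the same index bookkeeping the paper spells out explicitly.
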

\begin{proof}
    For a general PPS $\sys g$ indexed by $J$ and $i \in J$,
    \begin{align}
        \label{eq:gendiffsys}
        (\sys g'(\lfp\sys g) \vec x)_i
        ~=~
        \sum_{j \in J} \left(\frac{\partial}{\partial_j} \sys g_i \right) (\lfp\sys g) \cdot \vec x_j
        ~.
    \end{align}

    Using \eqref{eq:gendiffsys} and the fact that \eqref{eq:thelinearsystem} is indexed by $\pdastates{\times}\abstack{\times}\pdastates$, we can write the equation in \eqref{eq:thelinearsystem} corresponding to $\vetriple{p}{Z}{q}$ as follows:
    \begin{align}
        &\vetriple{p}{Z}{q}
        ~=~
        \triple{p}{Z}{q} +
        \sum_{\substack{p'Z'q'  \in \pdastates{\times}\abstack{\times}\pdastates}} \frac{\partial}{\partial_{\vetriple{p'}{Z'}{q'}}} \Big( \notag \\
        & \sum_{\trans{pZ}{a}{q\eps}} a \notag 
        {+}
        \sum_{\trans{pZ}{a}{rY}} a {\cdot} \vetriple{r}{Y}{q} 
        {+}
        \sum_{\substack{\trans{pZ}{a}{rYX} \\ t \in \pdastates}} a {\cdot} \vetriple{r}{Y}{t} {\cdot} \vetriple{t}{X}{q}\notag \\
        & \hspace{4.5cm} \Big) (\lfp\sys f)  \cdot \vetriple{p'}{Z'}{q'} \label{eq:inter} \\
        &\phantom{\vetriple{p}{Z}{q}}~=~
        \triple{p}{Z}{q} +
        \sum_{\trans{pZ}{a}{rY}} a \cdot \vetriple{r}{Y}{q} \notag \\
        &{+}
        \sum_{\trans{pZ}{a}{rYX}} a {\cdot} \sum_{t \in \pdastates} \big( \vetriple{r}{Y}{t} {\cdot} \triple{t}{X}{q} {+} \triple{r}{Y}{t} {\cdot} \vetriple{t}{X}{q} \big) \label{eq:final}
    \end{align}
    To transform \eqref{eq:inter} into \eqref{eq:final}, we have used (among similar arguments) that for all $rYt \in \pdastates{\times}\abstack{\times}\pdastates$,
    \begin{align*}
        &\Big( \frac{\partial}{\partial_{\vetriple{r}{Y}{t}}} a {\cdot} \vetriple{r}{Y}{t} {\cdot} \vetriple{t}{X}{q} \Big) 
        (\lfp \sys f) \cdot \vetriple{r}{Y}{t} \\
        ~=~ &  \Big( a {\cdot} \vetriple{t}{X}{q} \Big) 
        (\lfp \sys f) \cdot \vetriple{r}{Y}{t} \\
        ~=~ & a \cdot \vetriple{r}{Y}{t} {\cdot} \triple{t}{X}{q} ~. \tag{as $\lfp\sys f_{tXq} = \triple{t}{X}{q}$}
    \end{align*}
    \eqref{eq:final} is identical to \Cref{eq:esystem}.
\end{proof}

One can interpret \eqref{eq:thelinearsystem} as the pPDA variant of \eqref{eq:1stmoments}.
In fact, \eqref{eq:1stmoments} can be proved using the same generating function trick as in the proof of \Cref{thm:momenteqs}.

We need one additional lemma before showing the main result of this section.
\begin{lemma}
    \label{thm:linearpps}
    Let $\syscl f$ be a feasible and clean linear PPS of the form $\vec x = \genmat \vec x + \vec b$.
    Then the equivalent conditions from \Cref{thm:nonnegmats} apply to the matrix $\genmat$.
    In particular, $\syscl f$ is non-singular.
\end{lemma}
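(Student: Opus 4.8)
The plan is to exploit the fact that a \emph{linear} PPS has a \emph{constant} Jacobi matrix, so that the spectral bound of \Cref{thm:specradleq1} can be read off at a convenient point. Concretely, for $\syscl f$ of the form $\vec x = \genmat \vec x + \vec b$ the partial derivative of its $i$-th component with respect to $\vec x_k$ is the constant $\genmat_{i,k}$, hence $\jacat{\syscl f}{\vec x} = \genmat$ for \emph{every} point $\vec x$ --- in particular $\jacat{\syscl f}{\lfp\syscl f} = \genmat$. Thus $\syscl f$ being non-singular is, by definition, the same as $\idmat - \genmat$ having an inverse, which is exactly the third condition of \Cref{thm:nonnegmats}; it therefore suffices to establish that the equivalent conditions of \Cref{thm:nonnegmats} apply to $\genmat$.

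To this end I would show $\specrad{\genmat} < 1$, which is the second condition. Since $\syscl f$ is clean, $\lfp\syscl f \allgreater \vec 0$, hence $\vec 0 \allsmaller \lfp\syscl f$; and since $\syscl f$ is feasible, \Cref{thm:specradleq1} applies at the point $\vec 0$ and gives $\specrad{\jacat{\syscl f}{\vec 0}} < 1$, i.e., $\specrad{\genmat} < 1$ by the constant-Jacobian observation. As $\genmat \geq 0$ is a finite-dimensional non-negative square matrix, \Cref{thm:nonnegmats} then yields all of its equivalent conditions --- in particular that $(\idmat - \genmat)^{-1}$ exists and is non-negative --- which, as noted above, is precisely non-singularity of $\syscl f$.

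I do not anticipate a genuine obstacle; the only point requiring care is the applicability of \Cref{thm:specradleq1}, which needs a point that is \emph{strictly} below $\lfp\syscl f$ and still lies in the non-negative domain, and cleanness is exactly what supplies such a point (namely $\vec 0$). This is also what fails for degenerate systems such as $x = 2x$, where $\lfp\syscl f = 0$ and no strictly-smaller non-negative point exists. As an alternative route one could start from \Cref{thm:explicitsollinsys}, which writes $\lfp\syscl f = \big(\sum_{k=0}^{\infty} \genmat^k\big)\vec b$ and shows this series is finite because $\syscl f$ is feasible; but extracting $\specrad{\genmat} < 1$ from finiteness of that vector still relies on Perron--Frobenius-type reasoning about non-negative matrices, so routing through \Cref{thm:specradleq1} is the shortest path.
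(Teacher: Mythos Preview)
Your proposal is correct and follows essentially the same approach as the paper: observe that the Jacobian of a linear PPS is the constant matrix $\genmat$, then use \Cref{thm:specradleq1} (together with cleanness, which supplies a point strictly below $\lfp\syscl f$) to conclude $\specrad{\genmat} < 1$, and finally invoke \Cref{thm:nonnegmats}. The paper's proof is terser but identical in substance; your explicit choice of $\vec 0$ as the evaluation point is the natural way to unpack the paper's one-line appeal to \Cref{thm:specradleq1}.
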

\begin{proof}
    Note that $\syscl f'$ is a \emph{constant} matrix.
    Since $\syscl f$ is feasible and clean, we have $\specrad{\syscl f'} < 1$ by \Cref{thm:specradleq1}, which is the second condition from \Cref{thm:nonnegmats}.
\end{proof}
%We remark that there exist non-linear PPS which are feasible (and clean) but not regular.
%For example, consider $x = \frac{1}{2}x^2 + \frac{1}{2}$ which is feasible with lfp $x=1$.

\begin{mdframed}
\begin{theorem}[Non-singular iff cPAST]
    \label{thm:nonsingulariffcpast}
    Let $\pda = \pdainit$ be a pPDA with cleaned-up fundamental PPS $\syscl f$.
    Then
    $\syscl f$ is non-singular
    if and only if
    $\pda$ is cPAST.
\end{theorem}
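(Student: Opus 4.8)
The plan is to route everything through the \emph{linear} system \eqref{eq:thelinearsystem} for the first termination moments, namely $\vec x = \lfp\sys f + \jacat{\sys f}{\lfp\sys f}\vec x$ over the index set $J := \pdastates\times\abstack\times\pdastates$, whose least solution $\vec r$ collects the moments $\etriple p Z q$ by \Cref{thm:thelinearsystem}. Since cPAST of $\pda$ means precisely that all first termination moments $\etriple p Z q$ are finite, cPAST is equivalent to \emph{feasibility} of \eqref{eq:thelinearsystem}. So the task reduces to proving that $\syscl f$ is non-singular iff \eqref{eq:thelinearsystem} is feasible.

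The key preparatory step is to clean up \eqref{eq:thelinearsystem}. Let $J'\subseteq J$ be the index set of $\syscl f$, so that $(\lfp\sys f)_j > 0$ iff $j\in J'$. I claim the clean-up of \eqref{eq:thelinearsystem} is precisely the linear PPS $\vec x' = \lfp\syscl f + \jacat{\syscl f}{\lfp\syscl f}\vec x'$ over $J'$. First, $\vec r = \sum_{k\geq 0}\jacat{\sys f}{\lfp\sys f}^k\lfp\sys f \geq \lfp\sys f$ by \Cref{thm:explicitsollinsys}, hence $\vec r_j > 0$ for $j\in J'$; and $\triple p Z q = 0$ implies $\etriple p Z q = 0$, hence $\vec r_j = 0$ for $j\in J\setminus J'$. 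Thus the zero set of $\vec r$ is exactly $J\setminus J'$, so cleaning up \eqref{eq:thelinearsystem} removes precisely the variables indexed by $J\setminus J'$, and the cleaned system agrees with $\vec r$ on $J'$ by property~(iii) of clean-up. Substituting $0$ for the removed variables turns the constant vector into the restriction of $\lfp\sys f$ to $J'$, which is $\lfp\syscl f$, and the coefficient matrix into the $J'{\times}J'$-submatrix of $\jacat{\sys f}{\lfp\sys f}$; this submatrix equals $\jacat{\syscl f}{\lfp\syscl f}$, since differentiation with respect to a $J'$-variable commutes with setting the $(J\setminus J')$-variables to $0$ and those variables also vanish in $\lfp\sys f$ by property~(iv). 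Finally, the clean-up of \eqref{eq:thelinearsystem} is always \emph{clean}, because its least solution is $\geq \lfp\syscl f \allgreater \vec 0$.

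Both implications now drop out. If $\pda$ is cPAST, then \eqref{eq:thelinearsystem}, and hence its (clean) clean-up, is feasible; \Cref{thm:linearpps} then yields that the conditions of \Cref{thm:nonnegmats} hold for its coefficient matrix $\jacat{\syscl f}{\lfp\syscl f}$, so in particular $\idmat - \jacat{\syscl f}{\lfp\syscl f}$ is invertible, i.e., $\syscl f$ is non-singular. Conversely, $\syscl f$ is feasible (all return probabilities are at most $1$) and clean by construction, so if it is also non-singular then \Cref{thm:propsOfJacobi} and \Cref{thm:nonnegmats} show that $(\idmat - \jacat{\syscl f}{\lfp\syscl f})^{-1} = \sum_{k\geq 0}\jacat{\syscl f}{\lfp\syscl f}^k$ exists as a finite non-negative matrix; hence the restriction of $\vec r$ to $J'$, which by \Cref{thm:explicitsollinsys} equals $\sum_{k\geq 0}\jacat{\syscl f}{\lfp\syscl f}^k\lfp\syscl f$, is finite. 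Since $\vec r$ vanishes on $J\setminus J'$, all first termination moments are then finite and $\pda$ is cPAST.

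The step I expect to require the most care is the identification in the second paragraph: the equality between $\jacat{\syscl f}{\lfp\syscl f}$ and the $J'{\times}J'$-submatrix of $\jacat{\sys f}{\lfp\sys f}$, together with the observation that the zero set of the first termination moments coincides with that of the return probabilities. With those settled, the rest is a routine application of the non-negative matrix results (\Cref{thm:nonnegmats}, \Cref{thm:propsOfJacobi}, \Cref{thm:linearpps}) and \Cref{thm:explicitsollinsys}.
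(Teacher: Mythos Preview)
Your proposal is correct and follows essentially the same route as the paper: both reduce the question to the linear system \eqref{eq:thelinearsystem}, identify its clean-up with $\vec x' = \lfp\syscl f + \jacat{\syscl f}{\lfp\syscl f}\vec x'$, and then invoke \Cref{thm:linearpps} for one direction and the invertibility of $\idmat - \jacat{\syscl f}{\lfp\syscl f}$ for the other. Your treatment is in fact more thorough than the paper's: you spell out the clean-up identification (the paper compresses this to the single remark ``$\triple p Z q = 0$ iff $\etriple p Z q = 0$''), and in the non-singular $\Rightarrow$ cPAST direction you explicitly go through \Cref{thm:propsOfJacobi} and the Neumann series, whereas the paper simply asserts that the unique real solution $(\idmat - \jacat{\syscl f}{\lfp\syscl f})^{-1}\lfp\syscl f$ witnesses finiteness of the moments.
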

\end{mdframed}
\begin{proof}
    The cleaned-up variant of \eqref{eq:thelinearsystem} is
    \begin{align}
        \label{eq:thelinearsystemcl}
        \vec x
        ~=~
        \lfp\syscl f + \jacat{\syscl f}{\lfp\syscl f} \vec x
        ~.
    \end{align}
    This is because $\triple p Z q = 0$ iff $\etriple p Z q = 0$.
    
    Now suppose that $\syscl f$ is non-singular, i.e., $(\idmat - \jacat{\syscl f}{\lfp\syscl f})^{-1}$ exists.
    Then \eqref{eq:thelinearsystemcl} has the unique \emph{real-valued} solution $(\idmat - \jacat{\syscl f}{\lfp\syscl f})^{-1} \lfp\syscl f$.
    In particular, all the termination moments $\etriple p Z q$ differ from $\infty$.
    
    On the other hand, if $\pda$ is cPAST, then \eqref{eq:thelinearsystemcl} is feasible.
    As \eqref{eq:thelinearsystemcl} is clean, $\jacat{\syscl f}{\lfp\syscl f}$ is non-singular by \Cref{thm:linearpps}.
\end{proof}

As claimed in \Cref{fig:overview}, \Cref{thm:certificates} and \Cref{thm:nonsingulariffcpast} together imply that arbitrarily tight rational-valued inductive upper bounds on the lfp of the fundamental system $\sys f$ of a pPDA $\pda$ exist if $\pda$ is cPAST.
\section{Certifying PAST}
\label{sec:past}

In this section we are concerned with certificates for positive almost-sure termination (PAST; see Def.\ \ref{def:terminationNotions}).
We begin by characterizing the expected runtimes in terms of a linear equation system.

\begin{mdframed}
\begin{theorem}[System of expected runtimes]
    \label{thm:ertsys}
    Let $\pda = \pdainit$ be a pPDA and let $(\vert{p}{Z})_{pZ \in \pdastates{\times}\abstack}$ be $\exnonnegreals$-valued variables.
    The expected runtimes $\ert{p}{Z}$ constitute the least solution of the following \emph{linear} PPS:
    \begin{multline}
        \label{eq:ertsys}
        \vert{p}{Z}
        ~=~ 
        1
        ~+~
        \sum_{\trans{pZ}{a}{rY}} a \cdot \vert{r}{Y} \\
        {+}
        \sum_{\substack{\trans{pZ}{a}{rYX}}} a {\cdot} \big( \vert{r}{Y} {+} \sum_{t \in \pdastates} \triple{r}{Y}{t} {\cdot} \vert{t}{X} \big)
    \end{multline}
\end{theorem}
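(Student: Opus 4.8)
The plan is to argue directly from the combinatorial characterization of expected runtimes given in \Cref{sec:prelims}: for the DTMC $\mathfrak{T}_{\pda}$ with target $F = \pdastates \times \{\eps\}$, the value $\ert{p}{Z}$ is the sum of the probabilities $p(\pi)$ of all finite paths $\pi$ from $pZ$ along which the stack is never emptied. Call such a $\pi$ a \emph{surviving path from $pZ$}, and set $\vec r_{pZ} := \ert{p}{Z}$. I would prove both that $\vec r$ solves \eqref{eq:ertsys} and that it is the least such solution, using one and the same structural decomposition of surviving paths.

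\emph{$\vec r$ is a solution.} A surviving path from $pZ$ is either the length-$0$ path (contributing $1$), or it takes a first step along some transition out of $pZ$. A pop transition $\trans{pZ}{a}{r\eps}$ reaches $F$ at once and hence contributes nothing. After a swap $\trans{pZ}{a}{rY}$ the remaining suffix is exactly a surviving path from $rY$, contributing $a \cdot \vec r_{rY}$. After a push $\trans{pZ}{a}{rYX}$ the suffix is a surviving path from $rYX$; these I split according to whether the stack ever shrinks back to the single symbol $X$. Those that never do stay above the bottom $X$ throughout and are, via the natural weight- and length-preserving bijection that deletes the bottom $X$, precisely the surviving paths from $rY$ (total weight $\vec r_{rY}$). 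Each remaining one decomposes uniquely at the first moment the stack equals $X$, say in state $t$: deleting the bottom $X$, its prefix is a finite path from $rY$ to $t\eps$ in $\mathfrak{T}_{\pda}$ --- and every such finite path arises this way --- so these prefixes carry total weight $\triple{r}{Y}{t}$, while the suffix is a surviving path from $tX$, of weight $\vec r_{tX}$. Summing the contributions --- legitimate since all summands lie in $\exnonnegreals$, where non-negative iterated series rearrange unconditionally and $0 \cdot \infty = 0$ --- yields exactly the right-hand side of \eqref{eq:ertsys} evaluated at $\vec r$ and at the return probabilities of $\pda$.

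\emph{$\vec r$ is least.} Let $\vec s \in \exnonnegreals^{\pdastates \times \abstack}$ be any solution of \eqref{eq:ertsys}, and let $\vec r^{(n)}_{pZ}$ be the total weight of surviving paths from $pZ$ of length at most $n$, so that $\vec r^{(n)} \uparrow \vec r$. I would show $\vec r^{(n)} \leq \vec s$ by induction on $n$. The base case $\vec r^{(0)}_{pZ} = 1 \leq \vec s_{pZ}$ holds because the right-hand side of \eqref{eq:ertsys} is at least $1$. For the inductive step, the same case analysis with a length budget of $n+1$ shows that $\vec r^{(n+1)}_{pZ}$ is at most the right-hand side of \eqref{eq:ertsys} with every variable replaced by the corresponding entry of $\vec r^{(n)}$ --- the inequality (rather than equality) because a length-$\ell$ descent prefix to $t\eps$ leaves only budget $n - \ell < n$ for the surviving suffix from $tX$, over-approximated by $\vec r^{(n)}_{tX}$, and because $\triple{r}{Y}{t}$ bounds all descent prefixes to $t\eps$, not merely the short ones. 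By the induction hypothesis and the monotonicity of that right-hand side, this bound is $\leq \vec s_{pZ}$. Letting $n \to \infty$ gives $\vec r \leq \vec s$, so $\vec r$ is the least solution of the linear PPS \eqref{eq:ertsys}.

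The step I expect to demand the most care is making the two structural decompositions fully rigorous --- verifying that \enquote{first return of the stack to height one} genuinely partitions the surviving paths from $rYX$, that the part remaining above $X$ is in weight- and length-preserving bijection with the surviving paths from $rY$, and that the descent prefixes correspond bijectively to the finite paths $rY \to t\eps$ in $\mathfrak{T}_{\pda}$ (so that their total weight equals $\triple{r}{Y}{t}$). An alternative route, parallel to the proof of \Cref{thm:momenteqs}, is to pass to the $\gfring{\exnonnegreals}{\gfvar}$-weighted PDA $\gfvar\pda$ and use that $\ert{p}{Z} = \gfsubs{\frac{1 - \sum_q \triple{p}{Z}{q}_{\gfvar\pda}}{1-\gfvar}}{1}$ (the $\gfvar^n$-coefficient of this power series being the probability that the stack is non-empty after $n$ steps from $pZ$); summing \eqref{eq:gf} over $q$ and rearranging with the help of the pPDA normalization $\sum_{\trans{pZ}{a}{q\alpha}} a = 1$ produces a defining system for these series which turns into \eqref{eq:ertsys} upon applying the semiring homomorphism $\gfsubs{\cdot}{1}$; the leastness claim would then be handled exactly as in \Cref{thm:momenteqs}.
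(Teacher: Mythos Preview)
Your main argument is correct, but it is not the paper's. The paper gives an automata-theoretic reduction: it constructs an auxiliary (non-probabilistic) weighted PDA $\pda'$ by adding a fresh state $q^*$ with a weight-$1$ transition $\trans{pZ}{1}{q^*Z}$ from every $pZ$, together with rules $\trans{q^*Z}{1}{q^*\eps}$ that let $q^*$ empty the stack. Your surviving paths from $pZ$ are then in a weight-preserving bijection with the proper paths $pZ \to q^*\eps$ in $\mathfrak{T}_{\pda'}$, so $\ert{p}{Z}_{\pda} = \triple{p}{Z}{q^*}_{\pda'}$. The desired equations \eqref{eq:ertsys}, \emph{including} the leastness claim, now fall out of \Cref{thm:pdafundamentalsys} applied to $\pda'$ after simplifying the equations for the variables $\vtriple{p}{Z}{q^*}$ and $\vtriple{q^*}{Z}{q^*}$. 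What this buys is economy: the first-return decomposition and the Kleene-style least-fixed-point argument are both delegated to the fundamental-system theorem for weighted PDA, so nothing has to be redone by hand. What your direct combinatorial route buys is self-containment---you do not need the semiring-weighted generalization of \Cref{thm:pdafundamentalsys}, only the path-sum characterization of $\mathsf{ert}$ below \Cref{def:ert}---and the structural decomposition is made fully explicit. Your generating-function sketch is yet a third route, closer in spirit to the proof of \Cref{thm:momenteqs} than to the paper's proof of the present theorem; neither of the two approaches you describe is the one the paper actually takes.
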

\end{mdframed}
\begin{proof}
    \newcommand{\newstate}{q^*}
    We give an automata theoretic proof.
    The idea is to construct a \emph{non-probabilistic} weighted PDA $\pda'$ such that for all $pZ \in \pdastates \times\abstack$, $\ert{p}{Z} = \triple{p}{Z}{\newstate}_{\pda'}$, where $\newstate$ is a distinguished state in $\pda'$.
    In other words, we express the expected runtimes of $\pda$ in terms of the 0th termination moments of a new automaton.
    
    Given a pPDA $\pda = \pdainit$, we construct the weighted PDA $\pda' = (\pdastates',\abstack,\pdatrans')$ as follows: $\pdastates' = \pdastates \cup \{\newstate\}$, and 
    \begin{align}
        \pdatrans'
        ~=~
        \pdatrans ~\cup~ &\{\trans{pZ}{1}{\newstate Z} \mid pZ \in \pdastates\times\abstack\} \notag \\
        ~\cup~ &\{\trans{\newstate Z}{1}{\newstate \eps} \mid Z \in \abstack\}
        ~.
    \end{align}
    Intuitively, $\pda'$ is obtained from $\pda$ by adding a new state $\newstate$ and weight-1 transitions from every state of $\pda$ to $\newstate$ (these new transitions are independent of the current topmost stack symbol).
    Moreover, once $\newstate$ is reached, $\pda'$ simply empties its stack.
    Note that $\pda'$ is not probabilistic anymore.
    
    Fix an initial configuration $pZ \in \pdastates\times\abstack$.
    Consider the 0th termination moment $\triple{p}{Z}{\newstate}_{\pda'} \in \exnonnegreals$ which is the sum of the weights of all proper finite paths 
    \begin{align}
        \label{eq:pathsinnewautom}
        \{pZ\}
        \,.\,
        (\pdastates {\times} \abstack^+)^*
        \,.\,
        (\{\newstate\} \times \abstack^+)
        \,.\,
        \{\newstate\eps\}
    \end{align}
    between $pZ$ and $\newstate\eps$ in $\mathfrak{T}_{\pda'}$.
    It follows immediately from the construction of $\pda'$ that there is a weight-preserving bijection between the paths \eqref{eq:pathsinnewautom} and the paths in $\mathfrak T _\pda$ that have not yet reached a configuration with empty stack, i.e., the paths 
    \begin{align}
        \label{eq:paths}
        \{pZ\}
        \,.\,
        (\pdastates {\times} \abstack^+)^*
%        \,.\,
%        (\{\eps\} \cup \pdastates {\times} \{\eps\})
        ~.
    \end{align}
    As explained below Def.~\ref{def:ert}, the sum of the probabilities of the paths \eqref{eq:paths} is exactly $\ert{p}{Z}_{\pda}$.
    Thus, $\triple{p}{Z}{\newstate}_{\pda'} = \ert{p}{Z}_{\pda}$.
    
    To finish the proof we use the fact that we have access to the 0th moments $\triple{p}{Z}{\newstate}_{\pda'}$ via the fundamental PPS $\pdasys{\pda'}$ of $\pda'$.
    By construction of $\pda'$, $\pdasys{\pda'}$ contains the equations from $\pdasys{\pda}$, as well as for each $pZ \in Q\times\Gamma$ the additional equations
    \begin{align}
        &\vtriple{p}{Z}{\newstate}
        ~=~
        \vtriple{\newstate}{Z}{\newstate}
        ~+~
        \sum_{\trans{pZ}{a}{rY}} a \cdot \vtriple{r}{Y}{\newstate} \notag \\
        &\quad +
        \sum_{\trans{pZ}{a}{rYX}} a \cdot
            \big(
                \sum_{t \in \pdastates} \vtriple{r}{Y}{t} {\cdot} \vtriple{t}{X}{\newstate}
                +
                \vtriple{r}{Y}{\newstate} {\cdot} \vtriple{\newstate}{X}{\newstate}
            \big) \\
        &\vtriple{\newstate}{Z}{\newstate}
        ~=~
        1
        ~,
    \end{align}
    where the transitions under the summation symbols refer to those of $\pda$ (\emph{not} to those of $\pda'$).
    Simplifying and identifying $\vtriple{p}{Z}{\newstate}$ with $\vert{p}{Z}$ yields \eqref{eq:ertsys}.
\end{proof}

\begin{remark}
    If one assumes upfront that the pPDA $\pda$ is AST, then it is not difficult to derive equation system \eqref{eq:ertsys} from \cite[Thm 3.1/Ex. 3.3]{DBLP:conf/lics/EsparzaKM05} (see \iftoggle{arxiv}{\Cref{app:ertsysforast}}{\cite{arxiv}} for details).
    However, \Cref{thm:ertsys} also holds if $\pda$ is not AST.
    That is, if $\sum_{q \in \pdastates}\triple{p}{Z}{q} < 1$ for some $pZ \in \pdastates \times\abstack$, then $\ert{p}{Z} = \infty$ in the least solution of \eqref{eq:ertsys}.
    In other words, system \eqref{eq:ertsys} has a non-negative real-valued solution (i.e., it is feasible) iff $\pda$ is PAST.
    The direction from left to right does not easily follow from \cite{DBLP:conf/lics/EsparzaKM05}.
    However, this direction is crucial for proving/certifying PAST without explicitly proving AST beforehand (see \Cref{thm:provingpast} and \Cref{thm:certpast} further below).
%    From [7] it only follows that PAST can be decided by first proving AST independently, and then proving that the runtime is finite (using Example 3.3).
    The fact that PAST can be proved directly without proving AST first is a major insight of this paper.
    
\end{remark}

System \eqref{eq:ertsys} can also be written in matrix form as
\begin{align}
    \label{eq:theMatrixM}
    \vec r
    ~=~
    \ertmat{\pda}(\lfp\sys f) \vec r + \vec 1
    ~,
\end{align}
where $\vec r = (\vert{p}{Z})_{pZ \in \pdastates{\times}\abstack}$, and $\ertmat{\pda}(\lfp\sys f)$ is a matrix that depends on the transitions of $\pda$ and the return probabilities $\lfp\sys f$; the exact entries of $\ertmat{\pda}(\lfp\sys f)$ can be read off from \eqref{eq:ertsys}.

\begin{corollary}[Proving PAST]
    \label{thm:provingpast}
    Let $\pda$ be a pPDA with fundamental PPS $\sys f$ and suppose that $\lfp\sys f \leq \vec u$.
    If $\ertmat{\pda}(\vec u) \vec r + \vec 1 \leq \vec r$ has a solution $\vec r \in \reals_{\geq 1}^{\pdastates\times\abstack}$, then $\pda$ is PAST.
    In this case, $\ert{p}{Z} \leq \vec r_{pZ}$ for all $pZ \in \pdastates{\times}\abstack$.
\end{corollary}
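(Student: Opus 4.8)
The plan is to reduce the claim to \Cref{thm:inductiveUpperBounds} applied to the \emph{linear} PPS that characterizes the expected runtimes. Recall from \Cref{thm:ertsys}, in the matrix form \eqref{eq:theMatrixM}, that the vector of expected runtimes $\vec r^{\ast} = (\ert{p}{Z})_{pZ \in \pdastates\times\abstack}$ is the least solution in $\exnonnegreals^{\pdastates\times\abstack}$ of $\vec x = \ertmat{\pda}(\lfp\sys f)\vec x + \vec 1$. Since $\ertmat{\pda}(\lfp\sys f) \geq 0$ and $\vec 1 \geq \vec 0$, the map $\vec x \mapsto \ertmat{\pda}(\lfp\sys f)\vec x + \vec 1$ is a (linear) PPS, so \Cref{thm:inductiveUpperBounds} applies: any $\vec r \geq \vec 0$ with $\ertmat{\pda}(\lfp\sys f)\vec r + \vec 1 \leq \vec r$ is an inductive upper bound and hence satisfies $\vec r^{\ast} \leq \vec r$.

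The one gap to close is that the hypothesis supplies a $\vec r$ satisfying the inequality with $\ertmat{\pda}(\vec u)$ rather than $\ertmat{\pda}(\lfp\sys f)$, knowing only $\lfp\sys f \leq \vec u$. Here I would observe, by inspection of \eqref{eq:ertsys}, that every entry of the matrix $\ertmat{\pda}(\cdot)$ is either a constant (coming from the $a\cdot\vert{r}{Y}$ summands) or of the form $a \cdot (\cdot)_{rYt}$, i.e.\ a non-negative linear function of its argument; hence $\vec v \mapsto \ertmat{\pda}(\vec v)$ is monotone. Combining this with $\vec r \geq \vec 1 \geq \vec 0$ and $\lfp\sys f \leq \vec u$ gives $\ertmat{\pda}(\lfp\sys f)\vec r \leq \ertmat{\pda}(\vec u)\vec r$, and therefore
\[
    \ertmat{\pda}(\lfp\sys f)\vec r + \vec 1
    ~\leq~
    \ertmat{\pda}(\vec u)\vec r + \vec 1
    ~\leq~
    \vec r ~.
\]

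Putting the two steps together, $\vec r$ is a non-negative inductive upper bound for the linear PPS $\vec x = \ertmat{\pda}(\lfp\sys f)\vec x + \vec 1$, so by \Cref{thm:inductiveUpperBounds} its least solution $\vec r^{\ast}$ satisfies $\ert{p}{Z} = \vec r^{\ast}_{pZ} \leq \vec r_{pZ}$ for all $pZ \in \pdastates\times\abstack$. Since $\vec r \in \reals_{\geq 1}^{\pdastates\times\abstack}$ is finite in every component, so is $\vec r^{\ast}$, which is exactly the definition of $\pda$ being PAST (Def.~\ref{def:terminationNotions}, extended to pPDA). I do not expect a genuine obstacle here; the only point that needs an (essentially one-line) argument is the monotonicity of $\ertmat{\pda}(\cdot)$ in its argument, which is immediate from the shape of the coefficients in \eqref{eq:ertsys}, and the observation that $\ertmat{\pda}(\lfp\sys f)\vec x + \vec 1$ is itself a PPS so that the induction principle of \Cref{thm:inductiveUpperBounds} is applicable.
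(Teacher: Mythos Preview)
Your proof is correct and follows essentially the same approach as the paper's: both rely on the entry-wise monotonicity $\ertmat{\pda}(\lfp\sys f) \leq \ertmat{\pda}(\vec u)$ to transfer the hypothesis to the system at $\lfp\sys f$, and then invoke \Cref{thm:ertsys} together with the induction principle (your explicit appeal to \Cref{thm:inductiveUpperBounds} is what the paper uses implicitly). The only difference is that you spell out the monotonicity of $\ertmat{\pda}(\cdot)$ and the PPS structure in more detail, which is fine.
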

\begin{proof}
    We have $\ertmat{\pda}(\vec u) \geq \ertmat{\pda}(\lfp\sys f)$ entry-wise.
    Thus any $\vec r \geq \vec 0$ with $\ertmat{\pda}(\vec u) \vec r + \vec 1 \leq \vec r$ is a component-wise upper bound on the least solution $\vec r'$ of $\ertmat{\pda}(\lfp\sys f) \vec r' + \vec 1 = \vec r'$.
    But by \Cref{thm:ertsys}, the components of $\vec r'$ are exactly the expected runtimes.
\end{proof}

\Cref{thm:provingpast} allows turning any iterative scheme (e.g.~\cite{DBLP:journals/siamcomp/EsparzaKL10,tacas}) for computing a monotonically converging sequence
\begin{align}
    \vec u^{(0)} \geq \vec u^{(1)} \ldots \geq \lfp\sys f
    ~,\quad
    \lim_{i \to \infty} \vec u^{(i)} = \lfp\sys f
    ~,
\end{align}
of \emph{upper} bounds on the lfp of $\sys f$ into a semi-algorithm that detects in finite time if a given pPDA $\pda$ is PAST or non-AST, and that may diverge only if $\pda$ is AST but not PAST.
For $i = 0,1,2,\ldots$, after computing $\vec u^{(i)}$, the semi-algorithm simply checks if
\begin{enumerate}
    \item $\ertmat{\pda}(\vec u^{(i)}) \vec r + \vec 1 \leq \vec r$ is feasible in $\nonnegreals$ in which case it reports \enquote{PAST}, or else
    \item if $\sum_{t} \triple p Z t < 1$ for some $pZ \in \pdastates{\times}\abstack$ in which case is reports \enquote{non-AST}.
    \item If neither (a) nor (b) apply, then continue with $i +1$.
\end{enumerate} 
The correctness of this algorithm is straightforward with \Cref{thm:provingpast}.
Termination in the PAST case follows because $\vec u^{(i)}$ is eventually sufficiently close to $\lfp\sys f$ such that $\ertmat{\pda}(\vec u^{(i)}) \vec r + \vec 1 \leq \vec r$ has a solution in $\nonnegreals$.

\begin{mdframed}
\begin{theorem}[Certifying PAST]
    \label{thm:certpast}
    Let $\pda = \pdainit$ be a pPDA with fundamental PPS $\sys f$.
    
    $\pda$ is PAST
    if and only if 
    $\exists \vec u \in \nonnegrats^{\pdastates\times\abstack\times\pdastates}, \vec r \in \rats_{\geq 1}^{\pdastates\times\abstack} \colon \sys f(\vec u) \leq \vec u$ and $\ertmat{\pda}(\vec u) \vec r + \vec 1 \leq \vec r$,
    where $\ertmat{\pda}$ is the matrix from \eqref{eq:theMatrixM}.
\end{theorem}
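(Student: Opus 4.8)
The plan is to split into the two directions and reuse results already in place. The ``if'' direction carries essentially no new content: given rational $\vec u \in \nonnegrats^{\pdastates\times\abstack\times\pdastates}$ and $\vec r \in \rats_{\geq 1}^{\pdastates\times\abstack}$ with $\sys f(\vec u)\leq\vec u$ and $\ertmat{\pda}(\vec u)\vec r + \vec 1 \leq \vec r$, I would first invoke \Cref{thm:inductiveUpperBounds} to get $\lfp\sys f \leq \vec u$, and then \Cref{thm:provingpast} to conclude that $\pda$ is PAST. So the substance lies in the ``only if'' direction, which asserts \emph{completeness} of the certificate format.

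For that direction, assume $\pda$ is PAST. Then $\pda$ is also AST and cPAST by \Cref{thm:terminationRelations}; in particular the fundamental PPS $\sys f$ is feasible, and by \Cref{thm:nonsingulariffcpast} its cleaned-up version $\syscl f$ is non-singular. Hence \Cref{thm:certificates} (together with \Cref{thm:nonsingulariffcpast}, which upgrades the statement from $\syscl f$ to the full system $\sys f$) supplies, for every $\eps>0$, a \emph{rational} $\vec u$ with $\sys f(\vec u)\leq\vec u$ and $\maxnorm{\lfp\sys f - \vec u} < \eps$. The remaining task is to pick $\eps$ small enough and to manufacture a matching rational $\vec r$.

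To that end I would use \Cref{thm:ertsys}: since $\pda$ is PAST, the expected runtimes $\vec r^{*} = (\vert{p}{Z})_{pZ}$ are finite and form the least solution of the \emph{linear} PPS $\vec x = \ertmat{\pda}(\lfp\sys f)\vec x + \vec 1$. This system is feasible, and clean because each $\vert{p}{Z} \geq 1 > 0$; so by \Cref{thm:linearpps} the matrix $\ertmat{\pda}(\lfp\sys f)$ meets the equivalent conditions of \Cref{thm:nonnegmats}, in particular $\specrad{\ertmat{\pda}(\lfp\sys f)} < 1$. The entries of $\ertmat{\pda}(\cdot)$ depend continuously (indeed affinely) on their argument and the spectral radius depends continuously on matrix entries, so I can fix the rational $\vec u$ above with $\eps$ small enough that also $\specrad{\ertmat{\pda}(\vec u)} < 1$. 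Then $\idmat - \ertmat{\pda}(\vec u)$ is invertible with a non-negative inverse, so $\vec r_{\vec u} := (\idmat - \ertmat{\pda}(\vec u))^{-1}\vec 1 = \sum_{k\geq 0}\ertmat{\pda}(\vec u)^{k}\vec 1$ is the unique non-negative solution of $\vec x = \ertmat{\pda}(\vec u)\vec x + \vec 1$ and satisfies $\vec r_{\vec u}\geq\vec 1$. Finally, using \Cref{thm:nonnegmats} again, pick $\vec w \allgreater \vec 0$ with $\ertmat{\pda}(\vec u)\vec w \allsmaller \vec w$ and put $\vec r := \vec r_{\vec u} + \delta\vec w$ for a small $\delta>0$; then $\ertmat{\pda}(\vec u)\vec r + \vec 1 \allsmaller \vec r$ holds \emph{strictly} and $\vec r \allgreater \vec 1$. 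Both of these are strict, hence open, conditions, so they persist when $\vec r$ is replaced by any sufficiently close rational $\vec r'$, and $(\vec u, \vec r')$ is the sought certificate.

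I expect the main obstacle to be the simultaneous bookkeeping of the perturbations, rather than any single deep step: the \emph{one} rational $\vec u$ has to witness $\sys f(\vec u)\leq\vec u$ and at the same time lie close enough to $\lfp\sys f$ that $\specrad{\ertmat{\pda}(\vec u)}<1$, and one must then notice that the target inequalities for $\vec r$ can be made strict (via the $\delta\vec w$ nudge), so that a rational solution exists inside the resulting open set -- this also takes care of the delicate case where some component of $\vec r^{*}$ equals $1$ exactly. The algebraic verification (that the system of \Cref{thm:ertsys} is clean, i.e.\ $\vert{p}{Z}\geq 1$, and that the matrices involved are the correct ones) is routine by comparison.
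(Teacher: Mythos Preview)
Your proof is correct and follows essentially the same route as the paper's: the ``if'' direction via \Cref{thm:inductiveUpperBounds} and \Cref{thm:provingpast}, and the ``only if'' direction via PAST $\Rightarrow$ cPAST $\Rightarrow$ non-singularity of $\syscl f$ $\Rightarrow$ a rational inductive $\vec u$ arbitrarily close to $\lfp\sys f$, followed by a continuity argument for $\ertmat{\pda}(\cdot)$.

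Two small remarks. First, \Cref{thm:nonsingulariffcpast} does not itself ``upgrade'' the certificate from $\syscl f$ to $\sys f$; the paper does this by taking the rational $\vec u'$ with $\syscl f(\vec u') \allsmaller \vec u'$ and extending it to $\vec u$ by setting the components corresponding to variables removed during clean-up to zero, which yields $\sys f(\vec u) \leq \vec u$. Second, your final perturbation step (adding $\delta\vec w$ to make the inequality for $\vec r$ strict, then rounding to rationals) is not needed: once $\vec u$ is rational, the matrix $\ertmat{\pda}(\vec u)$ has rational entries, so the \emph{exact} solution $\vec r_{\vec u} = (\idmat - \ertmat{\pda}(\vec u))^{-1}\vec 1$ is already rational and satisfies $\ertmat{\pda}(\vec u)\vec r_{\vec u} + \vec 1 = \vec r_{\vec u}$ with equality. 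This is the shortcut the paper takes, and it sidesteps the ``delicate case where some component of $\vec r^{*}$ equals $1$'' that you flag.
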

\end{mdframed}
\begin{proof}
    Assume that $\pda$ is PAST.
    Recall from \Cref{thm:terminationRelations} that PAST implies cPAST.
    By \Cref{thm:nonsingulariffcpast}, $\syscl{f}$, the cleaned-up version of $\sys f$, is non-singular.
    By \Cref{thm:certificates}, there exists $\vec u' \in \rats_{> 0}^{J'}$, where $J' \subseteq \pdastates{\times}\abstack{\times}\pdastates$ is the index set of $\syscl{f}$, such that $\syscl f(\vec u') \allsmaller \vec u'$.
    Moreover, such a $\vec u'$ can be chosen arbitrarily close to $\lfp\syscl{f}$.
    We can extend $\vec u'$ to a vector indexed by $\vec u$ in $\pdastates{\times}\abstack{\times}\pdastates$ by setting the components that belong to variables not present in $\syscl f$ to zero; we then still have $\sys f(\vec u) \leq \vec u$.
    Since $\pda$ is PAST, $\ertmat{\pda}(\lfp\sys f) \vec r + \vec 1 = \vec r$ is feasible by \Cref{thm:ertsys}, i.e., it has a solution in $\reals_{\geq 1}^{\pdastates\times\abstack}$.
    The entries of $\ertmat{\pda}(\bullet)$ depend continuously on the argument $\bullet$, so we can choose $\vec u$ sufficiently close to $\lfp\sys f$ such that $\ertmat{\pda}(\vec u) \vec r + \vec 1 = \vec r$ is still feasible.
%    \todo{R2: some care has to be taken with zero entries in u.}
    As $\ertmat{\pda}(\vec u) \vec r + \vec 1 = \vec r$ is obviously clean, $\ertmat{\pda}(\vec u)$ is non-singular by \Cref{thm:linearpps}.
    Since $\ertmat{\pda}(\vec u)$ has rational entries, the unique solution of $\vec r = (\idmat - \ertmat{\pda}(\vec u))^{-1}\vec 1$ is thus also rational.
    
    The other direction follows from \Cref{thm:provingpast} using $\sys f(\vec u) \leq \vec u \implies \lfp\sys f \leq \vec u$.
\end{proof}

Our results also admit a simple proof of the following result which is already known in the literature~\cite[Chap.~4]{DBLP:phd/ethos/Wojtczak09}, \cite[Thm.~3]{DBLP:conf/icalp/EtessamiWY08}.
A \emph{pBPA} is a pPDA with  $|\pdastates| = 1$.

\begin{corollary}[PAST in pBPA]
    \label{thm:pbpa}
    Let $\pda$ be a pBPA
    with fundamental PPS $\sys f$.
    Then $\pda$ is PAST iff $\vec r = \vec 1 + \sys f'(\vec 1) \vec r$ has a unique solution $\vec r \in \rats_{\geq 1}^{\abstack}$.
    In this case, $\ert{}{Z} = \vec r_{Z}$ for all $Z \in \abstack$.
    As a consequence, deciding PAST for pBPA can be done in $\P$, and the bit complexity of the expected runtimes is polynomial in the encoding size of $\pda$.
\end{corollary}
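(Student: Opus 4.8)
The plan is to specialize \Cref{thm:certpast} (and its ingredients \Cref{thm:ertsys}, \Cref{thm:provingpast}, \Cref{thm:linearpps}) to the single-state case and exploit the fact that a pBPA that is PAST is automatically AST, so that the return \enquote{probabilities} are all equal to $1$. First I would observe that for a pBPA, $|\pdastates| = 1$ means every triple $\triple{p}{Z}{q}$ collapses to a single value per stack symbol, call it $\triple{}{Z}{}$, and the fundamental PPS $\sys f$ is indexed just by $\abstack$. If $\pda$ is PAST, then by \Cref{thm:terminationRelations} it is AST, hence $\lfp\sys f = \vec 1$ (the unique state is reached with empty stack with probability $1$ from every configuration). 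Substituting $\lfp\sys f = \vec 1$ into the matrix $\ertmat{\pda}(\lfp\sys f)$ from \eqref{eq:theMatrixM}: inspecting \eqref{eq:ertsys} with $|\pdastates|=1$ and all $\triple{r}{Y}{t}=1$, the coefficient of $\vert{}{Y}$ coming from a push transition $\trans{Z}{a}{YX}$ is $a$ (from the $\vert{r}{Y}$ term) and the coefficient of $\vert{}{X}$ is $a$ (from $\sum_t \triple{r}{Y}{t}\vert{t}{X} = a\cdot 1\cdot \vert{}{X}$), while a transition $\trans{Z}{a}{Y}$ contributes $a$ to the coefficient of $\vert{}{Y}$; this is exactly $\sys f'(\vec 1)$ evaluated entrywise, since differentiating \eqref{eq:pdasys} in the single-state case and plugging in $\vec 1$ yields the same coefficients. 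Hence $\ertmat{\pda}(\vec 1) = \sys f'(\vec 1)$, and \eqref{eq:theMatrixM} becomes $\vec r = \vec 1 + \sys f'(\vec 1)\vec r$.

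Next I would prove the equivalence. For the forward direction: if $\pda$ is PAST, \Cref{thm:ertsys} gives that $\vec r = \vec 1 + \sys f'(\vec 1)\vec r$ is feasible, i.e.\ has a solution in $\reals_{\geq 1}^{\abstack}$ (each component is $\geq 1$ since \eqref{eq:ertsys} has the constant $1$ and nonnegative further terms); this linear PPS is clean (all components $\geq 1 > 0$), so by \Cref{thm:linearpps} the matrix $\sys f'(\vec 1)$ satisfies the conditions of \Cref{thm:nonnegmats}, in particular $(\idmat - \sys f'(\vec 1))^{-1}$ exists, so the solution is unique; and since $\sys f'(\vec 1)$ has rational entries, $\vec r = (\idmat - \sys f'(\vec 1))^{-1}\vec 1$ is rational, hence lies in $\rats_{\geq 1}^{\abstack}$. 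Conversely, if $\vec r = \vec 1 + \sys f'(\vec 1)\vec r$ has a (unique) solution $\vec r \in \rats_{\geq 1}^{\abstack}$, then in particular $\sys f(\vec 1) \leq \vec 1$ trivially (a pBPA always has $\lfp\sys f \leq \vec 1$ since probabilities are bounded by $1$; alternatively take $\vec u = \vec 1$, noting $\sys f(\vec 1)\leq \vec 1$ because the row sums of transition probabilities are $1$), and $\ertmat{\pda}(\vec 1)\vec r + \vec 1 = \sys f'(\vec 1)\vec r + \vec 1 = \vec r \leq \vec r$, so \Cref{thm:provingpast} with $\vec u = \vec 1 \geq \lfp\sys f$ yields that $\pda$ is PAST and $\ert{}{Z} \leq \vec r_Z$; combined with \Cref{thm:ertsys} (which says the expected runtimes are the \emph{least} solution, and here the solution is unique) we get $\ert{}{Z} = \vec r_Z$.

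Finally, for the complexity claim: deciding PAST reduces to checking whether the rational linear system $\vec r = \vec 1 + \sys f'(\vec 1)\vec r$ has a nonnegative solution, equivalently whether $(\idmat - \sys f'(\vec 1))^{-1}$ exists and is nonnegative (condition~3 of \Cref{thm:nonnegmats} applied to $\sys f'(\vec 1)\geq 0$), which by Gaussian elimination over $\rats$ is decidable in polynomial time in the encoding size of $\pda$; the matrix $\sys f'(\vec 1)$ is read off $\pda$ in polynomial time. When a solution exists it is $\vec r = (\idmat - \sys f'(\vec 1))^{-1}\vec 1$, and by standard bounds on solutions of rational linear systems (e.g.\ Cramer's rule with Hadamard's inequality) its bit complexity is polynomial in that of $\pda$. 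The main obstacle I anticipate is the careful bookkeeping verifying $\ertmat{\pda}(\vec 1) = \sys f'(\vec 1)$ entrywise --- one must match the push-transition contributions in \eqref{eq:ertsys} (which has \emph{two} terms per push, $\vert{r}{Y}$ and $\sum_t \triple{r}{Y}{t}\vert{t}{X}$) against the two partial derivatives of the quadratic monomial $a\cdot\vtriple{r}{Y}{t}\cdot\vtriple{t}{X}{q}$ in \eqref{eq:pdasys}, using crucially that AST forces $\triple{r}{Y}{t} = 1$ so that the $\triple{r}{Y}{t}$-factor in the derivative $\partial_{\vtriple{t}{X}{q}}$ and the $\triple{r}{Y}{t}$-coefficient in \eqref{eq:ertsys} agree.
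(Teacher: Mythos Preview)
Your proposal is correct and follows essentially the same line as the paper's proof. The only notable difference is how you obtain $\ertmat{\pda}(\vec 1) = \jacat{\sys f}{\vec 1}$: you verify this matrix identity directly by matching coefficients in \eqref{eq:ertsys} and \eqref{eq:pdasys}, whereas the paper takes a small detour through the 1st-moment system, observing that under AST the systems \eqref{eq:esystem} and \eqref{eq:ertsys} coincide, and then invoking \Cref{thm:thelinearsystem} to rewrite the moment system in Jacobian form. Your route is slightly more self-contained; the paper's reuses existing machinery. For the converse direction, your use of \Cref{thm:provingpast} with $\vec u = \vec 1$ and the paper's direct comparison of \eqref{eq:pBPAert} with the system $\vec r = \vec 1 + \jacat{\sys f}{\vec 1}\vec r$ are the same argument packaged differently. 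One small point worth making explicit in your write-up: when you conclude $\ert{}{Z} = \vec r_Z$ from uniqueness, you are implicitly using that once PAST is established, AST follows, hence $\lfp\sys f = \vec 1$, so the actual expected-runtime system \eqref{eq:ertsys} \emph{is} the system $\vec r = \vec 1 + \jacat{\sys f}{\vec 1}\vec r$ whose solution you assumed unique.
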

\begin{proof}
    First note that since $\pda$ has a single state, the termination moments and expected runtimes can be written as $\mathsf{E}^k[Z]$ and $\ert{Z}{}$, respectively, where $Z \in \abstack$.
    Moreover, the system \eqref{eq:esystem} of 1st moments collapses to
    \begin{align}
        \label{eq:pBPAmoments}
        \mathsf{E}\langle Z \rangle
        {=}
        [Z]
        {+}
        \sum_{\trans{Z}{a}{Y}} a {\cdot} \mathsf{E}\langle Y \rangle
        {+}
        \sum_{\trans{Z}{a}{YX}} a {\cdot} (\mathsf{E}\langle Y \rangle {\cdot} [X] {+} [Y] {\cdot} \mathsf{E}\langle X \rangle)
    \end{align}
    and the system \eqref{eq:ertsys} of expected runtimes collapses to
    \begin{align}
        \label{eq:pBPAert}
        \vert{Z}{}
        {=}
        1
        {+}
        \sum_{\trans{Z}{a}{Y}} a {\cdot} \vert{Y}{}
        {+}
        \sum_{\trans{Z}{a}{YX}} a {\cdot} (\vert{Y}{} {+} [Y] {\cdot} \vert{X}{})
        .
    \end{align}
    
    Now suppose $\pda$ is PAST.
    Then $\pda$ is also AST, so for all $Z \in \abstack$, $[Z] = 1$, i.e., $\lfp\sys f = \vec 1$.
    Then \eqref{eq:pBPAmoments} and \eqref{eq:pBPAert} are just the \emph{same} equation system, so $\mathsf{E}[Z] = \ert{Z}{}$ for all $Z \in \abstack$.
    By \Cref{thm:thelinearsystem}, \eqref{eq:pBPAmoments} can be written in matrix form as $\vec x = \vec 1 + \sys f'(\vec 1) \vec x$.
    The latter is a linear, clean and feasible PPS, so by \Cref{thm:linearpps}, it has the unique solution $\vec x = (\idmat - \sys f'(\vec 1))^{-1} \vec 1$.
    Since $\sys f'(\vec 1)$ is has rational entries, it follows from standard linear algebra that the solution is rational and can be computed in polynomial time.
    In particular, the rational numbers in the solution can be encoded with polynomially many bits.
    
    For the other direction suppose that $\vec r = \vec 1 + \sys f'(\vec 1)$ is feasible.
    But then $\eqref{eq:pBPAert}$ is also feasible, so $\pda$ is PAST.
\end{proof}

\section{Complexity of Certificates}
\label{sec:complexity}

Our results so far have been \emph{qualitative}, i.e., we have shown existence of certificates but we have not yet discussed how large (the rational numbers in) these certificates are.
This is addressed in the present section.

From \Cref{thm:nonsingulariffcpast} it follows that $\specrad{\jacat{\pdasyscl{\pda}}{\lfp\pdasyscl{\pda}}} < 1$ iff $\pda$ is cPAST.
The next lemma is a quantitative generalization:
\begin{lemma}[Bound on spectrum]
    \label{thm:specradbound}
    Let $\pda$ be a pPDA with cleaned-up fundamental PPS $\syscl f$, and suppose that $\pda$ is cPAST.
    Let $C = \max_{pZq} \frac{\etriple p Z q}{\triple p Z q} \geq 1$ be the maximal conditional expected runtime of $\pda$, and $\lfp\syscl f_{\min} = \min_{pZq} \triple{p}{Z}{q}$ be the minimum non-zero return probability.
    Then there exists $\vec v \allgreater \vec 0$, $\norm{\vec v}_1 =1$, s.t. $\jacat{\syscl f}{\lfp\syscl f} \vec v \leq (1 - C^{-1}) \vec v$, and $\vec v_{\min} \geq \frac{\lfp\syscl f_{\min}}{C |\pdastates|^2|\abstack|}$.
    
    As a consequence, $\specrad{\jacat{\syscl f}{\lfp\syscl f}} \leq (1 - C^{-1})$.
\end{lemma}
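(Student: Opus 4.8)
\textbf{Proof plan for \Cref{thm:specradbound}.}
The plan is to construct the witness $\vec v$ directly from the first termination moments of $\pda$ and to read the inequality $\jacat{\syscl f}{\lfp\syscl f}\,\vec v \leq (1-C^{-1})\vec v$ off the linear fixed-point equation those moments satisfy. First I would recall, from \Cref{thm:momenteqs} together with its matrix reformulation \Cref{thm:thelinearsystem}, that the $1$st termination moments $\etriple p Z q$ form the least solution of $\vec x = \lfp\sys f + \jacat{\sys f}{\lfp\sys f}\,\vec x$; restricting to the index set $J'$ of the cleaned-up system $\syscl f$ --- exactly the reduction performed in the proof of \Cref{thm:nonsingulariffcpast}, using that $\triple p Z q = 0$ iff $\etriple p Z q = 0$ --- they form the least solution of $\vec x = \lfp\syscl f + \jacat{\syscl f}{\lfp\syscl f}\,\vec x$. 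Denote this solution $\vec e$. Since $\pda$ is cPAST, every conditional expected runtime is finite, so $\vec e \in \nonnegreals^{J'}$; since $\syscl f$ is clean we have $\triple p Z q > 0$ on $J'$, and since every proper path from $pZ$ to $q\eps$ has length at least $1$ we obtain $\etriple p Z q \geq \triple p Z q > 0$, i.e.\ $\vec e \allgreater \vec 0$. Being a fixed point, $\vec e$ satisfies $\jacat{\syscl f}{\lfp\syscl f}\,\vec e = \vec e - \lfp\syscl f$.

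Next I would use the definition of $C$: for every $pZq \in J'$ we have $\frac{\etriple p Z q}{\triple p Z q} \leq C$, hence $\triple p Z q \geq C^{-1}\etriple p Z q$, i.e.\ $\lfp\syscl f \geq C^{-1}\vec e$ componentwise. Combined with the identity above this gives
\[
    \jacat{\syscl f}{\lfp\syscl f}\,\vec e
    ~=~
    \vec e - \lfp\syscl f
    ~\leq~
    (1-C^{-1})\,\vec e
    ~.
\]
Setting $\vec v := \vec e / \norm{\vec e}_1$ then yields, since both sides are linear in $\vec e$, a vector with $\vec v \allgreater \vec 0$, $\norm{\vec v}_1 = 1$, and $\jacat{\syscl f}{\lfp\syscl f}\,\vec v \leq (1-C^{-1})\vec v$, as claimed. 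The spectral bound $\specrad{\jacat{\syscl f}{\lfp\syscl f}} \leq 1-C^{-1}$ then follows at once from the last item of \Cref{thm:nonnegmatsmiscfacts}, applied to the non-negative matrix $\jacat{\syscl f}{\lfp\syscl f}$ with witness $\vec v$ and constant $c = 1-C^{-1} \in [0,1)$.

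For the lower bound on $\vec v_{\min}$ I would estimate numerator and denominator of $\vec v_{pZq} = \etriple p Z q / \norm{\vec e}_1$ separately. For the numerator, $\etriple p Z q \geq \triple p Z q \geq \lfp\syscl f_{\min}$ for every $pZq \in J'$. For the denominator, each moment satisfies $\etriple p Z q = \triple p Z q \cdot \tfrac{\etriple p Z q}{\triple p Z q} \leq 1 \cdot C = C$, and $|J'| \leq |\pdastates|^2|\abstack|$, so $\norm{\vec e}_1 \leq C\,|\pdastates|^2|\abstack|$; dividing gives $\vec v_{pZq} \geq \lfp\syscl f_{\min}/(C\,|\pdastates|^2|\abstack|)$ for every component and hence for the minimum. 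The step requiring the most care is the bookkeeping on the cleaned-up index set: one must check that $\vec e$ is simultaneously finite (by cPAST) and strictly positive (by cleanness together with the observation that proper paths from $pZ$ to $q\eps$ have length at least $1$) precisely on $J'$, and reuse the cleaned-up form of the moment system established inside the proof of \Cref{thm:nonsingulariffcpast}. Everything else is elementary.
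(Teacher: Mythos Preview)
Your proposal is correct and follows essentially the same approach as the paper: take the vector of first termination moments as the witness, read off $\jacat{\syscl f}{\lfp\syscl f}\vec e = \vec e - \lfp\syscl f \leq (1-C^{-1})\vec e$ from the linear fixed-point equation of \Cref{thm:thelinearsystem} combined with the definition of $C$, normalize, and bound $\vec v_{\min}$ via $\etriple p Z q \geq \triple p Z q \geq \lfp\syscl f_{\min}$ and $\etriple p Z q \leq C$. Your write-up is in fact slightly more explicit than the paper's in justifying $\vec e \allgreater \vec 0$ (via the length-$\geq 1$ observation) and in tracking the cleaned-up index set.
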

\begin{proof}
    Let $\vec v$ be the vector of positive termination moments, i.e., the unique solution of $\vec v = \lfp\syscl{f} + \syscl{f}'(\lfp\syscl f)\vec v \allgreater \vec 0$.
    Then
    \begin{align}
        \label{eq:c}
        \syscl{f}'(\lfp\syscl f)\vec v
        ~=~
        \vec v - \lfp\syscl{f} 
        ~\leq~
        c \vec v
    \end{align}
    where 
    \begin{align*}
        c
        &~=~
        \max_{pZq} \frac{\etriple p Z q - \triple p Z q}{\etriple p Z q}
        ~=~
        1 - \min_{pZq} \frac{\triple p Z q}{\etriple p Z q} \\
        &~=~
        1 - \left( \max_{pZq} \frac{\etriple p Z q}{\triple p Z q} \right)^{-1}
        ~=~
        1 - C^{-1}
        ~.
    \end{align*}
    By \Cref{thm:nonnegmatsmiscfacts}, \eqref{eq:c} implies that $c = 1 - C^{-1}$ is an upper bound on $\specrad{\syscl{f}'(\lfp\syscl f)}$.
    We can normalize $\vec v$ (in the $\norm{\cdot}_1$ norm) as follows:
    $\vec v' = \frac{\vec v}{\sum_{pZq} \etriple p Z q}$, $\norm{\vec v'} = 1$.
    Then
    \begin{align*}
        \vec v'_{\min}
        &~\geq~
        \frac{\min_{pZq} \etriple p Z q}{|\pdastates|^2 |\abstack| \max_{pZq}\etriple p Z q} \\
        &~\geq~
        \frac{\min_{pZq} \triple p Z q}{|\pdastates|^2 |\abstack| \max_{pZq} \frac{\etriple p Z q}{\triple p Z q}}
        ~=~
        \frac{\lfp\syscl f_{\min}}{|\pdastates|^2 |\abstack| C}
        ~.
    \end{align*}    
\end{proof}

The bound from \Cref{thm:specradbound} on the spectral radius $\specrad{\jacat{\syscl f}{\lfp\syscl f}}$ is tight:
Consider e.g. the pPDA $\pda_a$ with a single state and stack symbol and the transitions $\trans{pZ}{a}{\eps}$, $\trans{pZ}{1-a}{pZZ}$.
The fundamental PPS of $\pda_a$ is $\langle Z \rangle = (1-p)\langle Z \rangle^2 + p$.
For $a > \frac{1}{2}$, $\pda_a$ is PAST with expected runtime $C = \frac{1}{2a - 1}$, and the spectral radius is indeed $2(1-a) = 1 - C^{-1}$.

The next lemma allows bounding the side lengths of a (hyper)cube fully contained in the inductive region $\{\vec u \geq \vec 0 \mid \syscl f(\vec u) \leq  \vec u\}$ (see \iftoggle{arxiv}{\Cref{app:sidelengths}}{\cite{arxiv}} for the proof).
\begin{restatable}{lemma}{sidelengths}
    \label{thm:sidelengths}
    Let $\pda = \pdainit$ be a pPDA with cleaned-up fundamental PPS $\syscl{f}$.
    If $\pda$ is cPAST, then there exists a vector $\vec v \allgreater \vec 0$, $\norm{\vec v}_{1} = 1$, such that the inequality
    \begin{align}
        \label{eq:cubeproofgoal}
        \syscl{f}(\lfp\syscl{f} + d(\vec v + \vec a))
        ~\leq~
        \lfp\syscl{f} + d(\vec v + \vec a)
    \end{align}
    holds for all 
    \begin{align}
        \label{eq:dbound}
        0
        ~\leq~
        d
        ~\leq~
        \frac{\lfp\syscl f_{\min}}{C} \frac{1}{4|\pdastates|^2 |\abstack|(|\pdastates|^2 |\abstack| + 1) ^2}
    \end{align}
    and $\vec a \geq \vec 0$ with $\maxnorm{\vec a} \leq e$, where
    \begin{align}
        \label{eq:ebound}
        e
        ~=~
        \frac{\lfp\syscl{f}_{\min}}{C^2} \frac{1}{4 |\pdastates|^4 |\abstack|^2}
    \end{align}
    where $C = \max_{pZq} \frac{\etriple p Z q}{\triple p Z q} < \infty$ is the maximal conditional expected runtime in $\pda$, and $\lfp\syscl{f}_{\min} > 0$ is the minimal entry of $\lfp\syscl{f}$.
%    In particular, the inductive region of $\syscl{f}$ contains a cube with side length TODO.
\end{restatable}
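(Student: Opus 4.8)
The plan is to make the Taylor-expansion argument behind \Cref{thm:certificates} quantitative, using as input the witness $\vec v$ produced by \Cref{thm:specradbound}: fix $\vec v \allgreater \vec 0$ with $\norm{\vec v}_1 = 1$, $\jacat{\syscl f}{\lfp\syscl f}\vec v \leq (1 - C^{-1})\vec v$, and $\vec v_{\min} \geq \lfp\syscl f_{\min}/(C|\pdastates|^2|\abstack|)$. By the right inequality of \Cref{thm:taylor}, for $\vec w := d(\vec v + \vec a) \geq \vec 0$ we have $\syscl f(\lfp\syscl f + \vec w) \leq \lfp\syscl f + \jacat{\syscl f}{\lfp\syscl f + \vec w}\vec w$, so \eqref{eq:cubeproofgoal} reduces to $\jacat{\syscl f}{\lfp\syscl f + \vec w}\vec w \leq \vec w$; for $d > 0$ this means $\jacat{\syscl f}{\lfp\syscl f + \vec w}(\vec v + \vec a) \leq \vec v + \vec a$ ($d = 0$ being trivial). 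Since the fundamental system \eqref{eq:pdasys} has degree at most $2$, write $\syscl f(\vec x) = \vec c + L\vec x + \mathcal{Q}(\vec x, \vec x)$ with $\vec c \geq \vec 0$, $L$ a non-negative matrix, and $\mathcal{Q}$ a symmetric non-negative bilinear map; then $\jacat{\syscl f}{\vec x}\vec y = L\vec y + 2\mathcal{Q}(\vec x, \vec y)$, and expanding at $\lfp\syscl f$ gives $\jacat{\syscl f}{\lfp\syscl f + \vec w}(\vec v + \vec a) = \jacat{\syscl f}{\lfp\syscl f}\vec v + \jacat{\syscl f}{\lfp\syscl f}\vec a + 2\mathcal{Q}(\vec w, \vec v + \vec a)$. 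Using $\jacat{\syscl f}{\lfp\syscl f}\vec v \leq \vec v - C^{-1}\vec v$ and $\vec a \geq \vec 0$, the goal reduces to the coordinate-wise inequality $\jacat{\syscl f}{\lfp\syscl f}\vec a + 2\mathcal{Q}(\vec w, \vec v + \vec a) \leq C^{-1}\vec v$, and since $\vec v \geq \vec v_{\min}\vec 1$ it suffices to keep the left-hand side below $C^{-1}\vec v_{\min} \geq \lfp\syscl f_{\min}/(C^2|\pdastates|^2|\abstack|)$.

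The crucial (and only non-routine) step is to bound the part of the perturbation that survives at $\vec a = \vec 0$, namely $2d\,\mathcal{Q}(\vec v, \vec v)$, \emph{sharply enough to lose only one power of $C$}. Here I would use that $\vec v$ is a normalization of the vector of first termination moments: by \Cref{thm:thelinearsystem} in its cleaned-up form \eqref{eq:thelinearsystemcl}, setting $S := \sum_{pZq}\etriple p Z q$ we get $\vec v = \lfp\syscl f / S + \jacat{\syscl f}{\lfp\syscl f}\vec v = \lfp\syscl f/S + L\vec v + 2\mathcal{Q}(\lfp\syscl f, \vec v)$, hence $2\mathcal{Q}(\lfp\syscl f, \vec v) \leq \vec v$. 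Moreover $\vec v \leq (C/S)\,\lfp\syscl f$ because $\etriple p Z q \leq C\,\triple p Z q$ for every triple, and $S \geq C\,\lfp\syscl f_{\min}$ because the triple attaining $C = \max_{pZq}\etriple p Z q / \triple p Z q$ contributes $\etriple p Z q = C\,\triple p Z q \geq C\,\lfp\syscl f_{\min}$ to $S$. Combining these via bilinearity and non-negativity of $\mathcal{Q}$, $2\mathcal{Q}(\vec v, \vec v) \leq (C/S)\,2\mathcal{Q}(\lfp\syscl f, \vec v) \leq (C/S)\,\vec v \leq \vec v/\lfp\syscl f_{\min}$, so $2d\,\mathcal{Q}(\vec v, \vec v) \leq (d/\lfp\syscl f_{\min})\,\vec v$. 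Thus the $\vec a$-free part of the reduced goal already holds once $d/\lfp\syscl f_{\min} \leq C^{-1}$, i.e. $d \leq \lfp\syscl f_{\min}/C$ — this is exactly why \eqref{eq:dbound} carries $C^{-1}$ and not $C^{-2}$, the extra polynomial factor in $|\pdastates|,|\abstack|$ stemming from sharing the error budget with the $\vec a$-dependent terms.

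For those remaining terms I would use only coarse estimates. Because $\pda$ is a pPDA, the transition weights out of each configuration sum to $1$ and $\lfp\syscl f \leq \vec 1$; a direct inspection of \eqref{eq:pdasys} then yields a bound of order $|\pdastates|$ on the row sums of $\jacat{\syscl f}{\vec x}$ for any $\vec x \leq \vec 1$, and $\maxnorm{\mathcal{Q}(\vec x, \vec y)} \leq |\pdastates|\,\maxnorm{\vec x}\,\maxnorm{\vec y}$. Since $2\mathcal{Q}(\vec w, \vec v + \vec a) - 2d\,\mathcal{Q}(\vec v, \vec v) = 2d\,(2\mathcal{Q}(\vec v,\vec a) + \mathcal{Q}(\vec a,\vec a))$, these bounds together with $\maxnorm{\vec a} \leq e$, $\maxnorm{\vec v} \leq \norm{\vec v}_1 = 1$ and $d \leq 1$ give that $\jacat{\syscl f}{\lfp\syscl f}\vec a + 2\mathcal{Q}(\vec w, \vec v + \vec a) - 2d\,\mathcal{Q}(\vec v, \vec v)$ is at most $O(|\pdastates|)\,e$ in every coordinate. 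Requiring this to stay below half of $\lfp\syscl f_{\min}/(C^2|\pdastates|^2|\abstack|)$, and the $\vec a$-free term below the other half, is satisfied by taking $d$ and $e$ of the magnitudes in \eqref{eq:dbound} and \eqref{eq:ebound}; with the harmless simplifications $e \leq 1$ and $|\pdastates|^2|\abstack|\,e \leq 1$ one checks that the stated constants are comfortably sufficient. The main obstacle is therefore not conceptual but the bookkeeping that makes a single region work simultaneously for all $\vec a$ in the cube and all $d$ in the range; the one genuinely non-obvious ingredient is the estimate $2\mathcal{Q}(\vec v, \vec v) \leq \vec v/\lfp\syscl f_{\min}$ above.
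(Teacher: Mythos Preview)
Your plan is sound and the central estimate is correct. In particular, the ``genuinely non-obvious ingredient'' $2\mathcal{Q}(\vec v,\vec v)\leq \vec v/\lfp\syscl f_{\min}$ is valid: since the normalized moment vector satisfies $\vec v=\lfp\syscl f/S+\jacat{\syscl f}{\lfp\syscl f}\vec v$ with $S=\sum_{pZq}\etriple pZq$, one gets $2\mathcal{Q}(\lfp\syscl f,\vec v)\leq \vec v$; combined with $\vec v\leq(C/S)\lfp\syscl f$ and $S\geq C\,\lfp\syscl f_{\min}$ this yields the claim by monotone bilinearity. With this and the coarse $O(|\pdastates|)$ bounds on $\jacat{\syscl f}{\lfp\syscl f}\vec a$ and on $\mathcal{Q}(\cdot,\cdot)$, the stated ranges \eqref{eq:dbound}, \eqref{eq:ebound} are indeed sufficient after routine bookkeeping.

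Your route is \emph{not} the paper's, however. The paper never exploits that $\vec v$ solves the termination-moment system. Instead it applies a \emph{second} Taylor expansion, this time to each column $\syscl f'_i$ of the Jacobian, and uses only the generic fact that for a quadratic PPS the Hessians $\syscl f''_i$ are constant matrices with entries bounded by~$2$. This gives an entry-wise perturbation bound on $\jacat{\syscl f}{\lfp\syscl f+d(\vec v+\vec a)}$, which is then combined with $\norm{\vec v}_1=1$ and $\jacat{\syscl f}{\lfp\syscl f}\vec v\leq(1-C^{-1})\vec v$; the resulting scalar inequality is solved for admissible $d,e$ (the paper even delegates the final check to a CAS). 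So the paper's argument is purely ``analytic'' and would work for any clean quadratic PPS with coefficients in $[0,1]$, whereas your argument is pPDA-specific: it uses \Cref{thm:thelinearsystem} to extract the sharp inequality $2\mathcal{Q}(\vec v,\vec v)\leq \vec v/\lfp\syscl f_{\min}$. The payoff of your approach is a transparent explanation of why the $d$-bound scales like $\lfp\syscl f_{\min}/C$ rather than $\lfp\syscl f_{\min}/C^{2}$ --- this comes out in one line from your structural bound, while in the paper's route it only emerges after the combined $d$--$e$ inequality is simplified. The payoff of the paper's approach is that it needs no knowledge of what $\vec v$ ``is''; only the spectral inequality from \Cref{thm:specradbound} and the degree/coefficient bounds are used.
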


\begin{theorem}[Complexity of inductive upper bounds]
    \label{thm:sizeOfCerts}
    Let $\pda = \pdainit$ be a pPDA with fundamental PPS $\sys f$.
    Suppose that $\pda$ is cPAST with maximal conditional expected runtime $C$ and minimal non-zero return probability $\lfp{\syscl f}_{\min}$.
    Then for every $\eps > 0$ there exists $\vec u \in \nonnegrats^{\pdastates\times\abstack\times\pdastates}$,
    with $\sys f(\vec u) \leq \vec u$ and $\maxnorm{\lfp\sys f - \vec u} \leq \eps$, 
    such that the rational numbers in $\vec u$ can be encoded as pairs of integers with bit-complexity at most:
    \begin{align}
        \label{eq:sizeOfCerts}
       \mathcal{O}\big(\log C {+} \log\frac{1}{\lfp\syscl{f}_{\min}} {+} \log|\pdastates| {+} \log|\abstack| {+} \log \frac 1 \eps \big)
    \end{align}
\end{theorem}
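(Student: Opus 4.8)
The plan is to operate with the cleaned-up fundamental PPS $\syscl f$, harvest the inductive cube guaranteed by \Cref{thm:sidelengths}, and place a coarse dyadic grid point inside it; passing back to $\sys f$ and to the full index set $\pdastates\times\abstack\times\pdastates$ is the zero-extension already used in the proof of \Cref{thm:certpast}. We may assume $\eps \leq 1$ without loss of generality, since a certificate valid for $\eps = 1$ is one for every larger $\eps$ at the same cost.

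First I would invoke \Cref{thm:sidelengths} to obtain $\vec v \allgreater \vec 0$ with $\norm{\vec v}_1 = 1$, and set $d := \min\{d_0, \eps/2\}$, where $d_0$ denotes the right-hand side of \eqref{eq:dbound} and $e$ the quantity in \eqref{eq:ebound}; note $d_0, e \leq \tfrac14$ because $\lfp\syscl f_{\min} \leq 1$ and $C \geq 1$. Writing $\vec c_0 := \lfp\syscl f + d\vec v$ and $\ell := de$, \Cref{thm:sidelengths} (with $d \leq d_0$ and $\vec a := \vec w/d$) gives $\syscl f(\vec c_0 + \vec w) \leq \vec c_0 + \vec w$ for all $\vec w$ with $\vec 0 \leq \vec w \leq \ell\,\vec 1$, so the axis-aligned cube with lower corner $\vec c_0 \allgreater \vec 0$ and side length $\ell$ lies in $\{\vec x \geq \vec 0 \mid \syscl f(\vec x) \leq \vec x\}$. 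I would then round into that cube: choose the precision $k := \lceil \log_2(1/\ell)\rceil$ so that $2^{-k} \leq \ell$, and put $\vec u'_i := 2^{-k}\lceil 2^k(\vec c_0)_i\rceil$ for $i$ in the index set $J' \subseteq \pdastates\times\abstack\times\pdastates$ of $\syscl f$. Since each interval $[(\vec c_0)_i,(\vec c_0)_i + \ell]$ has length at least $2^{-k}$ it contains $\vec u'_i$; hence $\vec c_0 \leq \vec u' \leq \vec c_0 + \ell\,\vec 1$, so $\vec u'$ lies in the cube and $\syscl f(\vec u') \leq \vec u'$. Extending by $\vec u_j := 0$ for $j \notin J'$ preserves $\sys f(\vec u) \leq \vec u$ (as in \Cref{thm:certpast}) and yields $\maxnorm{\lfp\sys f - \vec u} = \maxnorm{\lfp\syscl f - \vec u'} \leq d\maxnorm{\vec v} + de \leq 2d \leq \eps$, using $\maxnorm{\vec v} \leq \norm{\vec v}_1 = 1$ and $e \leq 1$.

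It remains to read off the bit-complexity and to bound $k$. All nonzero entries of $\vec u$ share the denominator $2^k$, with numerator $\lceil 2^k(\vec c_0)_i\rceil \leq 2^k(\vec c_0)_i + 1 \leq 2^{k+1}$ since $(\vec c_0)_i = (\lfp\syscl f)_i + d\vec v_i < 1 + \tfrac14 < 2$ (return probabilities are at most $1$ and $d \leq \tfrac14$); thus each entry is a pair of integers of bit size $\mathcal O(k)$. From $2^{-k} \leq de$ and $1/d = \max\{1/d_0, 2/\eps\}$ (with all quantities at least $1$) one obtains $k \leq \log_2(1/d_0) + \log_2(1/e) + \log_2(1/\eps) + \mathcal O(1)$. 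Substituting the explicit values of $d_0$ and $e$ from \eqref{eq:dbound} and \eqref{eq:ebound}, and using $\log(|\pdastates|^2|\abstack|+1) = \mathcal O(\log|\pdastates| + \log|\abstack|)$, gives exactly \eqref{eq:sizeOfCerts}.

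I do not expect a genuine obstacle inside this argument: the substantive work sits in \Cref{thm:sidelengths} and, beneath it, in the quantitative spectral gap $\specrad{\jacat{\syscl f}{\lfp\syscl f}} \leq 1 - C^{-1}$ of \Cref{thm:specradbound}, while the present theorem is dyadic rounding plus a logarithm calculation. The only step I would write out carefully is that the zero-extension of an inductive point of $\syscl f$ remains inductive for $\sys f$: for $j \notin J'$ every monomial of $\sys f_j$ mentions a variable that is $0$ in $\lfp\sys f$ — otherwise $\sys f_j(\lfp\sys f) > 0 = (\lfp\sys f)_j$ — and hence $0$ in $\vec u$, so $\sys f_j(\vec u) = 0 = \vec u_j$; and for $j \in J'$, $\syscl f_j$ arises from $\sys f_j$ by setting the removed variables to $0$, so $\sys f_j(\vec u) = \syscl f_j(\vec u') \leq \vec u'_j$.
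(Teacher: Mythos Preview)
Your proof is correct and follows essentially the same route as the paper's: invoke \Cref{thm:sidelengths} to obtain an axis-aligned cube of side length $de$ inside the inductive region, observe that such a cube near the origin contains a rational point of bit-complexity $\mathcal{O}(\log(1/(de)))$, then substitute the explicit bounds \eqref{eq:dbound} and \eqref{eq:ebound} and handle $\eps$ by shrinking $d$. Your write-up is more explicit than the paper's (you spell out the dyadic rounding and the zero-extension from $\syscl f$ back to $\sys f$, which the paper leaves implicit here), but there is no genuine difference in approach.
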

\begin{proof}
    Note that a (hyper)cube in $\reals^n$ located within distance $\mathcal{O}(1)$ to the origin and side length $a > 0$ contains a rational vector whose entries can be encoded as pairs of integers with at most $\mathcal{O}(\log a)$ bits.
    \eqref{eq:sizeOfCerts} follows from \Cref{thm:sidelengths} by multiplying the bounds \eqref{eq:dbound}  and \eqref{eq:ebound}, taking logarithms, and omitting constant factors.
    $\eps$ can be taken into account by scaling $d$ in \eqref{eq:cubeproofgoal} appropriately.
\end{proof}

%We now show that the bound \eqref{eq:sizeOfCerts} from \Cref{thm:sizeOfCerts} is essentially tight.
%More specifically, we exhibit two families of pPDA with the following properties:
%
%\begin{theorem}
%    For $i \in \{1,2\}$, there exist families $(\pda^i_n)_{n \geq 1}$ of pPDA with fundamental PPS $\sys f^i$ such that any rational inductive upper bound $\vec u \neq \vec 1$.
%    \begin{enumerate}
%        \item  needs at least $\Omega(\log 1/\syscl f_{\min})$ bits, and $1/\syscl f_{\min} = 2^{2^n}$.
%    \end{enumerate}
%\end{theorem}
%
%In the first family $(\pda^1_n)_{n \geq 1}$, any non-trivial rational inductive upper bound needs at least $\Omega(\log 1/\syscl f_{\min})$ bits, and $1/\syscl f_{\min} = 2^{2^n}$.
%Moreover, $C \in \mathcal{O}(1)$ for this family.
%
%In the second family $(\pda^2_n)_{n \geq 1}$, non-trivial certificates need at least $\Omega(\log C)$ many bits, $C \in \Omega(2^{2^n})$, and $1/\lfp\syscl f_{\min}$ remains bounded by a constant.

%\subsection{Lower Bounds}

In the remainder of this section we discuss some bounds on the quantities occurring in \eqref{eq:sizeOfCerts}.
It is already known that $1/\lfp\syscl f_{\min}$ may be doubly-exponential in the size of $\pda$ (this means that $\log(1 / \lfp\syscl{f}_{\min})$ is exponential)~\cite{DBLP:conf/stacs/EtessamiY05}.
This can be seen in the following example.

\begin{example}
    \label{ex:repsquaring}
    Observe that the stochastic context-free grammar (SCFG)
    \begin{align}
        & Z_1 \to X^{2^n} \label{eq:scfgsquaring} \\
        & X \overset{\nicefrac 1 2}{\to} a \qquad X \overset{\nicefrac 1 2}{\to} b
    \end{align}
    produces the terminal string $a^{2^n}$ with probability $2^{-2^n}$.
    It is straightforward to simulate sampling this SCFG using a pPDA (the rule \eqref{eq:scfgsquaring} can be realized with $n$ stack symbols using a repeated squaring trick\cite{DBLP:conf/stacs/EtessamiY05}).
    The example can be extended to see that the expected runtimes, 1st termination moments, and conditional expected runtimes can all be doubly-exponential in magnitude in the size of $\pda$ as well; this works by sampling the grammar repeatedly until the outcome $a^{2^n}$ occurs for the first time, which takes $2^{2^n}$ trials on average.
    See \Cref{fig:family1} for a concrete pPDA implementing this behavior.
\end{example}

Given the close relationship between $\lfp\sys f$ and the 1st termination moments (and hence conditional expected runtimes) established in \Cref{thm:thelinearsystem}, it natural to ask whether very large runtimes only occur in pPDA with very small non-zero return probabilities (i.e., where $1/\lfp\syscl f_{\min}$ is very large; this is indeed the case in \Cref{ex:repsquaring}).
Perhaps surprisingly, the answer is \emph{no}.
To see this, we exhibit an example family where the large runtime is \enquote{caused} by a different phenomenon than in \Cref{fig:family1}.

\begin{theorem}
    There exists a familiy $\{\pda_n \mid n \geq 1\}$ of pPDA with distinguished initial configuration $pY$ such that $\ert{p}{Y} = \etriple p Y p \geq 2^{2^n}$ and the minimal non-zero return probability is $\lfp\syscl f_{\min} \geq c$, for a constant $c > 0$.
    The encoding size of $\pda_n$ is linear in $n$.
\end{theorem}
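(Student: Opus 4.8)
The plan is to build $\pda_n$ as the composition of two gadgets. The inner one, an \emph{offset generator}, uses only transition probabilities in $\{\tfrac14,\tfrac12,\tfrac34,1\}$ and produces a stack symbol $W_n$ whose return probability is $\triple{p}{W_n}{q}=\tfrac12+2^{-2^n-1}$; the outer one, an \emph{amplifier}, runs a gambler's-ruin on a unary stack counter whose per-step bias is exactly the doubly-exponentially small number $2^{-2^n}$, turning it into a doubly-exponentially large expected runtime. The conceptual point of the example is that here the doubly-exponentially small quantity surfaces not as a tiny return probability (as in the mechanism behind \Cref{ex:repsquaring}) but as the deviation of a near-$\tfrac12$ return probability from $\tfrac12$, so that every non-zero return probability stays bounded away from $0$.

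For the offset generator I introduce stack symbols $W_0,\dots,W_n$ plus $O(1)$ auxiliary marker symbols per level, and two states $p,q$. The base case is $pW_0\to q\eps$ with probability $\tfrac34$ and $pW_0\to p\eps$ with probability $\tfrac14$, so $\triple{p}{W_0}{q}=\tfrac34$. For $i\geq 1$, the rule $pW_i\to p\,W_{i-1}\,N_i$ together with the marker transitions implements ``run two independent copies of $W_{i-1}$ from state $p$ and return in $q$ exactly if the two returned states agree''; the markers are needed only to re-enter state $p$ for the second copy while remembering the first outcome. Writing $x_i:=\triple{p}{W_i}{q}$ and unfolding \Cref{thm:pdafundamentalsys} (no symbol is self-recursive, so bottom-up evaluation really gives the least fixed point) yields the recursion $x_i=x_{i-1}^2+(1-x_{i-1})^2$, i.e.\ $x_i-\tfrac12=2(x_{i-1}-\tfrac12)^2$; with $x_0=\tfrac34$ this solves to $x_n=\tfrac12+2^{-2^n-1}$, and all $x_i$ lie in $[\tfrac12,\tfrac34]$. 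Moreover processing $W_i$ always terminates in a fixed number $t_i=O(2^i)$ of transitions.

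For the amplifier I keep a unary counter on the stack (symbol $C$ for a counter unit, $\bot$ at the bottom), start in $pY$ with $pY\to r\,C\,\bot$, and decide each counter step by running a fresh $W_n$: the rule $rC\to p\,W_n\,C$ runs $W_n$, and on its return we decrement ($qC\to r\eps$, probability $x_n=\tfrac12+2^{-2^n-1}$) or increment ($pC\to r\,C\,C$, probability $1-x_n=\tfrac12-2^{-2^n-1}$); reaching $\bot$ pops it and ends in state $p$ ($r\bot\to p\eps$). This is a birth--death walk on the counter with drift $2^{-2^n}$ towards $0$, so $\pda_n$ is AST and always terminates in state $p$, giving $\ert{p}{Y}=\etriple{p}{Y}{p}$. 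By the standard gambler's-ruin identity the expected number of counter steps from initial value $1$ is exactly $2^{2^n}$, and each step costs at least one transition, so $\ert{p}{Y}\geq 2^{2^n}$; since each step in fact costs only $t_n+O(1)$ transitions, $\pda_n$ is PAST. Filling in arbitrary (dummy) transitions for the remaining, unreachable configurations so that $\pda_n$ is a syntactically valid pPDA, and inspecting every symbol, shows that all non-zero return probabilities lie in $[\tfrac14,1]$, hence $\lfp\syscl{f}_{\min}\geq\tfrac14=:c$; and since $\pda_n$ has $O(n)$ stack symbols, $3$ states, and $O(1)$-size rational weights, its encoding size is linear in $n$.

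The main obstacle is organizational rather than deep: one has to pin down the marker transitions so that the ``agree/disagree'' rule produces exactly $x_i=x_{i-1}^2+(1-x_{i-1})^2$, argue that bottom-up evaluation coincides with the least fixed point of the fundamental system, and then carry out the routine but lengthy case analysis over all symbols --- including the dummy transitions --- verifying that no non-zero return probability drops below $\tfrac14$. The gambler's-ruin estimate, the a.s.-termination of the birth--death walk, and the $O(2^n)$ bound on the cost of processing $W_n$ are all elementary.
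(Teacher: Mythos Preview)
Your construction is correct and follows essentially the same approach as the paper: build an $O(n)$-size gadget whose return probability is $\tfrac12+2^{-2^n-1}$ via the squaring recursion $x_i-\tfrac12=2(x_{i-1}-\tfrac12)^2$, and feed it into a gambler's-ruin amplifier on a unary stack to obtain expected runtime $\geq 2^{2^n}$ while keeping all non-zero return probabilities $\geq\tfrac14$. The only notable difference is implementational: the paper realises the squaring by iterating the symmetric two-state Markov chain $\bigl(\begin{smallmatrix}3/4&1/4\\1/4&3/4\end{smallmatrix}\bigr)$ and exploits the pPDA's native state-passing (processing $Z_iZ_i$ \emph{is} matrix squaring, no markers needed), whereas you reset the state to $p$ explicitly with auxiliary marker symbols $N_i$---slightly bulkier but equivalent.
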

\begin{proof}
    $\pda_n$ is depicted in \Cref{fig:family2}.
    The overall idea of $\pda_n$ is to simulate the pBPA
    \begin{align}
        \label{eq:family2idea}
        \trans{Y}{a_n}{\eps}
        \qquad
        \trans{Y}{1-a_n}{YY}
    \end{align} 
    with expected runtime $(2a_n - 1)^{-1}$ for probabilities $a_n$ that approach $\nicefrac 1 2$ very fast from above as $n$ increases.
%    More specifically, $(a_n - 1/2)^{-1} \in \Omega(2^{2^n})$.
    
    To simulate probabilities $a_n > 1 - a_n$ that are very close to each other, i.e., both are \emph{almost} $\nicefrac 1 2$, the idea is as follows:
    Consider a symmetric two-state DTMC where the probability to stay in the current state is $\nicefrac 3 4$ and the probability to switch states is $\nicefrac 1 4$.
    One can show that for all $k \geq 0$,
    \begin{align}
        \label{eq:afterksteps}
        \begin{pmatrix}
            \nicefrac 3 4 & \nicefrac 1 4\\
            \nicefrac 1 4 & \nicefrac 3 4
        \end{pmatrix}^k
        ~=~
        \frac 1 2
        \begin{pmatrix}
        1 + 2^{-k} & 1 - 2^{-k}\\
        1 - 2^{-k} & 1 + 2^{-k}
        \end{pmatrix}
        ~.
    \end{align}
    This means that if the DTMC is started with initial distribution $\begin{pmatrix}1 & 0\end{pmatrix}$, then after $k$ steps the distribution is $\begin{pmatrix}\frac 1 2 + \frac{1}{2^{k+1}} & \frac 1 2 - \frac{1}{2^{k+1}} \end{pmatrix}$, i.e., one bit of precision is gained per step.
    
    The pPDA $\pda_n$ simulates the DTMC described above for $k = 2^n$ steps using a repeated squaring technique as in \Cref{ex:repsquaring} (states $s$ and $r$).
    From \eqref{eq:afterksteps} it follows that $a_n = \triple{s}{Z_1}{s} = \frac 1 2 + \frac{1}{2^{2^n+1}}$ and $1 - a_n = \triple{s}{Z_1}{r} = \frac 1 2 - \frac{1}{2^{2^n+1}}$. 
    Thus the expected runtime of $\pda_n$ from initial configuration $pY$ is at least $(2a_n - 1)^{-1} = 2^{2^n}$.
    
%    \eqref{eq:afterksteps} implies that the DTMC converges to a uniform stationary distribution for any given initial distribution.
%    Moreover, roughly one bit of precision is gained in every step.
%    Now suppose that we start in initial distribution $\begin{pmatrix}1 & 0\end{pmatrix}$.
%    
%    This is achieved by first pushing the word $X^{2^n}$ on the stack using the same repeated squaring trick as in \Cref{ex:repsquaring}.
%    $\pda_n$ then enters state $q$ and starts popping the $X$ from the stack.
%    After each pop, it either goes to state $s$ or $r$ with probability $1/4$ each, or does nothing.
%    If it ever reaches the situation where all $X$ are popped while being still in $q$ (this happens only with probability $2^{-2^n}$) then it moves to $s$ with certainty.
%    For this reason, there is a very small preference for going to state $s$ instead of $r$.
%    The states $s$ and $r$ are responsible for executing the rules \eqref{eq:family2idea} once and restarting the process.
\end{proof}

There are also some upper bounds known on $1/\lfp\syscl f_{\min}$.
For instance, \cite[Lem.~5.13]{DBLP:journals/siamcomp/EsparzaKL10} proves a doubly-exponential upper bound for the case where $\syscl f$ is a strongly connected quadratic PPS.
For the (conditional) expected runtimes $C$ and the 1st termination moments we are not aware of any upper bound at all in the case of general pPDA (however, recall from \Cref{thm:pbpa}, that in certain sub-classes such as pBPA, the expected runtimes \emph{can} be bounded).

\begin{figure}[t]
    \centering
    \begin{tikzpicture}[every text node part/.style={align=left},thick,initial text=$\bot$,node distance=10mm and 16mm]
    \node[state,initial] (p) {$p$};
    \node[state,right=of p] (q) {$q$};
    \node[state,right=of q] (r) {$r$};
    \draw[->] (p) edge[loop below] node[below]  {$(1, X, \eps)$ \\ $(1, Z_i, \eps)$, \scriptsize{$1 {<} i {\leq} n$}} (p);
    \draw[->] (q) edge[loop below] node[right,yshift=-2mm]  {$(1, Z_{i-1}, Z_{i}Z_{i})$, \scriptsize{$1 {<} i {\leq} n$} \\ $(1, Z_n, XX)$ \\ $(\nicefrac 1 2, X, \eps)$} (q);
    \draw[->] (p) -- node[below] {$(1,\bot,Z_1\bot)$} (q);
    \draw[->] (q) -- node[above] {$(1,\bot,\eps)$} (r);
    \draw[->] (q) edge[bend right] node[above] {$(\nicefrac 1 2,X,\eps)$} (p);
    \end{tikzpicture}
    \caption{A family of PAST pPDA with doubly exponential expected runtime $\ert{p}{\bot} \in \Theta(2^{2^n})$. The minimal non-zero return probability is $\lfp\syscl f_{\min} = \triple{q}{Z_1}{q} = 2^{-2^n}$. Transitions not reachable from $p\bot$ are omitted.}
    \label{fig:family1}
\end{figure}

%\begin{figure}[t]
%    \centering
%    \begin{tikzpicture}[every text node part/.style={align=left},thick,initial text=$Y$,node distance=10mm and 16mm]
%    \node[state,initial] (p) {$p$};
%    \node[state,right=of p] (q) {$q$};
%    \node[state,below=of q] (r) {$r$};
%    \node[state,above=of q] (s) {$s$};
%    
%    \draw[->] (s) edge[loop right] node[right]  {$(1, X, \eps)$ \\ $(1, Z_i, \eps)$, \scriptsize{$1 {<} i {\leq} n$}} (s);
%    \draw[->] (r) edge[loop right] node[right]  {$(1, X, \eps)$ \\ $(1, Z_i, \eps)$, \scriptsize{$1 {<} i {\leq} n$}} (r);
%    \draw[->] (q) edge[loop right] node[right]  {$(1, Z_{i-1}, Z_{i}Z_{i})$, \scriptsize{$1 {<} i {\leq} n$} \\ $(1, Z_n, XX)$ \\ $(\nicefrac 1 2, X, \eps)$} (q);
%    \draw[->] (p) -- node[below] {$(1,Y,Z_1Y)$} (q);
%    \draw[->] (q) -- node[right] {$(\nicefrac 1 4,X,\eps)$} (r);
%    \draw[->] (q) -- node[right] {$(\nicefrac 1 4,X,\eps)$\\ $(1, Y, Y)$} (s);
%    \draw[->] (s) edge[bend right] node[above left] {$(1,Y,\eps)$} (p);
%    \draw[->] (r) edge[bend left] node[below left] {$(1,Y,YY)$} (p);
%    \end{tikzpicture}
%    \caption{A family of PAST pPDA with doubly exponential expected runtime $\Omega(2^{2^n})$ from initial configuration $pY$ and bounded minimal return probabilities $\lfp\syscl f_{\min}$. Transitions not reachable from $pY$ are omitted.}
%    \label{fig:family2old}
%\end{figure}

\begin{figure}[t]
    \centering
    \begin{tikzpicture}[every text node part/.style={align=left},thick,initial text=$Y$,node distance=10mm and 16mm]
    \node[state,initial] (p) {$p$};
%    \node[state,right=of p] (q) {$q$};
    \node[state,below=of q] (r) {$r$};
    \node[state,above=of q] (s) {$s$};
    
    \draw[->] (s) edge[loop right] node[right]  {$(1, Z_{i-1}, Z_{i}Z_{i})$, \scriptsize{$1 {<} i {\leq} n$} \\ $(1, Z_n, XX)$ \\ $(\nicefrac 3 4, X, \eps)$} (s);
    \draw[->] (s) edge[bend right] node[right]  {$(\nicefrac 1 4, X, \eps)$} (r);
    \draw[->] (r) edge[loop right] node[right]  {$(1, Z_{i-1}, Z_{i}Z_{i})$, \scriptsize{$1 {<} i {\leq} n$} \\ $(1, Z_n, XX)$ \\ $(\nicefrac 3 4, X, \eps)$} (r);
    \draw[->] (r) edge[bend right=55] node[right]  {$(\nicefrac 1 4, X, \eps)$} (s);
%    \draw[->] (q) edge[loop right] node[right]  {$(1, Z_{i-1}, Z_{i}Z_{i})$, \scriptsize{$1 {<} i {\leq} n$} \\ $(1, Z_n, XX)$ \\ $(\nicefrac 1 2, X, \eps)$} (q);
    \draw[->] (p) -- node[sloped,above] {$(1,Y,Z_1Y)$} (s);
%    \draw[->] (q) -- node[right] {$(\nicefrac 1 4,X,\eps)$} (r);
%    \draw[->] (q) -- node[right] {$(\nicefrac 1 4,X,\eps)$\\ $(1, Y, Y)$} (s);
    \draw[->] (s) edge[bend right=45] node[above left] {$(1,Y,\eps)$} (p);
    \draw[->] (r) edge[bend left=45] node[below left] {$(1,Y,YY)$} (p);
    \end{tikzpicture}
    \caption{A family of PAST pPDA with doubly exponential expected runtime $\ert{p}{Y} \in \Theta(2^{2^n})$ and bounded minimal return probabilities $\lfp\syscl f_{\min} \geq \nicefrac 1 4$. Transitions not reachable from $pY$ are omitted.}
    \label{fig:family2}
\end{figure}

%\subsection{cPAST and Hardness}
%
%\todo{EVTL WEGLASSEN}
%
%A natural question is whether some of the known hardness results established in \cite{EK09} can be improved when cPAST is assumed.
%Here we provide some evidence that this is \emph{not} the case.
%More specifically, \cite[Thm.~5.1]{EK09} shows that deciding exact bounds on the termination probability of pBPA (or equivalently, 1-exit RMC or SCFG) is hard for the \emph{square-root-sum} (\textsc{SQRT-SUM}) problem under $\P$-time many-one reductions.
%Hoever, the proof of \cite[Thm.~5.1]{EK09} constructs pBPA that are not cPAST.
%It turns out that it is nevertheless possible to adapt the construction to obtain the following:
%
%\begin{theorem}
%    Given a pBPA $\pda$ with a distinguished symbol $Z$ and a rational number $c \in [0,1]$, the problem to decide whether $\pda$ terminates with probability $\geq c$ from initial configuration $Z$ is \textsc{SQRT-SUM}-hard even if $\pda$ is cPAST.
%\end{theorem}
\section{Conclusions and Open Questions}
\label{sec:conclusion}

We have shown that the (large) class of pPDA that are cPAST admits easy-to-check certificates for two-sided bounds on the return probabilities $\triple p Z q$.
%The latter play a central role in virtually all quantitative pPDA analyses.
On the equation system level, we have linked cPAST to the non-singularity of the system's Jacobi matrix $\idmat - \syscl f'(\lfp\syscl f)$.
Moreover, we have presented a novel characterization of PAST and showed that it can always be certified (provided the pPDA at hand is in fact PAST).

Based on our results we conjecture the following:
\emph{Almost all pPDA are cPAST}, and in particular, \emph{almost all pPDA that are AST are also PAST}.
This is because for a random pPDA $\pda$ (with real-valued weights sampled from continuous intervals), the matrix $\syscl f'(\lfp\syscl f)$ is essentially a random matrix, and random matrices are almost-surely non-singular.
We leave an appropriate formalization of this result for future work.
\cite{tacas} provides some empirical evidence that singular $\idmat - \syscl f'(\lfp\syscl f)$ occur indeed very rarely in practice.
Also note that \emph{all} finite DTMC are cPAST, and it would be interesting to see how this evolves as one ascends to even higher levels in the generative Chomsky hierarchy~\cite{icard2020calibrating}.

Further open problems include finding a general upper bound on the expected runtimes, and on the exact complexity of computing certificates.

% conference papers do not normally have an appendix

% use section* for acknowledgment
%\section*{Acknowledgment}
\section*{Acknowledgment}
We thank the anonymous reviewers for their detailed feedback which led, among many others improvements, to a significant simplification of \Cref{thm:lower}.

% trigger a \newpage just before the given reference
% number - used to balance the columns on the last page
% adjust value as needed - may need to be readjusted if
% the document is modified later
%\IEEEtriggeratref{8}
% The "triggered" command can be changed if desired:
%\IEEEtriggercmd{\enlargethispage{-5in}}

% references section

% can use a bibliography generated by BibTeX as a .bbl file
% BibTeX documentation can be easily obtained at:
% http://mirror.ctan.org/biblio/bibtex/contrib/doc/
% The IEEEtran BibTeX style support page is at:
% http://www.michaelshell.org/tex/ieeetran/bibtex/
\bibliographystyle{IEEEtran}
% argument is your BibTeX string definitions and bibliography database(s)
\bibliography{references}

% Generated by IEEEtran.bst, version: 1.14 (2015/08/26)
\begin{thebibliography}{10}
\providecommand{\url}[1]{#1}
\csname url@samestyle\endcsname
\providecommand{\newblock}{\relax}
\providecommand{\bibinfo}[2]{#2}
\providecommand{\BIBentrySTDinterwordspacing}{\spaceskip=0pt\relax}
\providecommand{\BIBentryALTinterwordstretchfactor}{4}
\providecommand{\BIBentryALTinterwordspacing}{\spaceskip=\fontdimen2\font plus
\BIBentryALTinterwordstretchfactor\fontdimen3\font minus
  \fontdimen4\font\relax}
\providecommand{\BIBforeignlanguage}[2]{{%
\expandafter\ifx\csname l@#1\endcsname\relax
\typeout{** WARNING: IEEEtran.bst: No hyphenation pattern has been}%
\typeout{** loaded for the language `#1'. Using the pattern for}%
\typeout{** the default language instead.}%
\else
\language=\csname l@#1\endcsname
\fi
#2}}
\providecommand{\BIBdecl}{\relax}
\BIBdecl

\bibitem{DBLP:conf/lics/EsparzaKM04}
J.~Esparza, A.~Kucera, and R.~Mayr, ``Model checking probabilistic pushdown
  automata,'' in \emph{{LICS}}.\hskip 1em plus 0.5em minus 0.4em\relax {IEEE}
  Computer Society, 2004, pp. 12--21.

\bibitem{DBLP:conf/lics/OlmedoKKM16}
F.~Olmedo, B.~L. Kaminski, J.~Katoen, and C.~Matheja, ``{Reasoning about
  Recursive Probabilistic Programs},'' in \emph{{LICS}}.\hskip 1em plus 0.5em
  minus 0.4em\relax {ACM}, 2016, pp. 672--681.

\bibitem{DBLP:conf/stacs/EtessamiY05}
K.~Etessami and M.~Yannakakis, ``{Recursive Markov Chains, Stochastic Grammars,
  and Monotone Systems of Nonlinear Equations},'' in \emph{{STACS}}, ser.
  Lecture Notes in Computer Science, vol. 3404.\hskip 1em plus 0.5em minus
  0.4em\relax Springer, 2005, pp. 340--352.

\bibitem{DBLP:journals/jacm/EtessamiY09}
------, ``{Recursive Markov chains, stochastic grammars, and monotone systems
  of nonlinear equations},'' \emph{J. {ACM}}, vol.~56, no.~1, pp. 1:1--1:66,
  2009.

\bibitem{DBLP:conf/icalp/BrazdilKKV11}
T.~Br{\'{a}}zdil, S.~Kiefer, A.~Kucera, and I.~H. Varekov{\'{a}}, ``Runtime
  analysis of probabilistic programs with unbounded recursion,'' in
  \emph{{ICALP} {(2)}}, ser. Lecture Notes in Computer Science, vol.
  6756.\hskip 1em plus 0.5em minus 0.4em\relax Springer, 2011, pp. 319--331.

\bibitem{DBLP:journals/jcss/BrazdilKKV15}
------, ``Runtime analysis of probabilistic programs with unbounded
  recursion,'' \emph{J. Comput. Syst. Sci.}, vol.~81, no.~1, pp. 288--310,
  2015.

\bibitem{DBLP:conf/lics/EsparzaKM05}
J.~Esparza, A.~Kucera, and R.~Mayr, ``{Quantitative Analysis of Probabilistic
  Pushdown Automata: Expectations and Variances},'' in \emph{{LICS}}.\hskip 1em
  plus 0.5em minus 0.4em\relax {IEEE} Computer Society, 2005, pp. 117--126.

\bibitem{DBLP:conf/fsttcs/BrazdilEK09}
T.~Br{\'{a}}zdil, J.~Esparza, and S.~Kiefer, ``On the memory consumption of
  probabilistic pushdown automata,'' in \emph{{FSTTCS}}, ser. LIPIcs,
  vol.~4.\hskip 1em plus 0.5em minus 0.4em\relax Schloss Dagstuhl -
  Leibniz-Zentrum f{\"{u}}r Informatik, 2009, pp. 49--60.

\bibitem{DBLP:conf/stacs/BrazdilKS05}
T.~Br{\'{a}}zdil, A.~Kucera, and O.~Strazovsk{\'{y}}, ``On the decidability of
  temporal properties of probabilistic pushdown automata,'' in \emph{{STACS}},
  ser. Lecture Notes in Computer Science, vol. 3404.\hskip 1em plus 0.5em minus
  0.4em\relax Springer, 2005, pp. 145--157.

\bibitem{DBLP:conf/qest/YannakakisE05}
M.~Yannakakis and K.~Etessami, ``Checking {LTL} properties of recursive
  {M}arkov chains,'' in \emph{{QEST}}.\hskip 1em plus 0.5em minus 0.4em\relax
  {IEEE} Computer Society, 2005, pp. 155--165.

\bibitem{DBLP:journals/lmcs/KuceraEM06}
A.~Kucera, J.~Esparza, and R.~Mayr, ``Model checking probabilistic pushdown
  automata,'' \emph{Log. Methods Comput. Sci.}, vol.~2, no.~1, 2006.

\bibitem{DBLP:conf/fossacs/WinklerGK22}
T.~Winkler, C.~Gehnen, and J.~Katoen, ``{Model Checking Temporal Properties of
  Recursive Probabilistic Programs},'' in \emph{FoSSaCS}, ser. Lecture Notes in
  Computer Science, vol. 13242.\hskip 1em plus 0.5em minus 0.4em\relax
  Springer, 2022, pp. 449--469.

\bibitem{DBLP:journals/jcss/BrazdilBFK14}
T.~Br{\'{a}}zdil, V.~Brozek, V.~Forejt, and A.~Kucera, ``Branching-time
  model-checking of probabilistic pushdown automata,'' \emph{J. Comput. Syst.
  Sci.}, vol.~80, no.~1, pp. 139--156, 2014.

\bibitem{DBLP:journals/iandc/Huang0K19}
M.~Huang, H.~Fu, and J.~Katoen, ``Deciding probabilistic simulation between
  probabilistic pushdown automata and finite-state systems,'' \emph{Inf.
  Comput.}, vol. 268, 2019.

\bibitem{DBLP:journals/acta/BrazdilKS08}
T.~Br{\'{a}}zdil, A.~Kucera, and O.~Strazovsk{\'{y}}, ``Deciding probabilistic
  bisimilarity over infinite-state probabilistic systems,'' \emph{Acta
  Informatica}, vol.~45, no.~2, pp. 131--154, 2008.

\bibitem{DBLP:journals/fmsd/BrazdilEKK13}
T.~Br{\'{a}}zdil, J.~Esparza, S.~Kiefer, and A.~Kucera, ``Analyzing
  probabilistic pushdown automata,'' \emph{Formal Methods Syst. Des.}, vol.~43,
  no.~2, pp. 124--163, 2013.

\bibitem{DBLP:reference/hfl/Kuich97}
W.~Kuich, ``Semirings and formal power series: Their relevance to formal
  languages and automata,'' in \emph{Handbook of Formal Languages {(1)}}.\hskip
  1em plus 0.5em minus 0.4em\relax Springer, 1997, pp. 609--677.

\bibitem{DBLP:journals/jacm/StewartEY15}
A.~Stewart, K.~Etessami, and M.~Yannakakis, ``{Upper Bounds for Newton's Method
  on Monotone Polynomial Systems, and P-Time Model Checking of Probabilistic
  One-Counter Automata},'' \emph{J. {ACM}}, vol.~62, no.~4, pp. 30:1--30:33,
  2015.

\bibitem{DBLP:conf/stoc/KieferLE07}
S.~Kiefer, M.~Luttenberger, and J.~Esparza, ``On the convergence of {N}ewton's
  method for monotone systems of polynomial equations,'' in
  \emph{{STOC}}.\hskip 1em plus 0.5em minus 0.4em\relax {ACM}, 2007, pp.
  217--226.

\bibitem{DBLP:conf/stacs/EsparzaKL08}
J.~Esparza, S.~Kiefer, and M.~Luttenberger, ``{Convergence Thresholds of
  Newton's Method for Monotone Polynomial Equations},'' in \emph{{STACS}}, ser.
  LIPIcs, vol.~1.\hskip 1em plus 0.5em minus 0.4em\relax Schloss Dagstuhl -
  Leibniz-Zentrum f{\"{u}}r Informatik, Germany, 2008, pp. 289--300.

\bibitem{DBLP:journals/siamcomp/EsparzaKL10}
------, ``{Computing the Least Fixed Point of Positive Polynomial Systems},''
  \emph{{SIAM} J. Comput.}, vol.~39, no.~6, pp. 2282--2335, 2010.

\bibitem{tacas}
T.~Winkler and J.~Katoen, ``Certificates for probabilistic pushdown automata
  via optimistic value iteration,'' in \emph{{TACAS} {(2)}}, ser. Lecture Notes
  in Computer Science, vol. 13994.\hskip 1em plus 0.5em minus 0.4em\relax
  Springer, 2023, pp. 391--409.

\bibitem{DBLP:journals/csr/McConnellMNS11}
R.~M. McConnell, K.~Mehlhorn, S.~N{\"{a}}her, and P.~Schweitzer, ``Certifying
  algorithms,'' \emph{Comput. Sci. Rev.}, vol.~5, no.~2, pp. 119--161, 2011.

\bibitem{DBLP:journals/corr/abs-1809-10756}
J.~van~de Meent, B.~Paige, H.~Yang, and F.~Wood, ``{An Introduction to
  Probabilistic Programming},'' \emph{CoRR}, vol. abs/1809.10756, 2018.

\bibitem{DBLP:conf/icalp/EtessamiWY08}
K.~Etessami, D.~Wojtczak, and M.~Yannakakis, ``{Recursive Stochastic Games with
  Positive Rewards},'' in \emph{{ICALP} {(1)}}, ser. Lecture Notes in Computer
  Science, vol. 5125.\hskip 1em plus 0.5em minus 0.4em\relax Springer, 2008,
  pp. 711--723.

\bibitem{DBLP:phd/ethos/Wojtczak09}
D.~Wojtczak, ``Recursive probabilistic models : efficient analysis and
  implementation,'' Ph.D. dissertation, University of Edinburgh, {UK}, 2009.

\bibitem{ovi}
A.~Hartmanns and B.~L. Kaminski, ``Optimistic value iteration,'' in \emph{{CAV}
  {(2)}}, ser. Lecture Notes in Computer Science, vol. 12225.\hskip 1em plus
  0.5em minus 0.4em\relax Springer, 2020, pp. 488--511.

\bibitem{DBLP:conf/tacas/FunkeJB20}
F.~Funke, S.~Jantsch, and C.~Baier, ``{Farkas Certificates and Minimal
  Witnesses for Probabilistic Reachability Constraints},'' in \emph{{TACAS}
  {(1)}}, ser. Lecture Notes in Computer Science, vol. 12078.\hskip 1em plus
  0.5em minus 0.4em\relax Springer, 2020, pp. 324--345.

\bibitem{DBLP:conf/tacas/WimmerM20}
S.~Wimmer and J.~von Mutius, ``{Verified Certification of Reachability Checking
  for Timed Automata},'' in \emph{{TACAS} {(1)}}, ser. Lecture Notes in
  Computer Science, vol. 12078.\hskip 1em plus 0.5em minus 0.4em\relax
  Springer, 2020, pp. 425--443.

\bibitem{DBLP:conf/icalp/EtessamiY05}
K.~Etessami and M.~Yannakakis, ``Recursive {M}arkov decision processes and
  recursive stochastic games,'' in \emph{{ICALP}}, ser. Lecture Notes in
  Computer Science, vol. 3580.\hskip 1em plus 0.5em minus 0.4em\relax Springer,
  2005, pp. 891--903.

\bibitem{DBLP:journals/lmcs/KobayashiLG20}
N.~Kobayashi, U.~{Dal Lago}, and C.~Grellois, ``{On the Termination Problem for
  Probabilistic Higher-Order Recursive Programs},'' \emph{Log. Methods Comput.
  Sci.}, vol.~16, no.~4, 2020.

\bibitem{DBLP:journals/pacmpl/ChistikovMS22}
D.~Chistikov, R.~Majumdar, and P.~Schepper, ``Subcubic certificates for {CFL}
  reachability,'' \emph{Proc. {ACM} Program. Lang.}, vol.~6, no. {POPL}, pp.
  1--29, 2022.

\bibitem{DBLP:series/mcs/McIverM05}
A.~McIver and C.~Morgan, \emph{Abstraction, Refinement and Proof for
  Probabilistic Systems}, ser. Monographs in Computer Science.\hskip 1em plus
  0.5em minus 0.4em\relax Springer, 2005.

\bibitem{kulkarni1995modeling}
V.~G. Kulkarni, \emph{Modeling and analysis of stochastic systems}.\hskip 1em
  plus 0.5em minus 0.4em\relax Chapman \& Hall, Ltd., 1995.

\bibitem{dal2020probabilistic}
U.~Dal~Lago, ``On probabilistic $\lambda$-calculi,'' \emph{Foundations of
  Probabilistic Programming}, pp. 121--144, 2020.

\bibitem{arxiv}
\BIBentryALTinterwordspacing
T.~Winkler and J.-P. Katoen, ``On certificates, expected runtimes, and
  termination in probabilistic pushdown automata,'' 2023. [Online]. Available:
  \url{https://arxiv.org/abs/2304.09997}
\BIBentrySTDinterwordspacing

\bibitem{nonnegmats}
U.~G. Rothblum, ``Nonnegative matrices and stochastic matrices,'' in
  \emph{Handbook of {L}inear {A}lgebra}, L.~Hogben, Ed.\hskip 1em plus 0.5em
  minus 0.4em\relax CRC press, 2006.

\bibitem{wilf2005generatingfunctionology}
H.~S. Wilf, \emph{generatingfunctionology}.\hskip 1em plus 0.5em minus
  0.4em\relax CRC press, 2005.

\bibitem{icard2020calibrating}
T.~F. Icard, ``Calibrating generative models: The probabilistic
  chomsky--sch{\"u}tzenberger hierarchy,'' \emph{Journal of Mathematical
  Psychology}, vol.~95, 2020.

\end{thebibliography}
%
% <OR> manually copy in the resultant .bbl file
% set second argument of \begin to the number of references
% (used to reserve space for the reference number labels box)
%\begin{thebibliography}{1}
%
%\bibitem{IEEEhowto:kopka}
%H.~Kopka and P.~W. Daly, \emph{A Guide to \LaTeX}, 3rd~ed.\hskip 1em plus
%  0.5em minus 0.4em\relax Harlow, England: Addison-Wesley, 1999.
%
%\end{thebibliography}

\iftoggle{arxiv}{
    \onecolumn
    \appendix
    \subsection{Proof of \Cref{thm:terminationRelations}}
\label{app:terminationRelations}
\terminationRelations*
\begin{proof}
    Suppose that the DTMC is $\mc = (\mcstates, \mcmat)$ and the target is $F \subseteq \mcstates$.
    The only non-trivial implications are the following ones:
    \begin{itemize}
        \item PAST $\implies$ AST: Shown in main text.
        \item PAST $\implies$ cPAST. Since PAST $\implies$ AST we know that the vector of termination probabilities is $\vec\mu^0 = \vec 1$. Thus by \eqref{eq:1stmoments} and \eqref{eq:lesert}, $\ert{s}{{\mid}F} = \mathsf{E}[s{\mid}F]$ for all $s \in \mcstates$.
        \item AST $\land$ cPAST $\implies$ PAST: Again, since we have AST we know that the vector of termination probabilities is $\vec\mu^0 = \vec 1$, and so by \eqref{eq:1stmoments} and \eqref{eq:lesert}, $\ert{s}{{\mid}F} = \mathsf{E}[s{\mid}F]$ for all $s \in \mcstates$.
        But since we have cPAST as well, the termination moments $\mathsf{E}[s{\mid}F]$ are all finite. Therefore the expected runtimes $\ert{s}{{\mid}F}$ are all finite as well.
    \end{itemize}
    For the non-implications, consider the following family of DTMC $\mc_p = (\mcstates,\mcmat)$ with $p \in (0,1)$:
    $\mcstates = \mathbb{N}$, $F = \{0\}$, and for all $i\geq 1$, $\mcmat_{i,i+1} = (1-p)$ and $\mcmat_{i, i-1} = p$.
    \begin{itemize}
        \item $\mc_{1/2}$ is AST but neither PAST nor cPAST.
        \item $\mc_{1/3}$ is cPAST but not AST (and thus also not PAST).
    \end{itemize}
\end{proof}

\subsection{Proof of variant of \Cref{thm:taylor}}
\label{app:taylor}
We show that for a PPS $\sys f$ and $\vec u, \vec v \geq \vec 0$ such that $\vec u - \vec v \geq 0$,
\begin{align}
\sys f(\vec u) - \jacat{\sys f}{\vec u} \vec v
~\leq~
\sys f(\vec u - \vec v)
~\leq~
\sys f(\vec u) - \jacat{\sys f}{\vec u - \vec v} \vec v
~.
\end{align}

The following proof is a straightforward adaption of the proof of \cite[Lemma 2.3]{DBLP:journals/siamcomp/EsparzaKL10}.
It suffices to show the claim for a multivariate polynomial $\sys f = f$ with non-negative coefficients.
Let $g(t) = f(\vec u - t \vec v)$.
We have
\begin{align}
    f(\vec u - \vec v)
    ~=~
    g(1)
    ~=~
    g(0) + \int_0^1 g'(s) ds
    ~=~
    f(\vec u) + \int_0^1 -f'(\vec u - s \vec v) \vec v ds
\end{align}
For all $s \in [0,1]$, as $f'$ has non-negative cofficients and $\vec u - \vec v \geq \vec 0$, it holds that
\begin{align}
    f'(\vec u) \geq f'(\vec u -s \vec v) \geq f'(\vec u - \vec v)
\end{align}
and thus
\begin{align}
    -f'(\vec u) \leq -f'(\vec u -s \vec v) \leq -f'(\vec u - \vec v) ~.
\end{align}
The claim follows.

\subsection{Proof of \Cref{thm:momenteqs}}
\label{app:momenteqs}
\momenteqs*
\begin{proof}
    We give a straightforward proof using an equation system from~\cite[Theorem 3.1]{DBLP:conf/lics/EsparzaKM05} (we use the formulation given in \cite[Section 5.1]{DBLP:journals/fmsd/BrazdilEKK13}).
    In these works, it was shown that the conditional expected runtimes $\frac{\etriple p Z q}{\triple p Z q}$, for $pZq \in \pdastates{\times}\abstack{\times}\pdastates$ such that $\triple p Z q > 0$, satisfy the equations
    \begin{align}
        \label{eq:esparzaconditionaleqs}
        \frac{\etriple p Z q}{\triple p Z q}
        ~=~
        1+
        \frac{1}{\triple p Z q}
            \left(
                \sum_{\trans{pZ}{a}{rY}} a \cdot \triple r Y q \cdot \frac{\etriple r Y q}{\triple r Y q}
                +
                \sum_{\trans{pZ}{a}{rYX}} a \cdot \sum_{t \in \pdastates} \triple r Y t \cdot \triple t X q \cdot 
                    \left(
                        \frac{\etriple r Y t}{\triple r Y t} + \frac{\etriple t X q}{\triple t X q}
                    \right)
            \right)
            ~.
    \end{align}
    Moreover, the conditional expected runtimes constitute exactly the least solution in $\exnonnegreals$ of the above equations when interpreting the fractions as variable symbols.
    Multiplying \eqref{eq:esparzaconditionaleqs} by $\triple p Z q$ (and reinterpreting the symbols $\etriple \cdot\cdot\cdot$ as variables) we obtain \eqref{eq:esystem}.
    Note however, that in \eqref{eq:esystem} we also have equations for \emph{all} $pZq \in \pdastates{\times}\abstack{\times}\pdastates$, not just for the ones with $\triple p Z q > 0$.
\end{proof}

\subsection{Derivation of \eqref{eq:ertsys} from \cite{DBLP:conf/lics/EsparzaKM05} in the AST case}
\label{app:ertsysforast}

Let $\pda = \pdainit$ be a pPDA.
We have already explained in \Cref{app:momenteqs} how to derive the equation system for the 1st termination moments $\etriple p Z q, pZq \in \pdastates\times\abstack\times\pdastates$, from the results of~\cite{DBLP:conf/lics/EsparzaKM05}:

\begin{align}
    \label{eq:momenteqsrestated}
    \etriple{p}{Z}{q}
    ~=~
    \triple{p}{Z}{q}
    +
    \sum_{\substack{\trans{pZ}{a}{rY}}} a \cdot \etriple{r}{Y}{q} 
    +
    \sum_{\substack{\trans{pZ}{a}{rYX} \\ t \in \pdastates}} a {\cdot} \big(  \etriple{r}{Y}{t} {\cdot} \triple{t}{X}{q} + \triple{r}{Y}{t} {\cdot} \etriple{t}{X}{q} \big)
\end{align}
\emph{Now assume that $\pda$ is AST.}
Then for all $pZ \in \pdastates\times\abstack$, we have $\ert{p}{Z} = \sum_{q \in \pdastates} \etriple{p}{Z}{q}$, and $\sum_{q \in \pdastates} \triple p Z q =1$.
Summing over all $q \in \pdastates$ in \eqref{eq:momenteqsrestated} yields
\begin{align}
    \ert p Z
    ~=~
    1
    +
    \sum_{\substack{\trans{pZ}{a}{rY}}} a \cdot \ert r Y
    +
    \sum_{\substack{\trans{pZ}{a}{rYX} \\ t \in \pdastates}} a {\cdot} \big(  \ert r Y + \triple{r}{Y}{t} {\cdot} \ert t X \big)
\end{align}
which is the same as \eqref{eq:ertsys}.
However, we stress that \eqref{eq:ertsys} also holds in the non-AST case, which does not easily follow from~\cite{DBLP:conf/lics/EsparzaKM05}.

\subsection{Proof of \Cref{thm:sidelengths}}
\label{app:sidelengths}
\sidelengths*
\begin{proof}
    % local macros for proof
    \renewcommand{\d}{d}
    \renewcommand{\sys}{\syscl{f}}
    \renewcommand{\v}{\vec v}
    \renewcommand{\a}{\vec a}
    \newcommand{\e}{e}
    \renewcommand{\dim}{D}
    \newcommand{\C}{C}
    \renewcommand{\c}{c}
    Let $\v \allgreater \vec 0$ be the vector with $\norm{\v}_1 = 1$ from \Cref{thm:specradbound} that satisfies
    \begin{align}
    \label{eq:csize0}
    \sys'(\lfp\sys) \v
    ~\leq~
    (1- \C^{-1}) \v 
    ~.
    \end{align}
    For the sake of brevity we write $\c := 1-\C^{-1}$ in this proof.
    Suppose further that $\d$ and $\a$ satisfy the preconditions \eqref{eq:dbound} and \eqref{eq:ebound}.
    Taylor expansion (\Cref{thm:taylor}) yields
    \begin{align}
    \label{eq:csize1}
    \syscl{f}(\lfp\sys + \d(\vec \v + \vec \a)) &\leq \lfp\sys + \sys'(\lfp\sys + \d(\v + \a)) \d(\v + \a)
    ~.
    \end{align}
    For the RHS of \eqref{eq:csize1} to be at most $\lfp\sys + \d(\vec \v + \vec \a)$ it suffices that 
    \begin{align}
    \label{eq:csize2}
    \sys'(\lfp\sys + \d(\v + \a)) (\v + \a)
    ~\leq~
    \v + \a
    ~.
    \end{align}
    Suppose that $\syscl f$ is indexed by $J \subseteq \pdastates\times\abstack\times\pdastates$, $|J| =: D$.
    For $i \in J$, let $\sys'_i(\vec x)$ be the $i$th column of the Jacobi matrix $\sys'(\vec x)$.
    We now view the $\vec x = \sys'_i(\vec x)$ as a collection of individual PPS.
    %    Note that $\sum_{i=1}^\dim \sys_i'(\vec x_i)$
    Applying Taylor's theorem to $\sys'_i$, we obtain
    \begin{align}
    \sys'_i(\lfp\sys + \d(\vec \v + \vec \a))
    &\leq
    \sys'_i(\lfp\sys) + \sys''_i(\lfp\sys + \d(\v + \a)) \d(\v + \a) \label{eq:csize3} \\
    &\leq \sys'_i(\lfp\sys) + 2\d\v + 2\d\dim\e\vec 1 \label{eq:csize4}
    \end{align}
    In \eqref{eq:csize3}, $\sys''_i$ is the Jacobian of the PPS $\sys_i'$.
    For \eqref{eq:csize4}, we have used that $\sys''_i$ is actually a non-negative \emph{constant} matrix whose coefficients are bounded by $2$; this is because all its components are second (partial) derivatives of polynomials with total degree at most $2$ and coefficients in $[0,1]$.
    We can now express the LHS of \eqref{eq:csize2} as follows:
    \begin{align}
    &\sys'(\lfp\sys + \d(\v + \a)) (\v + \a)
    \label{eq:csize5} \\
    ~=~&
    \sum_{i=1}^D \sys'_i(\lfp\sys + \d(\vec \v + \vec \a)) (\v_i + \a_i) \\
    ~\leq~&
    \sum_{i=1}^D \left( \sys'_i(\lfp\sys) + 2\d\v + 2\d\dim\e\vec 1 \right) (\v_i + \a_i)
    \tag{by \eqref{eq:csize4}} \\
    ~\leq~&
    (\sys'(\lfp\sys) + 2\d + 2\d\dim\e)\v + \sys'(\lfp\sys)\a + 2\d\dim\e(1 + \dim\e)\vec 1 \tag{rearranging and using $\a_i \leq \e$, $\norm{\v}_1=1$} \\
    ~\leq~&
    (\sys'(\lfp\sys) + 2\d + 2\d\dim\e)\v + 2\dim\e\vec 1 + 2\d\dim\e(1 + \dim\e)\vec 1
    \tag{using that $\sys'(\lfp\sys) \leq 2$ coefficient-wise} \\
    ~=~&
    (\sys'(\lfp\sys) + 2\d + 2\d\dim\e)\v + 2\dim\e(1+ \d(1 + \dim\e))\vec 1
    \tag{rearranging}\\
    ~\leq~&
    (\c + 2\d + 2\d\dim\e)\v + 2\dim\e(1+ \d(1 + \dim\e))\vec 1
    \label{eq:csize6}
    \end{align}
    In \eqref{eq:csize6} we have used \eqref{eq:csize0}.
    Recall from \eqref{eq:csize2} that we need to ensure that expression \eqref{eq:csize5} is at most $\v+\a$.
    We show that \eqref{eq:csize5} is in fact at most $\v$.
    For the latter, it is sufficient that expression \eqref{eq:csize6} is at most $\v$, which is the case iff
    \begin{align}
    \label{eq:csize7}
    2\dim\e(1+ \d(1 + \dim\e))
    ~\leq ~
    (1 - \c - 2\d - 2\d\dim\e) \v_{\min}
    \end{align}
    where $\v_{\min} > 0$ is the minimal component of the vector $\v$.
    It can be checked\footnote{\url{https://www.wolframalpha.com/input?i=\%7B2*D*x*\%281\%2By*\%281\%2BD*x\%29\%29\%3C\%3D\%281-C-2y-2*D*y*x\%29*V\%2C0\%3CC\%3C1\%2Cx\%3E0\%2Cy\%3E0\%2C1\%3E\%3DV\%3E0\%2CD\%3E0\%7D}} that the inequality \eqref{eq:csize7} is satisfied if
    \begin{align}
    \label{eq:inequalitieswolfram}
    \e < \frac{\v_{\min}(1-\c)}{2\dim}
    \text{ and }
    \d \leq \frac{\v_{\min}(1-\c) - 2\dim\e}{2(\dim\e + 1)(\dim \e+ \v_{\min})}
    \end{align}
    To complete the proof reconsider the bounds \eqref{eq:dbound} and \eqref{eq:ebound}.
    Recall that $c = 1 - C^{-1}$, $\vec v_{\min} \geq \frac{\lfp\syscl f_{\min}}{C |\pdastates|^2|\abstack|}$ and observe that
    \begin{align}
    \frac{\lfp\syscl{f}_{\min}}{C^2} \frac{1}{4 |\pdastates|^4 |\abstack|^2}
    ~\leq~
    \frac{\v_{\min}}{C} \frac{1}{4 |\pdastates|^2 |\abstack|}
    ~<~
    \frac{\v_{\min}(1-\c)}{2\dim}
    \end{align}
    Furthermore,
    \begin{align}
    & \frac{\lfp\syscl f_{\min}}{C} \frac{1}{4|\pdastates|^2 |\abstack|(|\pdastates|^2 |\abstack| + 1) ^2} \\
    ~\leq~& \frac{\lfp\syscl f_{\min}}{C} \frac{\v_{\min}(1-\c) - 2\dim\e}{4|\pdastates|^2 |\abstack|(|\pdastates|^2 |\abstack| + 1) ^2} \\
    ~\leq~& \v_{\min} \frac{\v_{\min}(1-\c) - 2\dim\e}{4(|\pdastates|^2 |\abstack| + 1) ^2} \\
    ~\leq~& \frac{\v_{\min}(1-\c) - 2\dim\e}{2(\dim\e + 1)(\dim \e+ \v_{\min})} 
    \tag{as $\v_{\min} \leq 1$, $|\pdastates|^2|\abstack| \geq \dim$, and $\e \leq 1$}
    \end{align}
    so the inequalities \eqref{eq:inequalitieswolfram}, and hence \eqref{eq:csize7}, \eqref{eq:csize2}, and finally \eqref{eq:cubeproofgoal} are also satisfied.
\end{proof}

}{}

% that's all folks
\end{document}